\newcommand\figref{Figure~\ref}
\definecolor{green}{rgb}{0,0,0}
\definecolor{orange}{rgb}{0,0,0}
\definecolor{yellow}{rgb}{0,0,0}
\definecolor{blue}{rgb}{0,0,0}
\definecolor{revision}{rgb}{0,0,0}
\definecolor{revision2}{rgb}{0,0,0}
\newtheorem{theorem}{Theorem}
\newtheorem{lemma}[theorem]{Lemma}
\newtheorem{proposition}{Proposition}
\newtheorem{definition}{Definition}
\newtheorem{assumption}{Assumption}
\newcommand{\Rmnum}[1]{\expandafter\@slowromancap\romannumeral #1@}
\begin{document}
\title{Resilient Distributed Diffusion \textcolor{green}{in Networks with  Adversaries}}
%
%
\author{ \textcolor{yellow}{Jiani~Li, ~\IEEEmembership{Student Member,~IEEE}, Waseem Abbas, and Xenofon~Koutsoukos,~\IEEEmembership{Fellow,~IEEE}}
\thanks{J.~Li, W.~Abbas and X.~Koutsoukos are with the Department
of Electrical Engineering and Computer Science at Vanderbilt University, Nashville, TN, USA, (jiani.li@vanderbilt.edu,waseem.abbas@vanderbilt.edu, xenofon.koutsoukos@vanderbilt.edu)}
\thanks{ This work was supported in
part by the National Science Foundation under Grant CNS-1238959,
by the Air Force Research Laboratory under Grant FA 8750-14-2-
0180, and by the National Institute of Standards and Technology
under Grant 70NANB18H198. A subset of the results appeared in preliminary form in \cite{li2018resilient}.}
\markboth{Journal of \LaTeX\ Class Files,~Vol.~14, No.~8, August~2015}%
{Shell \MakeLowercase{\textit{et al.}}: Bare Demo of IEEEtran.cls for IEEE Journals}}
\maketitle
\begin{abstract}
\textcolor{revision}{In this paper, we study resilient distributed diffusion for multi-task estimation in the presence of adversaries where networked agents must estimate distinct but correlated states of interest by processing streaming data.} \textcolor{green}{We show that in general diffusion strategies are not resilient to malicious agents that do not adhere to the diffusion-based information processing rules.} In particular, by exploiting the adaptive weights used for diffusing information, we develop \textcolor{revision}{time-dependent} attack models that drive normal agents to converge to states selected by the attacker. 
\textcolor{revision}{We show that an attacker that has complete knowledge of the system can always drive its targeted agents to its desired estimates. Moreover, an attacker that does not have complete knowledge of the system including streaming data of targeted agents or the parameters they use in diffusion algorithms, can still be successful in deploying an attack by approximating the needed information.} The attack models can be used for both stationary and non-stationary state estimation. \textcolor{revision}{In addition, we present and analyze a resilient distributed diffusion algorithm that is resilient to any data falsification attack in which the number of compromised agents in the local neighborhood of a normal agent is bounded. The proposed algorithm guarantees that all normal agents converge to their true target states if appropriate parameters are selected.}
\textcolor{green}{
We also analyze trade-off between the resilience of distributed diffusion and its performance in terms of steady-state mean-square-deviation (MSD) from the correct estimates.} Finally, we evaluate the proposed attack models and resilient distributed diffusion algorithm using stationary and non-stationary multi-target localization.

\end{abstract}
\begin{IEEEkeywords}
Resilient diffusion, multi-task estimation, \textcolor{green}{network topology}, \textcolor{green}{adaptive systems}
\end{IEEEkeywords}
\IEEEpeerreviewmaketitle
%

\section {Introduction}
Diffusion Least-Mean Squares (DLMS) is a powerful algorithm for distributed state estimation \cite{journals/spm/SayedTCZT13}. It enables networked agents to interact with neighbors to process streaming data and diffuse information across the network to perform the estimation tasks. Compared to a centralized approach, distributed diffusion offers multiple advantages \textcolor{green}{including} robustness to drifts in the statistical properties of the data, scalability, \textcolor{green}{reliance} on local data, and fast response among others. Applications of distributed diffusion include spectrum sensing in cognitive networks \cite{7086338}, target localization \cite{targetLocalization}, distributed clustering \cite{6232902}, and biologically inspired designs for mobile networks \cite{mobileAdaptiveNetworks}. 

Diffusion strategies are known to be robust to node and link failures as well as to high noise levels 
\textcolor{revision2}{\cite{6197748, 5948418, 7547395,6726166}}.
\textcolor{revision}{However, it is possible that a single adversarial agent that does not update its estimates according to the diffusion-based information processing rules, for instance by retaining a fixed value throughout, can fail other agents to converge to their true estimates. 
Resilience of diffusion-based distributed algorithms in the presence of such fixed-value Byzantine attacks has been studied in \cite{journals/spm/SayedTCZT13,6232902}}. A general approach to counteract such attacks is to allow agents to fuse information collected from other agents in local neighborhoods \textcolor{revision}{using adaptive weights instead of fixed ones.
By doing so, only neighbors estimating a similar state will be assigned large weights so as to eliminate the influence of a fixed-value Byzantine adversary.}

In this paper, we consider distributed diffusion for multi-task estimation where networked agents must estimate distinct, but correlated states of interest by processing streaming data. 
\textcolor{revision}{Agents use adaptive weights when diffusing information with neighbors since adaptive weights have been successfully applied to  multi-task distributed estimation problems. However, we are interested in understanding if adaptive weights introduce vulnerabilities that can be exploited by Byzantine adversaries. 
The first problem we consider is to analyze if it is possible for an attacker to compromise a node, and make other nodes in its neighborhood converge to a state selected by the attacker. Then, we consider a network attack and determine a minimum set of nodes to compromise to make all nodes within the network converge to attacker's desired state.}

\textcolor{revision}{We assume a \emph{strong attack} model, that is, the attacker has complete knowledge of the network topology, streaming data of targeted agents and their parameters used in the diffusion algorithm. A strong attacker can know the topology by monitoring the network, streaming data of agents by stealthily compromising their sensors/controllers and establishing backdoor channels, and diffusion parameters by doing reverse engineering. We note that having complete knowledge is a strong assumption, however, it is common to assume a strong attacker with complete knowledge of the system to examine the resilience of distributed networks \cite{DBLP:journals/jsac/LeBlancZKS13, mitra2016secure, DBLP:journals/pomacs/ChenSX17, DBLP:conf/nips/BlanchardMGS17, DBLP:conf/nips/LiWSV16}. In addition to this strong attack model, we also consider a \emph{weak attack} model in which the attacker has no knowledge of streaming data of targeted agents or their parameters. We show that such an attacker can also be successful in preventing normal agents from converging to true estimates by approximating their states.} 

\textcolor{revision}{As a result, we show that DLMS, which was considered to be resilient against Byzantine agents by itself (\cite{journals/spm/SayedTCZT13, 6232902,5948418}), is in fact, not resilient. A Byzantine agent sharing incorrect estimates whose values are not fixed and change over time (time-dependent Byzantine attack) can manipulate the normal agents to converge to incorrect estimates. } 
\textcolor{revision}{
On the one hand, adaptive weights improve the resilience of diffusion algorithms to fixed-value Byzantine attacks, but on the other hand, introduce vulnerabilities that can be exploited by time-dependent attacks.  We analyze this issue in detail and propose a resilient diffusion algorithm that ensures that normal agents converge to true final estimates} \textcolor{revision2}{in the presence of any data falsification attack.}
The main contributions of the paper are summarized below:

\noindent $\bullet$ By exploiting the adaptive weights, we develop attack models that drive normal agents to converge to states selected by an attacker. The attack models can be used to deceive a specific node or the entire network and are applicable to both stationary and non-stationary state estimation.
\textcolor{revision}{
Although the attack models are based on a strong knowledge of the system, we also show that the attack can succeed without such knowledge}.

\noindent $\bullet$ \textcolor{revision}{We propose a resilient distributed diffusion algorithm parameterized by a positive integer $F$. We show that if there are at most $F$ compromised agents in the neighborhood of a normal agent}, then the algorithm guarantees that normal agents converge to their actual \textcolor{revision2}{goal} states under any data falsification attack. If the parameter $F$ selected by the normal agents is large,  the resilient distributed diffusion algorithm degenerates to non-cooperative estimation. Thus, we also analyze trade-off between the resilience of distributed diffusion and its performance degradation in terms of \textcolor{revision2}{the steady-state} MSD.


\noindent $\bullet$ We evaluate the proposed attack models for \textcolor{revision}{both strong and weak attacks} and the resilient \textcolor{revision2}{distributed diffusion} algorithm using both stationary and non-stationary multi-target localization. The simulation results are consistent with our theoretical analysis and show that the approach provides resilience to attacks while incurring performance degradation which depends on the assumption about the number of compromised agents. 

The rest of the paper is organized as follows: Section \ref{preliminaries} briefly introduces distributed diffusion. Section \ref{sec:problem} presents the attack and resilient distributed diffusion problems. \textcolor{revision}{Sections \ref{the section of attack model} and \ref{the section of network attack model} discuss single node attack and network attack models respectively. Section \ref{sec:resilient_diffusion} presents and analyzes the resilient distributed diffusion algorithm. Section \ref{sec:evaluation} provides simulation results evaluating our approach\textcolor{revision2}{es with}  multi-target localization. Section \ref{sec:weak attack} discusses and evaluates the attack model that does not require complete knowledge of the system. Section \ref{sec:related_work} gives a brief overview of the related work and Section \ref{sec:con} concludes the paper.}

\section{Preliminaries}\label{preliminaries}
We use normal and boldface fonts to denote deterministic and random variables respectively.
The superscript $(\cdot)^{*}$ denotes complex conjugation for scalars and complex-conjugate transposition for matrices, $\mathbb{E}\{\cdot\}$ denotes expectation,  and $\|\cdot\|$ denotes the Euclidean norm of a vector.

Consider a network of $N$ (static) agents\footnote{We use the terms agent and node interchangeably.}, \textcolor{green}{in which an undirected edge (or a link) between two agents indicates that they share information and are neighbors of each other. The neighborhood of an agent $k$, denoted by $\mathcal{N}_k$ is the set of neighbors of $k$, including the agent $k$ itself.} At each iteration $i$, agent $k$ has access to a scalar measurement $\bm{d}_{k}(i)$ and a regression vector $\bm{u}_{k,i}$ of \textcolor{yellow}{size $M$} with zero-mean and uniform covariance matrix $R_{u, k} \triangleq \mathbb{E}\{\bm{u}_{k,i}^* \bm{u}_{k,i}\} > 0$, which are related via a linear model of the following form:
\begin{equation*}\label{eq:1}
\bm{d}_{k}(i) = \bm{u}_{k,i} w_k^0 + \bm{v}_{k}(i).
\end{equation*}
where $ \bm{v}_{k}(i)$ represents a zero-mean i.i.d. additive noise 
with variance $\sigma^2_{v,k}$ and
$w_k^{0}$ denotes the unknown $M\times 1$ state vector of agent $k$. 

The objective of each agent is to estimate $w_k^{0}$ from (streaming) data
$\{\bm{d}_{k}(i),\bm{u}_{k,i}\} $ $(k=1,2,...,N, i \geq 0)$. 
The \textcolor{revision2}{objective state} can be static or dynamic and we represent \textcolor{revision2}{it} as $w_{k}^0$ or $\bm{w}^0_{k,i}$ respectively. 
For simplicity, we use $w_k^0$ to denote the objective state in both the static and dynamic cases. 

The state $w_k^0$ can be computed as the the unique minimizer of the following cost function:
\begin{equation}\label{eq: cost function}
J_k(w) \triangleq \mathbb{E} \{ \|\bm{d}_{k}(i)- \bm{u}_{k,i}w\|^2\}.
\end{equation}
An elegant adaptive solution for determining $w_k^0$ is the least-mean-squares (LMS) filter
\cite{journals/spm/SayedTCZT13}, 
where each agent $k$ computes successive estimators of $w_k^0$ without cooperation 
(noncooperative LMS) as follows:
\begin{equation*}
\bm{w}_{k,i} = \bm{w}_{k,i-1} + \mu_k \bm{u}_{k,i}^*[\bm{d}_{k}(i)-\bm{u}_{k,i}\bm{w}_{k,i-1}],
\end{equation*}
where $\mu_k>0$ is the step size (can be identical or distinct across agents).

Compared to noncooperative LMS, diffusion strategies introduce an aggregation step that incorporates \textcolor{green}{information gathered from the neighboring agents into the adaptation mechanism.} One powerful diffusion scheme is adapt-then-combine (ATC) \cite{journals/spm/SayedTCZT13} which optimizes the solution in a distributed and adaptive way using the following update:
\begin{equation}
\label{eq:adapt}
{\bm{\psi}}_{k,i} = \bm{w}_{k,i-1} + \mu_k \bm{u}_{k,i}^*[\bm{d}_{k}(i)-\bm{u}_{k,i}\bm{w}_{k,i-1}] \hspace{0.073in} \text{(adaptation)}
\end{equation}
\begin{equation}
\label{eq:combine}
 \bm{w}_{k,i} = \sum_{l \in \mathcal{N}_k} a_{lk}(i) \bm{\psi}_{l,i}\;, \hspace{1in} \text{(combination)}
\end{equation}
where $a_{lk}(i)$ represents the weight assigned to agent $l$ from agent $k$  
that is used to scale the data it receives from $l$, 
and the weights satisfy the following constraints:
\begin{equation}\label{eq: weight constraints}
 a_{lk}(i)\geq0,  \qquad \sum_{l \in \mathcal{N}_k} a_{lk}(i) = 1, \qquad a_{lk}(i)=0 \text{ if } l\not\in \mathcal{N}_k.
\end{equation}
\textcolor{revision}{Here the intermediate state ${\bm{\psi}}_{k,i}$ (obtained by the adaptation step) is shared among neighboring agents and a combination of neighbors' intermediate states contribute to the current estimate $ \bm{w}_{k,i}$ of agent $k$.}

In the case where agents estimate a common state $w^0$ (i.e., $w_k^0$ is same for every $k$), several \textcolor{revision}{fixed} combination rules can be adopted such as Laplacian, Metropolis, averaging, and maximum-degree \cite{DBLP:journals/corr/abs-1205-4220}. In the case of multiple tasks, agents are pursuing distinct but correlated objectives $w_k^0$. In this case, the combination rules mentioned above are not applicable because they simply combine the estimation of all neighbors without distinguishing if the neighbors are pursuing the same objective. An agent estimating a different state will prevent its neighbors from estimating the state of interest.

Diffusion LMS (DLMS) has been extended for multi-task networks in \cite{6232902}
using the following adaptive weights:
\begin{equation}\label{eq: adaptive relative-variance combination rule}
a_{lk}(i)=
\begin{cases}
\frac {\gamma_{lk}^{-2}(i)} {\sum_{m \in \mathcal{N}_k}\gamma_{mk}^{-2}(i) }, & l \in \mathcal{N}_k\\
0, & \text{otherwise}.
\end{cases}
\end{equation}
where $\gamma_{lk}^2(i) = (1-\nu_k)\gamma_{lk}^2(i-1)+\nu_k \| \bm{\psi}_{l,i}-\bm{w}_{k,i-1}\| ^2$ and $\nu_k$ is a positive step size known as the forgetting factor.
This update enables agents to continuously learn about the neighbors agents should cooperate with. During the estimation task, agents pursuing different objectives will continuously assign smaller weights to each other according to \eqref{eq: adaptive relative-variance combination rule}. Once the weights become negligible, communication links between agents do not contribute to the estimation task. Consequently, as the estimation proceeds, only the agents estimating the same state cooperate.

DLMS with adaptive weights (DLMSAW) outperforms the noncooperative LMS as measured by the 
steady-state mean-square-deviation performance (MSD) \cite{journals/spm/SayedTCZT13}. 
For sufficiently small step-sizes, the network performance of noncooperative LMS is 
defined as the average \textcolor{revision2}{steady-state} MSD level \textcolor{revision2}{among agents}:  
\begin{equation*}
\text{MSD}_{\text{ncop}} \triangleq \lim_{i \rightarrow \infty} \frac{1}{N} \sum_{k=1}^N \mathbb{E} \| \tilde{\bm{w}}_{k,i}\|^2 \approx \frac{\mu M}{2} \cdot (\frac{1}{N} \sum_{k=1}^N \sigma_{v,k}^2),
\end{equation*}
where $\tilde{\bm{w}}_{k,i} \triangleq w_k^0 - \bm{w}_{k,i}$  \textcolor{yellow}{and $M$ is the size of regression vector $\bm{u}_{k,i}$}.
The network MSD performance of the diffusion network (as well as the MSD performance of a normal agent in the diffusion network) can be approximated by 
\begin{equation}\label{eq:diffusion MSD}
\text{MSD}_{\text{k}} \approx \text{MSD}_{\text{diff}} \approx \frac{\mu M}{2} \cdot \frac{1}{N} \cdot (\frac{1}{N} \sum_{k=1}^N \sigma_{v,k}^2).
\end{equation}
In \cite{journals/spm/SayedTCZT13}, it is shown that $\text{MSD}_{\text{diff}} = \frac{1}{N} \text{MSD}_{\text{ncop}}$, which demonstrates an $N$-fold improvement of MSD performance.

\section{Problem formulation}\label{sec:problem}
Diffusion strategies have been shown to be robust to node and link failures as well as to nodes or links with high noise levels 
\textcolor{revision2}{\cite{5948418, 7547395}}. 
In this paper, we are interested in understanding if the adaptive weights 
introduce vulnerabilities in the case a subset of nodes within the network is compromised by a cyber attack. \textcolor{green}{In this direction, first we analyze if it is possible for an attacker who has compromised a node $k$ to make nodes in $\mathcal{N}_k$ converge to a state selected by the attacker. Second,} we consider a network attack model \textcolor{green}{in which we determine a} minimum set of nodes to compromise to make the entire network converge to states selected by the attacker. Finally, we formulate the resilient distributed diffusion problem that guarantees that normal agents are not driven to the attackers' desired states, and continue the normal operation \textcolor{revision2}{with the cooperation among neighbors} possibly with a degraded performance.

\subsection{Single Node Attack Model}
We consider false data injection attacks \textcolor{revision}{deployed by a \emph{strong} attacker that has complete knowledge of the system. In particular, we assume the following for the strong attack.}
\textcolor{revision}{
\begin{assumption}
A strong attacker knows the topology of the network, the streaming data of targeted agents and the diffusion algorithm parameters they use, such as $\mu_k$.
\end{assumption}
}

\textcolor{revision}{To examine the resilience of distributed networks, it is common to assume a strong attack with full knowledge of the system, for instance, Byzantine attackers having a complete knowledge of the system are considered in \cite{DBLP:journals/jsac/LeBlancZKS13, mitra2016secure, DBLP:journals/pomacs/ChenSX17, DBLP:conf/nips/BlanchardMGS17, DBLP:conf/nips/LiWSV16}. However, we also consider a weak attack model in Section \Rmnum{8} in which an attacker has no knowledge of agents' parameters and has no access to their streaming data.}
Compromised nodes are assumed to be Byzantine in the sense that they can send arbitrary messages to their neighbors, and can also send different messages to different neighbors.

The objective of the attacker is to drive the normal nodes to converge to
a specific state. 
We assume a compromised node $a$ wants agent $k$ to converge to state 
\begin{equation*}
w_{k,i}^a=
\begin{cases}
w_k^a, &\text{for stationary estimation}\\
w_k^a + \theta_{k,i}^a, &\text{for non-stationary estimation}.
\end{cases}
\end{equation*}
This is equivalent to minimizing the objective function of the following form:

\begin{equation}\label{eq: objective function}
\min_{\bm{w}_{k,i}} \lim_{i \rightarrow \infty} \mathcal{G}(\bm{w}_{k,i}), \qquad \bm{w}^a_{k,i} \in \textcolor{green}{\mathcal{D}_{w,k}},
\end{equation}
where
\begin{equation*}
    \mathcal{G}(\bm{w}_{k,i}) = \| \bm{w}_{k,i} - w^a_{k,i} \|^2,
\end{equation*}
and \textcolor{green}{$\mathcal{D}_{w,k}$} is the domain of state $\bm{w}_{k,i}$.

Another objective of the attacker can be to delay the convergence time of the normal agents. \textcolor{green}{We observe that} if the compromised node can make its neighbors to converge to a selected state, it can keep changing this state before normal neighbors converge. By doing so, normal neighbors \textcolor{green}{of the attacked node} will never converge to a fixed state. Thus, the attacker can achieve its goal to prolong the convergence time of normal neighbors. For that reason, we focus on the attack model based on objective \eqref{eq: objective function}. 

\subsection{Network Attack Model}
If the attacker has a specific target node that she wants to attack and make it converge to a specific state, the attacker can compromise any neighbor of this node to achieve the objective. In the case the attacker wants to compromise the entire network and drive the multi-task estimation to specific states, she needs to \textcolor{green}{determine a minimum set of nodes to compromise such that every normal node in the network can be driven to an incorrect estimate. Computing such a minimum set directly depends on the underlying structure, and can be formulated as \emph{minimum dominating set problem} in graphs as discussed in Section \ref{the section of network attack model}.} 

%

\subsection{Resilient Distributed Diffusion}\label{attack detection}
Distributed diffusion is said to be \emph{resilient} if
\begin{equation}\label{eq: resilient distributed diffusion}
\lim_{i \rightarrow \infty} \bm{w}_{k,i} = w_k^0.
\end{equation}
for all normal agents $k$ in the network which ensures that all the noncompromised nodes converge to the true state. 

\textcolor{revision2}{We note that if agents do not cooperate or interact with each other at all, such as in the non-cooperative diffusion, then adversary cannot impact agents' estimates. So, non-cooperative diffusion is resilient in this sense. At the same time, agents are also unable to utilize the information from other agents aiming to achieve the similar objective. Consequently, the steady-state MSD as result of non-cooperative diffusion can be quite large. Here, our objective is to design a resilient diffusion algorithm that guarantees convergence to the true estimates in the presence of adversary and also results in smaller MSD (as compared to the non-cooperative diffusion) by leveraging cooperation and information exchange between agents.}
We assume that in the neighborhood of a normal node, there could be at most $F$ compromised nodes~\cite{DBLP:journals/jsac/LeBlancZKS13}. 
Assuming bounds on the number of adversaries is typical for \textcolor{green}{the resiliency analysis of distributed algorithms,} 
\textcolor{revision2}{and our resilient algorithm is also based on this assumption.}

\section{Single Node Attack Design}\label{the section of attack model}

\textcolor{green}{We design a \textcolor{revision}{strong} attack in which the attacker drives the targeted node $k$ to converge to a wrong estimate $w_{k,i}^a$ by making $k $ follow a desired trajectory defined using stochastic gradient descent. The attacker's goal is to ensure that $k$, which implements adaptive-then-combine LMS, actually updates its estimates according to the stochastic gradient descent defined by the attacker. Thus, the main task is to determine conditions under which adaptive-then-combine LMS of $k$ 
guarantees the convergence of $k$'s estimate to $w_{k,i}^a$. }

\textcolor{green}{
We summarize the conditions below and then analyze them in detail in the rest of the section.}
Firstly, an attacker  needs  to  know  the  estimate  of node $k$ in  the  previous  iteration.  \textit{Lemma  1} shows  that  an attacker  can obtain the estimate given node $k$'s  streaming  data and parameters.
Secondly, \textcolor{green}{Node $k$} should  not  assign  any  weight  to  the  messages from its  non-attacked  neighbors. 
\textit{Lemma  2} \textcolor{green}{ensures this objective}.
Finally, the magnitude of the stochastic gradient descent update should be sufficiently small.
Details are given in \textit{Proposition  1}.

\textcolor{revision}{\subsection{Gradient-based Attack Design}}
\textcolor{revision}{Here, we present an attack based on gradient-descent updates, and in the next subsection, provide conditions under which the attack is successful.}
\textcolor{green}{For stationary estimation, 
the following gradient-descent update with a sufficient small step size $\mu_k^a$ at the $i^{th}$ iteration is sufficient to achieve the objective in \eqref{eq: objective function}}:
\begin{equation}\label{eq: attacker's goal}
\begin{aligned}
    \bm{w}_{k,i} &= \bm{w}_{k,i-1} - \mu_k^a \nabla_{\bm{w}} \mathcal{G}(\bm{w}_{k,i-1})\\
    &= \bm{w}_{k,i-1} - r_k^a (\bm{w}_{k,i-1} - w^a_{k,i}),
    \end{aligned}
\end{equation}
where $r_k^a = 2 \mu_k^a$ is a non-negative step size \textcolor{revision}{(that can also be time-varying).} For non-stationary estimation, the form is slightly different and it is described by\footnote{See proof of Proposition 1 in the Appendix.}
\begin{equation} \label{eq: attacker model}
    \bm{w}_{k,i} = \bm{w}_{k,i-1} - r_k^a (\bm{w}_{k,i-1} - x_i),
\end{equation}
where
\begin{equation*}
x_i=
\begin{cases}
w_k^a, &\text{for stationary estimation}\\
w_k^a + \theta_{k,i-1}^a + \frac{\Delta \theta_{k,i-1}^a}{r_{k,i}^a}, &\text{for non-stationary estimation}
\end{cases}
\end{equation*}
with $\Delta \theta_{k,i}^a = \theta_{k,i+1}^a - \theta_{k,i}^a$. 
And the diffusion estimate of $k$ is 
\begin{equation*}
\bm{w}_{k,i} = \sum_{l \in \mathcal{N}_k} {a}_{lk}(i) \bm{\psi}_{l,i}  = \sum_{l \in \mathcal{N}_k \backslash a} {a}_{lk}(i) \bm{\psi}_{l,i} + {a}_{ak}(i) \bm{\psi}_{a,i}.
\end{equation*}
It is sufficient to achieve the attack objective \eqref{eq: objective function} if the attacker could make the estimate of $k$ follow the gradient-descent trajectory, i.e.,
\begin{equation}\label{eq: assign state to desired}
        \sum_{l \in \mathcal{N}_k \backslash a} {a}_{lk}(i) \bm{\psi}_{l,i} + {a}_{ak}(i) \bm{\psi}_{a,i} = \bm{w}_{k,i-1} - r_k^a (\bm{w}_{k,i-1} - w^a_{k,i}).
\end{equation}
Since $\bm{\psi}_{l,i} =  \bm{w}_{l,i-1} + \mu_l \bm{u}_{l,i}^*[\bm{d}_{l}(i)-\bm{u}_{l,i}\bm{w}_{l,i-1}]$ is a random variable that is not controlled by the attacker, the attacker should eliminate the influence of $\bm{\psi}_{l,i}$ for $l \in \mathcal{N}_k, l \neq a$.
Sufficient conditions to hold \eqref{eq: assign state to desired}, and thus to achieve the attack objective are as follows:
\begin{equation}\label{eq:attack objective message}
\bm{\psi}_{a,i} = \bm{w}_{k,i-1} - r_k^a (\bm{w}_{k,i-1} - x_i).
\end{equation}
and
\begin{equation}\label{eq: weight condition}
    {a}_{lk}(i) \rightarrow 0,  \forall l \in \mathcal{N}_k, l \neq a, \textbf{ }\text{ and }
    a_{ak}(i) \rightarrow 1,
\end{equation}
That is, the attacker uses the exchanging message $\bm{\psi}_{k,i}$ as indicated in \eqref{eq:attack objective message} and the targeted node $k$ updates its estimate based only on $\bm{\psi}_{k,i}$.
$\bm{\psi}_{k,i}$ is computed given the knowledge of $\bm{w}_{k,i-1}$, that can be obtained by the attacker given \textit{Lemma 1}.

\begingroup
\makeatletter
\apptocmd{\lemma}{\unless\ifx\protect\@unexpandable@protect\protect\footnote{The proofs can be found in the Appendix.}\fi}{}{}
\makeatother

\begin{lemma}\label{lemma: attacker can deduce w_k,i-1}
If a compromised node $a$ has a knowledge of node $k$'s streaming data $\{\bm{d}_k(i), \bm{u}_{k,i}\}$ and 
the parameter $\mu_k$, then it can compute 
$\bm{w}_{k,i-1}$. 
\end{lemma}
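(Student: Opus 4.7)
The plan is to recover $\bm{w}_{k,i-1}$ by algebraically inverting the single-step adaptation rule \eqref{eq:adapt}. Because $a$ is a (compromised) neighbor of $k$, it participates in the diffusion combination step and therefore receives the intermediate estimate $\bm{\psi}_{k,i}$ broadcast by $k$. Together with the hypothesized knowledge of $\bm{d}_k(i)$, $\bm{u}_{k,i}$, and $\mu_k$, this is all the information $a$ needs.

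First, I would rearrange \eqref{eq:adapt} by grouping the two occurrences of $\bm{w}_{k,i-1}$ to obtain the linear system
$$
(I - \mu_k\, \bm{u}_{k,i}^* \bm{u}_{k,i})\, \bm{w}_{k,i-1} \;=\; \bm{\psi}_{k,i} - \mu_k\, \bm{u}_{k,i}^* \bm{d}_k(i).
$$
Every entry on the right-hand side is known to $a$, and the coefficient matrix is built entirely from $\bm{u}_{k,i}$ and $\mu_k$, which are also assumed known. So the claim reduces to showing that this $M \times M$ system has a unique solution that $a$ can compute.

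The key technical observation is invertibility of $(I - \mu_k\, \bm{u}_{k,i}^* \bm{u}_{k,i})$. Since $\bm{u}_{k,i}^* \bm{u}_{k,i}$ is a rank-one matrix with unique nonzero eigenvalue $\|\bm{u}_{k,i}\|^2$, the matrix in question has eigenvalues $1$ (with multiplicity $M-1$) and $1-\mu_k \|\bm{u}_{k,i}\|^2$. Under the standard LMS step-size condition that guarantees mean-square stability of \eqref{eq:adapt}, the latter eigenvalue is strictly positive and the matrix is invertible; a closed-form inverse is immediate via the Sherman--Morrison identity, yielding
$$
\bm{w}_{k,i-1} \;=\; \Bigl(I + \tfrac{\mu_k\, \bm{u}_{k,i}^* \bm{u}_{k,i}}{1-\mu_k\|\bm{u}_{k,i}\|^2}\Bigr)\bigl(\bm{\psi}_{k,i} - \mu_k\, \bm{u}_{k,i}^* \bm{d}_k(i)\bigr).
$$

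The only mild obstacle is the degenerate case $\mu_k\|\bm{u}_{k,i}\|^2 = 1$, which I would handle by noting that this event is precluded by the very step-size bounds already imposed for DLMS convergence (so it occurs with probability zero for typical regressor distributions). This completes the construction; no probabilistic or asymptotic argument is needed, only an explicit one-step algebraic inversion performed by $a$ at each iteration.
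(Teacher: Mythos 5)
Your proposal is correct and follows essentially the same route as the paper: both rearrange the adaptation step \eqref{eq:adapt} to isolate $\bm{w}_{k,i-1}$ and solve the resulting linear system using the received $\bm{\psi}_{k,i}$ together with the known $\bm{d}_k(i)$, $\bm{u}_{k,i}$, and $\mu_k$. You are in fact more careful than the paper, which writes the solution as a scalar-style fraction with the $M\times M$ matrix $1-\mu_k\bm{u}_{k,i}^*\bm{u}_{k,i}$ in the denominator and never addresses invertibility, whereas your Sherman--Morrison inverse and the eigenvalue argument ruling out $\mu_k\|\bm{u}_{k,i}\|^2=1$ make the step rigorous.
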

Next, we see that \textcolor{revision2}{by} carefully designing $\bm{\psi}_{a,i}$ as explained in \textit{Lemma 2}, conditions in \eqref{eq: weight condition} are satisfied.
\endgroup

\begingroup
\makeatletter
\apptocmd{\lemma}{\unless\ifx\protect\@unexpandable@protect\protect\fi}{}{}
\makeatother

\begin{lemma}
\label{lem:lemma_2}
If the attacker sends the message $\bm{\psi}_{a,i}$ satisfying
$\| \bm{\psi}_{a,i}-\bm{w}_{k,i-1}\| \ll \| \bm{\psi}_{l,i}-\bm{w}_{k,i-1}\|$\textcolor{revision2}{,}
$\forall l \in \mathcal{N}_k, l \neq a, \forall i$, then \eqref{eq: weight condition} will be true.
\end{lemma}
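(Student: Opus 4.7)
The plan is to exploit the explicit structure of the adaptive combination rule in \eqref{eq: adaptive relative-variance combination rule} together with the recursive definition of $\gamma_{lk}^2(i)$. The weight $a_{lk}(i)$ is a ratio of $\gamma_{lk}^{-2}(i)$ to a sum of such inverse quantities, so dominance of one $\gamma_{lk}^{-2}(i)$ over the rest immediately pins down the limiting weights. Concretely, if I can show that the hypothesis $\|\bm{\psi}_{a,i}-\bm{w}_{k,i-1}\|\ll\|\bm{\psi}_{l,i}-\bm{w}_{k,i-1}\|$ propagates to $\gamma_{ak}^{2}(i)\ll\gamma_{lk}^{2}(i)$ for every $l\in\mathcal N_k\setminus\{a\}$, then $\gamma_{ak}^{-2}(i)$ dominates the denominator in \eqref{eq: adaptive relative-variance combination rule} and \eqref{eq: weight condition} follows.

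First, I would unroll the recursion
\[
\gamma_{lk}^{2}(i) \;=\; (1-\nu_k)\,\gamma_{lk}^{2}(i-1) \;+\; \nu_k \,\|\bm{\psi}_{l,i}-\bm{w}_{k,i-1}\|^{2}
\]
as a convex combination and write it as the exponentially weighted sum
\[
\gamma_{lk}^{2}(i) \;=\; \nu_k \sum_{j=0}^{i}(1-\nu_k)^{\,i-j}\,\|\bm{\psi}_{l,j}-\bm{w}_{k,j-1}\|^{2} \;+\;(1-\nu_k)^{\,i+1}\gamma_{lk}^{2}(-1),
\]
and similarly for $\gamma_{ak}^{2}(i)$. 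Because the initialization $\gamma_{lk}^2(-1)=0$ given earlier eliminates the transient term, the ratio $\gamma_{ak}^{2}(i)/\gamma_{lk}^{2}(i)$ becomes a ratio of the same exponential convex combinations of the squared innovations, so the pointwise hypothesis $\|\bm{\psi}_{a,j}-\bm{w}_{k,j-1}\|\ll\|\bm{\psi}_{l,j}-\bm{w}_{k,j-1}\|$ for every $j$ forces $\gamma_{ak}^{2}(i)\ll\gamma_{lk}^{2}(i)$ uniformly in $i$.

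Next, I would invert to get $\gamma_{ak}^{-2}(i)\gg\gamma_{lk}^{-2}(i)$ for all $l\neq a$, and substitute into \eqref{eq: adaptive relative-variance combination rule}. The denominator $\sum_{m\in\mathcal N_k}\gamma_{mk}^{-2}(i)$ is then driven by the single term $\gamma_{ak}^{-2}(i)$, giving
\[
a_{ak}(i) \;=\; \frac{\gamma_{ak}^{-2}(i)}{\gamma_{ak}^{-2}(i)+\sum_{l\neq a}\gamma_{lk}^{-2}(i)} \;\longrightarrow\; 1,
\qquad
a_{lk}(i) \;\longrightarrow\; 0 \ \text{for } l\in\mathcal N_k,\ l\neq a,
\]
which is exactly \eqref{eq: weight condition}.

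The main obstacle I anticipate is making the informal relation ``$\ll$'' precise enough that the inequalities on $\gamma$'s carry through uniformly in $i$ rather than only in the limit. In particular, one has to ensure that the transient caused by the non-zero initial $\bm{w}_{k,-1}$ and by the first few iterations, during which $\bm{\psi}_{a,i}$ may not yet be well-conditioned, does not spoil the dominance; the $\gamma_{lk}^2(-1)=0$ initialization specified in the original DLMSAW algorithm is what allows this transient to be absorbed cleanly. Beyond this bookkeeping step, the rest of the argument is an immediate algebraic consequence of the adaptive weight formula.
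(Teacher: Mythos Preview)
Your proposal is correct and follows essentially the same approach as the paper: unroll the recursion for $\gamma_{lk}^2(i)$ into an exponentially weighted sum of squared innovations, carry the $\ll$ hypothesis through to $\gamma_{ak}^2(i)\ll\gamma_{lk}^2(i)$, and read off \eqref{eq: weight condition} from \eqref{eq: adaptive relative-variance combination rule}. The only cosmetic difference is how the transient is dispatched---you invoke the zero initialization $\gamma_{lk}^2(-1)=0$, whereas the paper unrolls from the attack start time $i_a$ and lets $(1-\nu_k)^{n+1}\to 0$ absorb the pre-attack value.
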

\endgroup

\textcolor{revision}{\subsection{Sufficient Conditions and Convergence Analysis}}
\textcolor{revision}{Here, using results \textcolor{revision2}{from} the previous subsection, we present conditions that guarantee a successful attack. A direct consequence of \textit{Lemma \ref{lem:lemma_2}} is that we could replace the condition in \eqref{eq: weight condition} by $\| \bm{\psi}_{a,i}-\bm{w}_{k,i-1}\| \ll \| \bm{\psi}_{l,i}-\bm{w}_{k,i-1}\|, \forall l \in \mathcal{N}_k, l \neq a, \forall i$. At the same time, from \eqref{eq:attack objective message}, we get 
$$
\| \bm{\psi}_{a,i}-\bm{w}_{k,i-1}\| =  \|  r_k^a (\bm{w}_{k,i-1} - x_i) \|.
$$}
Therefore, a sufficient condition to achieve the attack objective can be rewritten as
\begin{equation}\label{eq:attack objective message new}
\begin{aligned}
&\bm{\psi}_{a,i} = \bm{w}_{k,i-1} - r_k^a (\bm{w}_{k,i-1} - x_i), \\
s.t. \ & \|  r_k^a (\bm{w}_{k,i-1} - x_i) \| \ll \| \bm{\psi}_{l,i}-\bm{w}_{k,i-1}\|.
\end{aligned}
\end{equation}
Thus, the attacker has to select \textcolor{green}{a sufficiently} small value of $r_k^a$ to make \eqref{eq:attack objective message new} true. Note that even though $r_k^a = 0$ is sufficient for \eqref{eq:attack objective message new}, it renders the gradient of \eqref{eq: attacker's goal} zero and as a result no progress is made towards convergence to $w^a_{k,i}$. 
Also note that to use \eqref{eq:attack objective message new}, it is assumed that the communication message $\bm{\psi}_{l,i}$ from every $l \in \mathcal{N}_k$ is known by the attacker, which can be achieved by \textcolor{green}{intercepting the message}. In practice, a sufficiently small value of $r_k^a$ guarantees that the condition holds. 
The attacker can select a small $r_k^a$ and observe if the attack succeeds; if not, decrease $r_k^a$ to find an appropriate value.
It is also worth noting that for a fixed value of  $r_{k}^a$, \eqref{eq:attack objective message new} may not hold for some iteration $i$ because of the randomness of variables. Yet we can always set $r_{k}^a = 0$ for such iterations $i$ (no progress at the current point). However, in practice, the attack succeeds by using a small fixed value of $r_{k}^a > 0$ since estimation is robust to infrequent small values of $\| \bm{\psi}_{l,i} - \bm{w}_{k,i-1}\|$ caused by randomness given the smoothing property of the \textcolor{revision2}{adaptive} weight.

Next, we argue that \eqref{eq:attack objective message new} is sufficient to achieve the attack objective.
We summarize the above discussion in \textit{Proposition 1} and include a detailed proof in Appendix A.

\begingroup
\makeatletter
\apptocmd{\proposition}{\unless\ifx\protect\@unexpandable@protect\protect\fi}{}{}
\makeatother

\begin{proposition}\label{proposition: constraint of r_{k,i}^a}
If $r_{k}^a > 0$ is selected such that 
$\forall l \in \mathcal{N}_k \cap l \neq a$,
$\forall i \geq i_a, $ $ \| r_{k}^a (\bm{w}_{k,i-1} - x_i)\| \ll \| \bm{\psi}_{l,i} - \bm{w}_{k,i-1}\|$,  
then the compromised node $a$ can realize the objective \eqref{eq: objective function} by using $\bm{\psi}_{a,i}$ described in \eqref{eq:attack objective message} as the communication message with $k$.
\end{proposition}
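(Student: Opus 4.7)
The plan is to thread Lemmas 1 and 2 into a contraction argument on the effective recursion that the attack induces at node $k$. First I would invoke \textit{Lemma 1} to confirm that the attacker, knowing $\{\bm{d}_k(i),\bm{u}_{k,i}\}$ and $\mu_k$, can reconstruct $\bm{w}_{k,i-1}$ exactly from the message $\bm{\psi}_{k,i}$ that $k$ broadcasts. Therefore the prescribed message $\bm{\psi}_{a,i}=\bm{w}_{k,i-1}-r_k^a(\bm{w}_{k,i-1}-x_i)$ in \eqref{eq:attack objective message} is actually computable by $a$.

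Next I would verify the hypothesis of \textit{Lemma 2}. By construction, $\|\bm{\psi}_{a,i}-\bm{w}_{k,i-1}\|=\|r_k^a(\bm{w}_{k,i-1}-x_i)\|$, which the proposition's hypothesis asserts to be much smaller than $\|\bm{\psi}_{l,i}-\bm{w}_{k,i-1}\|$ for every $l\in\mathcal{N}_k\setminus\{a\}$ and every $i\geq i_a$. Applying \textit{Lemma 2} then yields $a_{ak}(i)\to 1$ and $a_{lk}(i)\to 0$ for $l\neq a$, so that the combination step \eqref{eq:combine} reduces in the limit to
\begin{equation*}
\bm{w}_{k,i}\;\longrightarrow\;\bm{\psi}_{a,i}\;=\;\bm{w}_{k,i-1}-r_k^a(\bm{w}_{k,i-1}-x_i).
\end{equation*}

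With this effective recursion in hand, the convergence claim reduces to a standard contraction. In the stationary case, $x_i=w_k^a$ gives $\bm{w}_{k,i}-w_k^a=(1-r_k^a)(\bm{w}_{k,i-1}-w_k^a)$, which vanishes geometrically provided $0<r_k^a<2$. In the non-stationary case, substituting $x_i=w_k^a+\theta_{k,i-1}^a+\Delta\theta_{k,i-1}^a/r_k^a$ and defining the tracking error $\bm{e}_i\triangleq \bm{w}_{k,i}-(w_k^a+\theta_{k,i}^a)$, a short algebraic rearrangement collapses the recursion to $\bm{e}_i=(1-r_k^a)\bm{e}_{i-1}$, which again decays under the same contraction condition. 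In either case $\mathcal{G}(\bm{w}_{k,i})\to 0$, establishing the objective \eqref{eq: objective function}.

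The main obstacle is that \textit{Lemma 2} only delivers asymptotic limits for the adaptive weights, so for finite $i$ the combined estimate $\bm{w}_{k,i}$ equals $\bm{\psi}_{a,i}$ only up to a residual $\sum_{l\neq a} a_{lk}(i)(\bm{\psi}_{l,i}-\bm{\psi}_{a,i})$ that is itself random through the $\bm{\psi}_{l,i}$. I would handle this by carrying the residual as a vanishing perturbation inside the contraction: since the proof of \textit{Lemma 2} shows $\gamma_{lk}^{-2}(i)/\gamma_{ak}^{-2}(i)\to 0$ geometrically, the perturbation is summable and does not prevent mean-square convergence of $\bm{e}_i$ to zero. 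Occasional iterations on which the strict inequality $\|r_k^a(\bm{w}_{k,i-1}-x_i)\|\ll\|\bm{\psi}_{l,i}-\bm{w}_{k,i-1}\|$ happens to fail because of the randomness of the $\bm{\psi}_{l,i}$ are absorbed by the exponential smoothing built into $\gamma_{lk}^2(i)$, as noted just before the proposition's statement. This is the only point where anything beyond direct substitution into the ATC update is needed.
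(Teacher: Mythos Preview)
Your proposal is correct and follows the same overall strategy as the paper: invoke \textit{Lemma~2} so that the combination step collapses to $\bm{w}_{k,i}=\bm{\psi}_{a,i}$, and then analyze the resulting one-step contraction $\bm{w}_{k,i}=(1-r_k^a)\bm{w}_{k,i-1}+r_k^a x_i$.

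Where you differ is in the non-stationary case. The paper takes a constructive detour: it first runs the recursion with the naive choice $x_i=w_k^a+\theta_{k,i-1}^a$, finds (under the approximation $\Delta^2\rho_i/r_k^a\ll\Delta\theta_{k,i}^a$) that the limit is off by a term $\Delta\theta_{k,i}^a/r_k^a$, and then back-solves for the corrected $x_i$ that appears in \eqref{eq: attacker model}. You instead take $x_i$ as already given by \eqref{eq: attacker model}, set $\bm{e}_i=\bm{w}_{k,i}-(w_k^a+\theta_{k,i}^a)$, and verify directly that $\bm{e}_i=(1-r_k^a)\bm{e}_{i-1}$. Your route is shorter and exact---it needs no second-difference smallness assumption---whereas the paper's route has the advantage of explaining where the correction term $\Delta\theta_{k,i-1}^a/r_k^a$ comes from. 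You also make explicit the transient residual $\sum_{l\neq a}a_{lk}(i)(\bm{\psi}_{l,i}-\bm{\psi}_{a,i})$ and argue it is a summable perturbation of the contraction; the paper handles this informally by simply asserting the clean recursion holds ``from some point $i$''.
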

\endgroup

\textcolor{revision}{Next, we discuss the convergence time of attack.} Note that as $i\rightarrow \infty$,
\begin{equation*}
\lim_{ i \rightarrow \infty} (1 - r_k^a)^{i} = 0.\footnote{Refer to equation \eqref{eq: convergence condition} in the Appendix..}
\end{equation*}
In practice, when the left side of the above equation is smaller than a certain small value $\epsilon$, that is, 
\begin{equation*}
    (1 - r_k^a)^{i_c^a(\epsilon)} \leq \epsilon,
\end{equation*}
we consider that the \textcolor{green}{convergence to the desired state is achieved. Moreover, time required to reach the desired state is denoted by $i^a_c(\epsilon)$, and is computed as}
\begin{equation}\label{eq: attack convergence time}
    {i^a_c(\epsilon)} = \frac{\log \epsilon}{\log (1 - r_k^a)}.
\end{equation}

\textcolor{revision}{
It is also worth mentioning that it is not necessary to start the attack at the beginning of the diffusion task in order to guarantee the convergence of the attack. In other words, the attack can start at any time even after the diffusion algorithm has converged to its correct target as long as the condition in \textit{Proposition 1} is satisfied.
}

\section{Network Attack Design}\label{the section of network attack model}
\textcolor{green}{In this section, we consider the case when multiple nodes are compromised using the attack model presented above. Our objective is to determine the minimum set of nodes to compromise in order to attack the entire network. 
For this, we show: (1) It is not necessary for the attacker to compromise multiple compromised nodes in order to attack a single node and (2) it is not possible for a compromised node to influence nodes, that is, make such nodes not converge to the desired states, that are not its immediate neighbors.
Therefore, the minimum set to compromise is simply a \emph{minimum dominating set} of the network, which we explain later in the section. }
\textcolor{revision}{\subsection{Impact of Compromised Nodes on Normal Nodes}}
\textcolor{revision}{In this subsection, first we discuss the impact of multiple compromised nodes attacking a single normal node, and then analyze the impact of a compromised node \textcolor{revision2}{can make} beyond its immediate neighbors. }

\begingroup
\makeatletter
\apptocmd{\lemma}{\unless\ifx\protect\@unexpandable@protect\protect\fi}{}{}
\makeatother

\begin{lemma}
\textcolor{yellow}{If the compromised nodes send identical message as proposed in \eqref{eq:attack objective message}, then} multiple compromised nodes attacking one normal node is equivalent to one compromised node attacking the normal node.
\end{lemma}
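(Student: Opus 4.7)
The plan is to show that when the $m$ compromised neighbors $a_1,\ldots,a_m \in \mathcal{N}_k$ all transmit the same message $\bm{\psi}^{\star}_i \triangleq \bm{w}_{k,i-1} - r_k^a(\bm{w}_{k,i-1} - x_i)$ prescribed by \eqref{eq:attack objective message}, the normal agent $k$'s combination step produces exactly the same update as if there were a single compromised neighbor sending $\bm{\psi}^{\star}_i$. I would carry this out in three steps: (i) show that the adaptive weights attached to the compromised nodes are identical to one another; (ii) show that their aggregate weight still tends to one, so the normal neighbors are still ignored; (iii) combine (i) and (ii) to conclude that the realized state trajectory of $k$ coincides with the single-attacker trajectory analyzed in Proposition~\ref{proposition: constraint of r_{k,i}^a}.

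For step (i), I would observe that the quantity $\gamma_{lk}^2(i)$ used in the adaptive weight rule \eqref{eq: adaptive relative-variance combination rule} is a function only of the history of $\|\bm{\psi}_{l,i}-\bm{w}_{k,i-1}\|$ and of the initial value $\gamma_{lk}^2(-1)$. Since every $a_j$ transmits the same $\bm{\psi}^{\star}_i$ for all $i \ge i_a$ and (without loss of generality) the initial values are taken equal, a straightforward induction on $i$ yields $\gamma_{a_j k}^2(i) = \gamma_{a_{j'} k}^2(i)$ for all $j,j'$, and hence $a_{a_j k}(i) = a_{a_{j'} k}(i)$ for all $j,j'$.

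For step (ii), I would reuse the proof of Lemma~\ref{lem:lemma_2}. By the hypothesis of Proposition~\ref{proposition: constraint of r_{k,i}^a}, $\|\bm{\psi}^{\star}_i - \bm{w}_{k,i-1}\| = \|r_k^a(\bm{w}_{k,i-1}-x_i)\| \ll \|\bm{\psi}_{l,i}-\bm{w}_{k,i-1}\|$ for every normal neighbor $l$, so that $\gamma_{a_j k}^{-2}(i) \gg \gamma_{lk}^{-2}(i)$ asymptotically. Dividing numerator and denominator of \eqref{eq: adaptive relative-variance combination rule} by $\gamma_{a_1 k}^{-2}(i)$ gives $a_{lk}(i) \to 0$ for every normal $l$, while the $m$ equal compromised weights satisfy $\sum_{j=1}^{m} a_{a_j k}(i) \to 1$.

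For step (iii), I would write the combination rule explicitly:
\begin{equation*}
\bm{w}_{k,i} = \sum_{l \in \mathcal{N}_k \setminus \{a_1,\ldots,a_m\}} a_{lk}(i)\,\bm{\psi}_{l,i} \;+\; \Bigl(\sum_{j=1}^{m} a_{a_j k}(i)\Bigr)\bm{\psi}^{\star}_i,
\end{equation*}
where I have used that all compromised nodes share the same message $\bm{\psi}^{\star}_i$ to factor it out of the sum. By step (ii), the first sum vanishes and the prefactor of $\bm{\psi}^{\star}_i$ tends to $1$, which is structurally identical to \eqref{eq: assign state to desired} with the single aggregate weight $a_{ak}(i) := \sum_j a_{a_j k}(i)$. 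Invoking Proposition~\ref{proposition: constraint of r_{k,i}^a} then yields the same attack trajectory, proving equivalence. The only mild subtlety, which I would handle carefully, is step (i): one must verify that the $\gamma^2$ recursions for the $a_j$'s remain equal for all $i \ge i_a$, which follows because the driving term $\|\bm{\psi}_{l,i}-\bm{w}_{k,i-1}\|$ is identical across compromised nodes at every iteration (so any common initialization suffices, and transient differences from initialization decay geometrically at rate $1-\nu_k$).
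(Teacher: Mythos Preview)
Your proposal is correct and takes essentially the same approach as the paper: both invoke Lemma~\ref{lem:lemma_2} to conclude that the normal neighbors' weights vanish, then use the fact that the compromised nodes share an identical message (hence receive equal weights summing to one) to reduce the combination step to the single-attacker case. Your treatment is slightly more explicit than the paper's in justifying step~(i) via the $\gamma^2$ recursion---the paper simply asserts $a_{ak}(i)=1/|\mathcal{A}|$---but the substance is the same.
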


\endgroup


The next problem to consider is if a compromised node could indirectly impact its neighbors' neighbors that \textcolor{green}{at the same time} are not the neighbors of the  attacker $a$. To illustrate this, we consider an attacker node $a$, a normal node $l$, and a large clique\footnote{Every node is connected to every other node in a clique.} of normal nodes $\mathcal{C}$ such that each node in a clique is connected to both $a$ and $l$, and there is no edge between nodes $a$ and $l$.


    
    
    
    

Using the proposed attack model, $a$ is able to drive every node in the clique to converge to its selected state. We are interested in finding if the normal node $l$, that is connected to the clique, is also affected by the attack.
The state of $l$ is obtained by
\begin{equation}\label{eq: network weight of l}
\begin{aligned}
    \bm{w}_{l,i} &= \sum_{k \in \mathcal{C}} a_{kl}(i) \bm{\psi}_{k,i}  + a_{ll}(i) \bm{\psi}_{l,i} \\
    &= \sum_{k \in \mathcal{C}} a_{kl}(i)( \bm{w}_{k,i-1} + \mu_k \bm{u}_{k,i}^*[\bm{d}_{k}(i)-\bm{u}_{k,i}\bm{w}_{k,i-1}] )  \\
    & \quad + a_{ll}(i) ( \bm{w}_{l,i-1} + \mu_l \bm{u}_{l,i}^*[\bm{d}_{l}(i)-\bm{u}_{l,i}\bm{w}_{l,i-1}] ).
    \end{aligned}
\end{equation}

We use $\bm{R}_{k,i}$ to denote the random variable $\mu_k \bm{u}_{k,i}^*[\bm{d}_{k}(i)-\bm{u}_{k,i}\bm{w}_{k,i-1}]$ for $k$ in the clique and $\bm{R}_{l,i}$ to denote $ \mu_l \bm{u}_{l,i}^*[\bm{d}_{l}(i)-\bm{u}_{l,i}\bm{w}_{l,i-1}]$ for normal node $l$. 
Suppose the compromised node $a$ could affect nodes beyond its neighborhood, 
from some point $i$, $\bm{w}_{k,i}$ converges to $w_k^a$ and $\bm{w}_{l,i}$ converges to $w_l^a$ (assume both $w_k^a \neq w_k^0$ and $w_l^a \neq w_l^0$). 

Thus, \eqref{eq: network weight of l} turns into:
\begin{equation}\label{eq:wla}
\begin{aligned}
   w_l^a &= \sum_{k \in \mathcal{C}} a_{kl}(i)( w_k^a + \bm{R}_{k,i} )  + (1 - \sum_{k \in \mathcal{C}} a_{kl}(i)) ( w_l^a + \bm{R}_{l,i} ) \\
   &= \sum_{k \in \mathcal{C}} a_{kl}(i) (w_k^a - w_l^a + \bm{R}_{k,i} - \bm{R}_{l,i}) + w_l^a + \bm{R}_{l,i}.
\end{aligned}
\end{equation}
\textcolor{green}{After inserting constants and random variables, \eqref{eq:wla} can be written as } 
\begin{equation}\label{eq: false equation}
   \sum_{k \in \mathcal{C}} a_{kl}(i) ( w_l^a - w_k^a ) =  \sum_{k \in \mathcal{C}} a_{kl}(i)  \bm{R}_{k,i} + (1 - \sum_{k \in \mathcal{C}} a_{kl}(i) ) \bm{R}_{l,i}.
\end{equation}
Here, $(w_k^a - w_l^a)$ is a constant and $a_{lk}(i)$ changes slowly and can be considered as a constant that does not change within a small period of time. Then, \eqref{eq: false equation} implies a constant equals to a random variable, which does not hold except that both sides equal to zero. 
For the left side, that is when $\sum_{k \in \mathcal{C}} a_{kl}(i) \rightarrow 0$ or  $( w_l^a - w_k^a ) \rightarrow 0$. Consider, when $( w_l^a - w_k^a ) \rightarrow 0$, that is, $ w_l^a \rightarrow w_k^a$. 
In such cases, 
\begin{equation*}
\begin{split}
    \bm{R}_{l,i}  &= \mu_l \bm{u}_{l,i}^* [\bm{d}_{l}(i)-\bm{u}_{l,i}\bm{w}_{l,i-1}] \\
    &= \mu_l \bm{u}_{l,i}^* [  \bm{u}_{l,i} w_l^0 + \bm{v}_{l}(i) -\bm{u}_{l,i} {w}_{l}^a]\\ 
    &= \mu_l \bm{u}_{l,i}^* [  \bm{u}_{l,i} (w_l^0 - {w}_{l}^a) + \bm{v}_{l}(i) ] \neq \bm{0}.
\end{split}
\end{equation*}
So is $\bm{R}_{k,i}$.
Therefore, equation \eqref{eq: false equation} does not hold under the condition $( w_l^a - w_k^a ) \rightarrow 0$.

The other possible solution for equation \eqref{eq: false equation} is when $\sum_{k \in \mathcal{C}} a_{kl}(i) \rightarrow 0$. This means $l$ does not assign any weight to $k \in \mathcal{C}$ and operates by itself. In such cases, equation \eqref{eq: false equation} holds when the right side of the equation is zero. Since $\sum_{k \in \mathcal{C}} a_{kl}(i) \rightarrow 0$, the right side turns into $\bm{R}_{l,i}$. We know when $l$ converges to its true objective state $w_l^0$, $\bm{R}_{l,i}$ is zero, i.e.,
\begin{equation*}
\begin{split}
    \bm{R}_{l,i}  &= \mu_l \bm{u}_{l,i}^* [\bm{d}_{l}(i)-\bm{u}_{l,i}\bm{w}_{l,i-1}] \\
    &= \mu_l \bm{u}_{l,i}^* [  \bm{u}_{l,i} w_l^0 + \bm{v}_{l}(i) -\bm{u}_{l,i} {w}_{l}^0] \\
    &= \mu_l \bm{u}_{l,i}^* \bm{v}_{l}(i) \rightarrow \bm{0}.
\end{split}
\end{equation*}
Thus, equation \eqref{eq: false equation} holds under two conditions: \textcolor{green}{First,} $\sum_{k \in \mathcal{C}} a_{kl}(i) \rightarrow 0$, that is, $l$ does not give any weight to $k \in \mathcal{C}$. \textcolor{green}{Second,} $\bm{R}_{l,i} \rightarrow 0$, that is, $l$ converges to its true objective state $w_l^0$. 

We note that the above two conditions indicate 
that
$l$ 
converges to its original goal state and will not assign any weight to its \textcolor{green}{compromised neighbors under the above conditions.}
Based on this discussion, we have \textit{Lemma 4}.

\begin{lemma}
The attacker cannot change the convergence state of the nodes that are not its immediate neighbors.
\end{lemma}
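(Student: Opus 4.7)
The plan is to establish this by contradiction, essentially formalizing the case analysis sketched in the discussion immediately preceding the lemma statement. Suppose, to the contrary, that there is a normal node $l \notin \mathcal{N}_a$ whose estimate is driven to some limit $w_l^a \neq w_l^0$. Because $a$ is not adjacent to $l$, any such influence must propagate through the intermediate normal neighbors of $l$ that are in $\mathcal{N}_a$; by Proposition~\ref{proposition: constraint of r_{k,i}^a} the attacker can at best drive those neighbors $k$ to incorrect limits $w_k^a$. I would set things up in full generality rather than through the clique of the motivating example, since the same combination identity applies once we restrict attention to $\mathcal{N}_l$.

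The first step is to write the update for $\bm{w}_{l,i}$ exactly as in \eqref{eq: network weight of l}, substitute the supposed limiting values $\bm{w}_{k,i-1} \to w_k^a$ and $\bm{w}_{l,i-1} \to w_l^a$, and rearrange into the form \eqref{eq: false equation}. The left-hand side is then a deterministic quantity (a sum over $k \in \mathcal{N}_l \cap \mathcal{N}_a$ of slowly-varying weights times constant gaps $w_l^a - w_k^a$), while the right-hand side is a linear combination of the stochastic innovation terms $\bm{R}_{k,i}$ and $\bm{R}_{l,i}$. Taking expectations and exploiting the independence of $\bm{u}_{\cdot,i}$ from $\bm{v}_{\cdot}(i)$ together with the zero-mean assumption on $\bm{v}_\cdot$, one forces both sides to approach $\bm{0}$.

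Next I would carry out the two-case analysis explicitly. In Case 1, suppose $(w_l^a - w_k^a) \to 0$ for every influencing $k$; then the noise-free part of $\bm{R}_{l,i}$ at the putative limit equals $\mu_l \bm{u}_{l,i}^* \bm{u}_{l,i}(w_l^0 - w_l^a)$, whose expectation is $\mu_l R_{u,l}(w_l^0 - w_l^a)$, nonzero whenever $w_l^a \neq w_l^0$, contradicting the vanishing of the right-hand side. In Case 2, suppose $\sum_{k} a_{kl}(i) \to 0$; then $l$ asymptotically performs non-cooperative LMS on its own measurements, whose unique fixed point is $w_l^0$, again contradicting $w_l^a \neq w_l^0$. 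Either way we obtain a contradiction, which forces the conclusion $\lim_{i\to\infty} \bm{w}_{l,i} = w_l^0$.

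The main obstacle I anticipate is making rigorous the step ``deterministic quantity = stochastic quantity $\Rightarrow$ both vanish,'' because the weights $a_{kl}(i)$ are themselves random and correlated with the innovations $\bm{R}_{k,i}$ through the adaptive update in \eqref{eq: adaptive relative-variance combination rule}. The cleanest remedy is to take expectations after invoking the usual small-step-size/separation-principle argument that lets us treat $a_{kl}(i)$ as approximately deterministic on the time-scale of the noise, as is standard in the DLMS convergence literature cited earlier. A minor secondary issue is generality: the illustrative clique in the text is purely expository, and once the combination identity is written for an arbitrary neighborhood $\mathcal{N}_l$ the same case analysis goes through without modification.
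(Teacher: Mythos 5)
Your proposal is correct and follows essentially the same route as the paper: the paper's argument for this lemma is precisely the discussion preceding it, namely substituting the putative limits into the combination identity \eqref{eq: network weight of l}, rearranging into \eqref{eq: false equation}, observing that a (slowly varying) deterministic left-hand side can equal the stochastic right-hand side only if both vanish, and then running the same two-case analysis ($w_l^a - w_k^a \to 0$ versus $\sum_k a_{kl}(i) \to 0$) to force $\bm{w}_{l,i} \to w_l^0$. Your additions — phrasing it as an explicit contradiction, taking expectations with the independence and zero-mean noise assumptions, and generalizing from the expository clique to an arbitrary neighborhood — are refinements of rigor and generality rather than a different argument.
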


\textcolor{revision}{Next, we see how many compromised nodes are needed to attack the entire network.}

\textcolor{revision}{\subsection{Minimum Set of Compromised Nodes to Attack the Entire Network}}

Since it is not  necessary  to use  more  than  one  compromised  nodes  to  attack  one  single normal  agent, and a compromised node cannot affect nodes beyond its neighborhood, finding a minimum set of nodes to compromise in order to attack the entire network is equivalent to finding a minimum dominating set of the network as defined below \cite{DBLP:journals/dam/HedetniemiLP86}. 
\begin{definition}
{\rm{(Dominating set)}} A dominating set of a graph $G = (V, E)$ is a subset $D$ of $V$ such that every vertex not in $D$ is adjacent to at least one member of $D$. 

\end{definition}
\begin{definition}
{\rm{(Minimum dominating set)}} A minimum dominating set of a graph is a dominating set of the smallest size.
\end{definition}
\textcolor{green}{It should be noted that finding a minimum dominating set of a network is an NP-complete problem but approximate solutions using greedy approaches work well in practice (for instance, see \cite{DBLP:journals/dam/HedetniemiLP86}).} 
With the above discussion, we state the following:
\begin{proposition}
The  compromised  nodes  need  to  form  a  dominating  set if the attacker  wants  every  node  in  the  network  to  converge  to its desired state.
\end{proposition}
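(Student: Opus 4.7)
The plan is to establish Proposition 2 via its contrapositive, invoking the preceding Lemma 4 as the critical ingredient. Concretely, I would show that whenever the compromised set $\mathcal{A}$ fails to be a dominating set of the network graph $G = (V,E)$, there must exist at least one normal node whose convergence state remains unaffected by the attacker, which contradicts the attacker's goal of making every node converge to its desired state.

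First, I would let $\mathcal{A} \subseteq V$ denote the set of compromised nodes and suppose that $\mathcal{A}$ is not a dominating set. By Definition~1, this means there exists a normal node $v \in V \setminus \mathcal{A}$ such that $\mathcal{N}_v \cap \mathcal{A} = \emptyset$; that is, none of $v$'s neighbors has been compromised. Next, I would directly invoke \textit{Lemma 4}, which asserts that a compromised node cannot change the convergence state of any node that is not its immediate neighbor. Since every $a \in \mathcal{A}$ satisfies $v \notin \mathcal{N}_a$ (equivalently, $a \notin \mathcal{N}_v$), no compromised node can alter the convergence state of $v$. Therefore $v$ converges to its true state $w_v^0$, which is different from the attacker's desired state $w_v^a$. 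This contradicts the premise that the attacker drives every node to its desired state, so $\mathcal{A}$ must be a dominating set.

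To round out the argument, I would briefly remark that this necessary condition is also sufficient: when $\mathcal{A}$ is a dominating set, every normal node has at least one compromised neighbor, and by \textit{Proposition 1} a single such neighbor suffices to drive it to the attacker's chosen state (and by \textit{Lemma 3}, additional compromised neighbors add no further capability). Together with necessity, this justifies seeking a \emph{minimum} dominating set as the optimal attack strategy.

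The main obstacle, conceptually, is ensuring that \textit{Lemma 4} is being applied in a setting general enough to cover all network topologies, not just the clique-based illustrative example used to derive it. The core structural argument from \textit{Lemma 4}---that the combination step \eqref{eq: network weight of l} forces either $\sum_{k \in \mathcal{N}_l \cap \mathcal{A}^{c}} a_{kl}(i) \to 0$ or convergence of $l$ to $w_l^0$ on the neighbors of $v$---depends only on the stochastic residual $\bm{R}_{l,i}$ being nonzero whenever $l$ has not converged to $w_l^0$, which holds in arbitrary topologies. Hence the extension is immediate and does not require additional machinery beyond what has already been established.
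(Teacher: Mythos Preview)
Your proposal is correct and follows essentially the same approach as the paper: the paper does not give a separate formal proof of Proposition~2 but simply states it as a consequence of the preceding discussion (Lemmas~3 and~4), remarking afterward that the condition is both necessary and sufficient. Your contrapositive formalization using Lemma~4 for necessity and Proposition~1/Lemma~3 for sufficiency is exactly the intended logic, and your caveat about the generality of Lemma~4 beyond the clique example is a fair observation that the paper leaves implicit.
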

Based on the above discussion, we observe that the above condition is both necessary and sufficient.

\section{Resilient Distributed
Diffusion}\label{sec:resilient_diffusion}

\textcolor{revision}{In this section, we propose a resilient diffusion algorithm that guarantees convergence of normal nodes to their actual \textcolor{revision2}{states} if the number of compromised nodes in the neighborhood of a normal node is bounded. The proposed algorithm takes a \textcolor{revision2}{non-negative} integer $F$ as an input parameter. If the number of compromised nodes in the neighborhood of a normal node is at most $F$, then the algorithm is resilient to any such attack. It is obvious that selecting a large $F$ value achieves a higher level of resilience, while selecting $F=0$ means that the algorithm is not resilient to any attack. However, there exists a trade-off between the resilience and \textcolor{revision2}{the steady-state} MSD performance of the algorithm, which we will analyze in detail.} Since the proposed algorithm is adapted from the known DLMSAW, we call it a \emph{Resilient Diffusion Least Mean Square with Adaptive Weights (R-DLMSAW)}. \textcolor{revision2}{
We also note that in contrast to the connectivity requirements needed by resilient concensus problems \cite{DBLP:journals/jsac/LeBlancZKS13},
since in resilient diffusion, connectivity does not affect convergence, but only the estimation performance measured by the steady-state MSD.
}

\textcolor{revision}{Since our algorithm can achieve resilience to up to $F$ compromised nodes,}
\textcolor{green}{
we assume that there can be at most $F$ compromised 
nodes in the neighborhood of any node, which is also referred to as the $F$-local model in \cite{DBLP:journals/jsac/LeBlancZKS13}. Specifically, we define:}

\begin{definition}\label{F-local definition}
{\rm{($F$-local model)}} A node satisfies the $F$-local model if there is at most $F$ compromised nodes in its neighborhood.
\end{definition}
\begin{definition}\label{F-local network}
{\rm{($F$-local network)}} A network is considered to
satisfy the $F$-local model if every node in the network has at most $F$ compromised nodes in its neighborhood.
\end{definition}

\textcolor{green}{
While the paper focuses on the $F$-local model, scenarios involving bounds on the total number of compromised nodes within the network ($F$-total model \cite{DBLP:journals/jsac/LeBlancZKS13}) can also be analyzed using a similar approach. Next, we describe our resilient diffusion algorithm.}

\subsection{Resilient Diffusion Algorithm (R-DLMSAW)}

\textcolor{green}{In the context of distributed consensus, it is shown in \cite{DBLP:journals/jsac/LeBlancZKS13} \textcolor{revision2}{that for  Mean-Subsequence-Reduced (MSR) algorithms,} that during the state update phase, a node discards the values of neighbors that are too far off from the node's own value,  resilience against attacks can be achieved, that is, distributed consensus in the presence of compromised nodes ($F$-local and $F$-total models) is guaranteed. In distributed diffusion, we recall that a node updates its estimate by taking a weighted average of the estimates of all of its neighbors \eqref{eq:combine}. For resilient diffusion, we utilize a similar idea as in \cite{DBLP:journals/jsac/LeBlancZKS13}, that is instead of considering the estimates of all neighbors during the state update phase, only consider values from a subset of neighbors sharing close estimates. 
We show that this strategy guarantees convergence of normal nodes to true estimates. Before outlining the resilient distributed diffusion algorithm, we first explain the notion of \textcolor{revision2}{the} cost of a node.}


Following \eqref{eq:combine}, normal agent $k$ follows diffusion dynamics given by
\begin{equation*}
    \bm{w}_{k,i} = \sum_{l \in \mathcal{N}_k} a_{lk}(i) \bm{\psi}_{l,i}.
\end{equation*}

Thus, the cost function in \eqref{eq: cost function} in the $i^{th}$ iteration can be written as:
\begin{equation*}
    \begin{split}
J_k(\bm{w}_{k,i}) &= J_k(\sum_{l \in \mathcal{N}_k} a_{lk}(i) \bm{\psi}_{l,i}) \\
&= \mathbb{E} \{\|\bm{d}_k(i) - \bm{u}_{k,i} (\sum_{l \in \mathcal{N}_k} a_{lk}(i) \bm{\psi}_{l,i})\|^2\}.
    \end{split}
\end{equation*}
Since $\sum_{l \in \mathcal{N}_k} a_{lk}(i) = 1$, we have 
$$\bm{d}_k(i) =  \sum_{l \in \mathcal{N}_k} a_{lk}(i) \bm{d}_k(i).$$
Thus,
\begin{equation}\label{eq: cost contribution}
    \begin{split}
J_k(\bm{w}_{k,i}) &= \mathbb{E} \{ \sum_{l \in \mathcal{N}_k} a_{lk}(i) \bm{d}_k(i) -  \sum_{l \in \mathcal{N}_k} a_{lk}(i) \bm{u}_{k,i}  \bm{\psi}_{l,i} \|^2\}\\
&= \mathbb{E} \{ \|\sum_{l \in \mathcal{N}_k} a_{lk}(i) (\bm{d}_k(i) - \bm{u}_{k,i} \bm{\psi}_{l,i})\|^2\}\\
&=  \sum_{l \in \mathcal{N}_k} a_{lk}^2(i) \mathbb{E} \{ \| \bm{d}_k(i) - \bm{u}_{k,i} \bm{\psi}_{l,i}\|^2\} \\
&= \sum_{l \in \mathcal{N}_k} a_{lk}^2(i) J_k(\bm{\psi}_{l,i}) \\
 &= \textcolor{blue}{ \frac { \sum_{l \in \mathcal{N}_k} \gamma_{lk}^{-4}(i) J_k(\bm{\psi}_{l,i})} {[\sum_{m \in \mathcal{N}_k}\gamma_{mk}^{-2}(i)]^2} }.
    \end{split}
\end{equation}

\textcolor{blue}{The goal of $k$ is to minimize its cost at every iteration, i.e., to minimize $J_k(\bm{w}_{k,i})$ by discarding $F$ neighbors' message. Therefore, the removal set $\mathcal{R}_{k}(i)$ of size $F$ should be selected by}
\textcolor{blue}{
\begin{equation*}
\begin{split}
        \mathcal{R}_{k}(i) &= \arg \min J_k(\bm{w}_{k,i})
        \\
        &= \arg \min \frac { \sum_{l \in \mathcal{N}_k \backslash \mathcal{R}_k(i)} \gamma_{lk}^{-4}(i) J_k(\bm{\psi}_{l,i})} {[\sum_{m \in \mathcal{N}_k \backslash \mathcal{R}_k(i)}\gamma_{mk}^{-2}(i)]^2}.
\end{split}
\end{equation*}
}

\textcolor{revision}{We note that the algorithm presented here is a generalization of the algorithm in \cite{li2018resilient} which is resilient to a specific type of Byzantine attack and has a lower computational cost. 
In contrast, the algorithm proposed in this work is resilient to any Byzantine attack, but has a higher computational cost. 
Thus, there is a trade off between the computation complexity of the algorithm and the scope of attacks to which the algorithm is resilient.}


To compute the cost $J_k(\bm{\psi}_{l,i}) = \mathbb{E} \|\bm{d}_k(i) - \bm{u}_{k,i} \bm{\psi}_{l,i}\|^2$, agent $k$ has to store all the streaming data. 
Alternatively, we can approximate $J_k(\bm{\psi}_{l,i})$ using a moving average based on
the previous iterations.

Next, we outline the basic idea of the proposed resilient distributed diffusion algorithm below, and present the details of \emph{R-DLMSAW} in Algorithm 1.

\begin{enumerate}
\item If $F \geq |\mathcal{N}_k|$, agent $k$ updates its current state $\bm{w}_{k,i}$ using only its own $\bm{\psi}_{k,i}$, which degenerates distributed diffusion to non-cooperative LMS.

\item \textcolor{blue}{If $F < |\mathcal{N}_k|$,  at each iteration $i$, agent $k$ computes $\binom{|\mathcal{N}_k|}{F}$ possible removal sets, and selects the one by removing which $J_k(\bm{\psi}_{l,i})$ is minimized.}
Then, the agent updates its current weight $a_{lk}(i)$ and state $\bm{w}_{k,i}$ without using information from nodes in $\mathcal{R}_k(i)$.
\end{enumerate}

We note that for $F = 0$, DLMSAW and R-DLMSAW are essentially identical.

\begin{algorithm}
\small
\DontPrintSemicolon
\KwIn{ $ \gamma_{lk}^2(-1)=0$ , maintain $n \times 1$ matrix $D_{k,i} = \bm{0}_{n \times 1}$ and $n \times M$ matrix $U_{k,i} = \bm{0}_{n \times M}$ for all $k=1,2,...,N$, and $l \in \mathcal{N}_k$ }
             \For{$k=1,2,...,N,  i \geq 0$}
             {
             $e_{k}(i) = \bm{d}_{k}(i) -\bm{u}_{k,i}\bm{w}_{k,i-1}$\;
             $\bm{\psi}_{k,i}=\bm{w}_{k,i-1}+\mu_k \bm{u}_{k,i}^* e_{k}(i)$\;
             \If{$ F \geq |\mathcal{N}_k|$} {
               $\bm{w}_{k,i} = \bm{\psi}_{k,i}$
             }\Else{
             $\gamma_{lk}^2(i) = (1-\nu_k)\gamma_{lk}^2(i-1)+\nu_k \| \bm{\psi}_{l,i}-\bm{w}_{k,i-1}\| ^2$\;
              Update $D_{k,i}$ and $U_{k,i}$ by adding $\bm{d}_k(i)$ and $\bm{u}_{k,i}$ and removing $\bm{d}_k(i-n)$ and $\bm{u}_{k,i-n}$\;
             $J_k(\bm{\psi}_{l,i}) = \mathbb{E}\|D_{k,i} - U_{k,i} \bm{\psi}_{l,i}\|^2$\;
             Compute all possible discarded set $\mathcal{R}_k(i)^1$, $\mathcal{R}_k(i)^2$, $\ldots$, $\mathcal{R}_k(i)^{\binom{|\mathcal{N}_k|}{F}}$ \;
             $J_{\min} = \infty$\;
             \For{$j = 1, 2, \ldots, \binom{|\mathcal{N}_k|}{F}$}{
              $J = \frac { \sum_{l \in \mathcal{N}_k \backslash \mathcal{R}_k(i)^j} \gamma_{lk}^{-4}(i) J_k(\bm{\psi}_{l,i})} {[\sum_{m \in \mathcal{N}_k \backslash \mathcal{R}_k(i)^j}\gamma_{mk}^{-2}(i)]^2} $\;
              \If{$J < J_{\min}$}
            { $\mathcal{R}_k(i) = \mathcal{R}_k(i)^j$\;
            $J_{\min} = J$\;}
            
              }
             
            ${a}_{lk}(i) = \frac {\gamma_{lk}^{-2}(i)} {\sum_{m \in \mathcal{N}_k \backslash \mathcal{R}_k(i)}\gamma_{mk}^{-2}(i) }, l \in \mathcal{N}_k \backslash \mathcal{R}_k(i)$\;
             $\bm{w}_{k,i} = \sum_{l \in N_{k} \backslash \mathcal{R}_k(i)} {a}_{lk}(i) \bm{\psi}_{l,i}$\;
           }
            \textbf{return} $ \bm{w}_{k,i} $ 
}
 \caption{\small{Resilient distributed diffusion under $F$-local bounds (R-DLMSAW)}}
\label{Algorithm 2}
\end{algorithm}

\begin{proposition}
\textcolor{blue}{If the network is a $F$-local network, then R-DLMSAW is resilient to any message falsification attack}
.
\end{proposition}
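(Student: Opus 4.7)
My plan is to reduce the resilience claim to a convexity argument about the local cost $J_k(w)$. Since $R_{u,k} \succ 0$, the cost $J_k(w)=\mathbb{E}\|\bm{d}_k(i)-\bm{u}_{k,i}w\|^2$ is strongly convex in $w$, with unique minimizer $w_k^0$ and minimum value $\sigma_{v,k}^2$. Consequently, convergence $\bm{w}_{k,i}\to w_k^0$ is equivalent to $J_k(\bm{w}_{k,i})\to\sigma_{v,k}^2$. The bulk of the proof is therefore to show that the cost achieved by R-DLMSAW decays to this noise floor no matter what the compromised neighbors do.

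I would split the argument by neighborhood size. In the degenerate case $F\geq|\mathcal{N}_k|$, Algorithm~1 explicitly sets $\bm{w}_{k,i}=\bm{\psi}_{k,i}$, so $k$ performs non-cooperative LMS on its own streaming data. Adversarial messages never enter the update, and standard LMS theory gives $\bm{w}_{k,i}\to w_k^0$ in the mean. In the cooperative case $F<|\mathcal{N}_k|$, I exploit the $F$-local assumption $|\mathcal{A}_k|\leq F$ to note that at every iteration there exists at least one admissible removal set $\mathcal{R}^\star$ of size $F$ satisfying $\mathcal{A}_k\subseteq\mathcal{R}^\star$. The combination produced by $\mathcal{R}^\star$ uses only honest $\bm{\psi}_{l,i}$, and by the optimality criterion used to choose $\mathcal{R}_k(i)$, the achieved cost obeys
\[
J_k(\bm{w}_{k,i})\;\leq\;\frac{\sum_{l\in\mathcal{N}_k\setminus\mathcal{R}^\star}\gamma_{lk}^{-4}(i)\,J_k(\bm{\psi}_{l,i})}{\bigl[\sum_{m\in\mathcal{N}_k\setminus\mathcal{R}^\star}\gamma_{mk}^{-2}(i)\bigr]^{2}}.
\]
Because the right-hand side depends only on honest messages, the convergence analysis of multi-task DLMSAW applies: the adaptive weights $\gamma_{lk}^{-2}(i)$ asymptotically suppress honest neighbors whose targets $w_l^0$ differ from $w_k^0$, and retain neighbors (including $k$ itself) whose $\bm{\psi}_{l,i}\to w_k^0$. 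This drives the honest-only cost to $\sigma_{v,k}^2$, and the inequality above forces $J_k(\bm{w}_{k,i})\to\sigma_{v,k}^2$, which by strong convexity yields $\bm{w}_{k,i}\to w_k^0$.

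The hardest part, I expect, will be making the comparison with $\mathcal{R}^\star$ fully rigorous, because the honest neighbors' past states $\bm{w}_{l,i-1}$ used to form $\bm{\psi}_{l,i}$ have themselves evolved under past attacks. A clean treatment requires an inductive argument on $i$, exploiting two complementary facts. First, any adversarial message that the optimal $\mathcal{R}_k(i)$ chooses to keep must have a $J_k$-value no larger than that of some honest message it discards; by strong convexity of $J_k$ this implies such an adversarial $\bm{\psi}_{a,i}$ lies in a ball around $w_k^0$ whose radius is controlled by the honest cost level. Second, as every normal agent satisfies the $F$-local bound, the same argument holds uniformly, so one can propagate the inductive hypothesis across the network. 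Combining these two observations with the convexity-based equivalence between cost decay and state convergence closes the proof.
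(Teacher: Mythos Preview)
Your proposal is correct and follows essentially the same line as the paper: case-split on $F\geq|\mathcal{N}_k|$, and for $F<|\mathcal{N}_k|$ use the $F$-local bound to guarantee an admissible removal set $\mathcal{R}^\star$ containing all compromised neighbors, so that by optimality of the selection rule the achieved cost is dominated by an honest-only cost, after which one invokes DLMSAW convergence on the remaining normal (sub)network (the paper phrases this last step as robustness of diffusion to node/link failures). Your strong-convexity equivalence $J_k(\bm{w}_{k,i})\to\sigma_{v,k}^2\Leftrightarrow\bm{w}_{k,i}\to w_k^0$ and your explicit flagging of the inductive coupling through past states are more careful than the paper's own proof, which simply observes that any retained adversarial message must lower the cost and then appeals to robustness without addressing the circularity you identify.
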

\begin{proof}
Given the $F$-local model, \textcolor{green}{we assume that there are $n\le F$ compromised nodes in the neighborhood of a normal node $k$.} In the case of $F \geq |\mathcal{N}_k|$, $k$ updates its state without using information from neighbors. Next, consider the case when $F < |\mathcal{N}_k|$. 
\textcolor{blue}{To deploy the attack, 
the attacker must try to make the message it sends to \textcolor{revision2}{the} normal nodes \textcolor{revision2}{not being discarded by the normal nodes}.
This can only be achieved if the cost of keeping the attacker's message is smaller than keeping some normal agents' message (discarding the attacker's message). 
Therefore, any attack message not being discarded actually results in a cost smaller than the normal case. Therefore, R-DLMSAW is resilient to any message falsification attack.
From the attacker's perspective, since its goal is to maximize cost $J_k(\bm{w}_{k,i})$, 
the optimal strategy for
the attacker is not to make this cost even smaller. As a result, 
the information from the attacker will be discarded.}

Thus,
\begin{equation*}
    \bm{w}_{k, i} = \sum_{l \in \mathcal{N}_k \backslash \mathcal{R}_k(i)} a_{lk}(i) \bm{\psi}_{l,i},
\end{equation*}
meaning the algorithm performs the diffusion adaptation \textcolor{green}{step} as if there were no compromised node \textcolor{green}{in its neighborhood.} 
Note that messages from normal neighbors may \textcolor{green}{also} be discarded since $F$ may be greater than the number of compromised neighbors. 
However, the distributed diffusion algorithm is robust to node and link failures \textcolor{revision}{\cite{5948418}}, 
and it converges to the true state despite the links to some or all of its neighbors fail. 
Finally, the algorithm will converge and equation \eqref{eq: resilient distributed diffusion} holds, 
showing the resilience of R-DLMSAW.
\end{proof}

\subsection{Trade-off Between Resilience and MSD Performance}
\textcolor{green}{An important aspect of R-DLMSAW is the selection of parameter $F$ by each normal node. On the one hand, selection of a large $F$ degrades the performance of the diffusion algorithm as measured by the \textcolor{revision2}{steady-state} MSD, but on the other hand, a smaller $F$ might result in an algorithm that is not resilient against attacks.} In the following, we summarize the trade-off between \textcolor{revision2}{the steady-state}  MSD performance and resilience.

\textcolor{green}{It is rather obvious that if a normal node selects $F$ smaller than the number of compromised nodes in its neighborhood, then the messages from the compromised nodes might not be discarded entirely during the state update phase of R-DLMSAW. As a result, the algorithm might not be resilient against the attack, and the normal node might eventually converge to the attacker's desired state. However, if $F$ is selected too large,  then in the worst case, normal agents discard all the  information  from  their neighbors. The algorithm becomes a non-cooperative diffusion algorithm and incurs an $N$-fold MSD performance deterioration. Thus, the performance of R-DLMSAW lies somewhere in-between the cooperative diffusion and non-cooperative diffusion depending on the choice of $F$ selected.}


Consider a connected network  with $N$ normal agents running R-DLMSAW.  Let $\sigma_{v,k}^2 = \{\sigma_{v,1}^2, \ldots, \sigma_{v,N}^2\}$ be the noise variance. \textcolor{green}{Suppose by selecting some $F$ the network is resilient, but is no longer a connected graph and is decomposed into $n$ connected sub-networks, each of which is denoted by $S_j$ where $j\in\{1,\cdots,n\}$.} Using \eqref{eq:diffusion MSD}, the \textcolor{revision2}{steady-state} MSD for each sub-network is 
\begin{equation*}
\text{MSD}_{S_j} \approx \frac{\mu M}{2} \cdot \frac{1}{({|S_j|})^2}  \sum_{k \in S_j} \sigma_{v,k}^2,
\end{equation*}
\textcolor{green}{where $|S_j|$ is the number of nodes in $j^{th}$ sub-network.}
The \textcolor{revision2}{steady-state} MSD for the overall network (consisting of sub-networks) after running R-DLMSAW is the weighted average of the \textcolor{revision2}{steady-state} MSD of the sub-networks, that is 
\begin{equation*}
    \text{MSD}_{\text{after}} = \frac{1}{N} \sum_{j = 1}^n \text{MSD}_{S_j} \cdot {|S_j|}   \approx \frac{\mu M}{2N} \cdot \sum_{j = 1}^n \frac{1}{{|S_j|}}  \sum_{k \in S_j} \sigma_{v,k}^2.
\end{equation*}
At the same time, the \textcolor{revision2}{steady-state} MSD for the (original) connected network before running R-DLMSAW is
\begin{equation*}
\text{MSD}_{\text{before}} \approx \frac{\mu M}{2} \cdot \frac{1}{N^2}  \sum_{k = 1}^{N} \sigma_{v,k}^2 \approx \frac{\mu M}{2N} \cdot \sum_{j = 1}^n \frac{1}{N}  \sum_{k \in S_j} \sigma_{v,k}^2. 
\end{equation*}
The difference between the two is
\begin{equation*}
    \text{MSD}_{\text{after}} - \text{MSD}_{\text{before}}  = 
    \frac{\mu M}{2N} \cdot \sum_{j = 1}^n (\frac{1}{{|S_j|}} - \frac{1}{N})  \sum_{k \in S_j} \sigma_{v,k}^2 .
\end{equation*}
We know that ${|s_j|} \leq N$. Therefore, $\frac{1}{{|S_j|}} - \frac{1}{N} \geq 0$, meaning the \textcolor{revision2}{steady-state} MSD of the network after running R-DLMSAW is worse than the \textcolor{revision2}{steady-state} MSD of the original network, and as the network is decomposed into more sub-networks, $\sum_{j = 1}^n (\frac{1}{{|S_j|}} - \frac{1}{N})$ and $\text{MSD}_{\text{after}}$ becomes larger.

\textcolor{green}{Therefore, it is crucial to select an appropriate $F$, that is a value with which the algorithm is resilient against compromised nodes and at the same time useful links between nodes are preserved. To this end, a simple way to select $F$ is to first estimate $\bm{w}_{\text{ncop},k,i}$ by a non-cooperative diffusion and compute $J_k(\bm{w}_{\text{ncop},k,i})$. Then, starting with a small $F$, for instance $F=0$, perform cooperative diffusion and compute $J_k(\bm{w}_{\text{coop},k,i})$. If $J_k(\bm{w}_{\text{coop},k,i}) > J_k(\bm{w}_{\text{ncop},k,i})$, it means that a compromised node is able to effect the estimation, and therefore increase $F$ by $1$. We keep repeating this as long as $J_k(\bm{w}_{\text{coop},k,i}) > J_k(\bm{w}_{\text{ncop},k,i})$ is true. }


\section{Evaluation}\label{sec:evaluation}
In this section, we evaluate three algorithms, \emph{non-cooperative diffusion}, \emph{DLMSAW}, and \emph{R-DLMSAW}; and compare their performance  for \emph{no-attack} and \emph{attack} scenarios. We evaluate the proposed attack model and resilient  algorithms using the application of  multi-target localization 
\textcolor{revision2}{\cite{DBLP:journals/corr/abs-1205-4220, 6845334}}
for both  stationary and non-stationary targets. 

\textcolor{revision}{We consider a network of $N=100$ agents, in which each agent's objective is to estimate the unknown location of its target of interest by the noisy observations of both the distance and the direction vector towards the target. These agents and targets are distributed in a plane. The location of agent $k$ is denoted by the two-dimensional vector $p_k = [x_k, y_k]^\top$, and similarly the location of target is represented by the vector $w_k^0 = [x_k^0, y_k^0]^\top$. Figure \ref{fig: target localization} illustrates how an agent estimates the location of the target.}


\definecolor{agent}{rgb}{0,0,1}
\definecolor{target}{rgb}{0,1,0}
\begin{figure}[H]
\centering
\begin{tikzpicture}
    [agent/.style={circle,   draw=black, line width=0.75pt, fill=agent!30,  minimum size=4mm},
    target/.style={diamond,draw=black,fill=target!30, line width=0.75pt, minimum size=2mm}
    ]

    
    \draw[dotted, color=black,line width=1pt] (0,0) -- (3,2);
    \draw[-latex, color=red,line width=2pt] (0,0) -- (3/3,2/3);
    \draw[dotted, color=black,line width=1pt] (0,0) -- (3.5,0);
    
    \node[agent] (n) at (0, 0){\textcolor{black}{}};
    \node[target] (t) at (3,2){\textcolor{black}{}}; 
    
    \node[] at (4, 2.5) 
    {target};
    \node[] at (-1.2,0.2) 
    {agent $k$};
    \node[] at (1.4, 1.4) 
    {$r_k^0$};
    \node[] at (1.1, 0.3) 
    {$u_k^0$};
    \node[] at (-1.2, -0.3) 
    {$[x_k, y_k]^\top$};
    \node[] at (4.1, 2) 
    {$[x_k^0, y_k^0]^\top$};

\end{tikzpicture}
    \caption{Illustration of target localization.}
    \label{fig: target localization}
\end{figure}
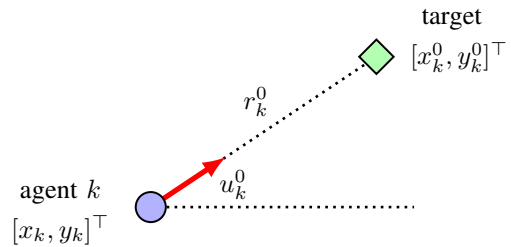

\textcolor{revision}{In Figure \ref{fig: target localization}, the distance between agent $k$ and the target is denoted by $r_k^0 = \|w_k^0 - p_k\|$, and the unit direction vector from agent $k$ to the target is $u_k^0 = \frac{(w_k^0 - p_k)^\top}{\|w_k^0 - p_k\|}$. Therefore, the relationship holds such that $r_k^0 = u_k^0 (w_k^0 - p_k)$. Since agents have only noisy observations $\{ \bm{r}_k(i), \bm{u}_{k,i} \}$ of the distance and direction vector at every iteration $i$, we get the following:
\begin{equation*}
    \bm{r}_k(i) = \bm{u}_{k,i} (w_k^0 - p_k) + \bm{v}_{k}(i).
\end{equation*}
If we use the adjusted signal $\bm{d}_k(i)$, such that
\begin{equation*}
    \bm{d}_k(i) = \bm{r}_k(i) + \bm{u}_{k,i} p_k,
\end{equation*}
then we derive the following linear model for variables $\{\bm{d}_k(i), \bm{u}_{k,i}\}$ in order to estimate the target $w_k^0$:
\begin{equation*}
    \bm{d}_k(i) = \bm{u}_{k,i} w_k^0  + \bm{v}_{k}(i).
\end{equation*}
As a result, agents can rely on DLMSAW algorithm for the multi-target localization problem. Figure \ref{fig: initial stationary network topology} shows the network topology before the application of diffusion algorithm\textcolor{revision2}{s}. For better readability, we only illustrate the network topology of agents without showing targets.}

For stationary target localization, the location of the two stationary targets are given by
\begin{equation*}
w_{k}^0=
\begin{cases}
[0.1, 0.1]^\top, & \text{for } k \text{ depicted in blue}\\
[0.9, 0.9]^\top, & \text{for } k \text{ depicted in green}
\end{cases}
\end{equation*}
~
Non-stationary targets are given by
\begin{spacing}{0.4}
\begin{equation*}
\bm{w}_{k,i}^0 =
\begin{cases}
\begin{bmatrix}
    0.1 + 0.1 \cos(2\pi\omega i) \\
    0.1 + 0.1 \sin(2\pi\omega i)
\end{bmatrix}
, \text{for } k \text{ depicted in blue} \\
\\
\begin{bmatrix}
    0.9 + 0.1 \cos(2\pi\omega i) \\
    0.9 + 0.1 \sin(2\pi\omega i)
\end{bmatrix}
, \text{for } k \text{ depicted in green}
\end{cases}
\end{equation*}
\\
\end{spacing}
\noindent where $\omega = \frac{1}{2000}$.

Regression data is white Gaussian with diagonal covariance matrices 
$R_{u,k} = \sigma_{u,k}^2 I_M$ with $M = 2$, $\sigma_{u,k}^2 \in [0.8, 1.2]$
and noise variance $\sigma_k^2 \in [0.15, 0.2]$. The step size of $\mu_k = 0.01$ and the forgetting factor $\nu_k = 0.01$ are set uniformly across the network. 
\textcolor{revision}{Note that we adopt a signal-to-noise ratio (SNR) of $5-10$ dB in our setup. However, the same results are generated if we choose low SNR values.}

\textcolor{revision}{\subsection{Strong Attacks}}
\textcolor{revision}{
\textcolor{revision}{We consider the strong attack model discussed in Sections \ref{the section of attack model} and \ref{the section of network attack model}}. The attacker aims at making the normal agents estimate a specific location selected by the attacker. In this evaluation, we select the attacker's targeted location to be $w_k^a = [0.5, 0.5]^\top$, and}
the attack parameters are selected uniformly across the compromised agents as $r_{k}^a = 0.002$. \textcolor{revision}{For non-stationary estimation, we select} $\theta_{k,i}^a = [0.1 \cos(2\pi\omega_a i),$ $0.1 \sin(2\pi\omega_a i)]^\top$, $\Delta \theta_{k,i}^a = [-0.2 \pi \omega_a \sin(2\pi\omega_a i),$ $0.2 \pi \omega_a \cos(2\pi\omega_a i)]^\top$, where $\omega_a = \frac{1}{2000}$. 
Figure \ref{fig:without attack, after simulation} shows the network topology at the end of the simulation using DLMSAW with no attack for \textcolor{yellow}{both stationary and non-stationary tasks}. 
If the weights between agents $k$ and $l$ are such that $a_{lk}(i) < 0.01$ and $a_{kl}(i) < 0.01$, \textcolor{green}{then we remove the link between such nodes \textcolor{revision2}{from the network}. We observe that only the links between agents estimating the same target are kept, that is green nodes are connected with green nodes only, and blue nodes are connected with only blue ones, thus, illustrating the robustness of DLMSAW in multi-task networks.} 
\textcolor{revision}{\figref{fig: state convergence 1} and \figref{fig: state convergence 2} shows the estimation dynamics by DLMSAW for the targets' locations $\bm{w}_{k,i}(1)$ and $\bm{w}_{k,i}(2)$ for every agent $k$ and iteration $i$ from 0 to $5000$ under no attack. Here $\bm{w}_{k,i}(1)$ and $\bm{w}_{k,i}(2)$ are the first and second \textcolor{revision2}{element} of the estimate respectively, that is $\bm{w}_{k,i} = [ \bm{w}_{k,i}(1), \bm{w}_{k,i}(2)]^\top$. It is shown that the two groups of nodes converge to their goal state.} 

\textcolor{revision}{
\figref{fig: initial topology with attacker} shows the initial network topology with compromised nodes.}
There are four compromised nodes (red nodes with yellow centres) in the network.
Figure \ref{fig:attacked after simulation} shows the network topology at the end of DLMSAW in the case of \textcolor{revision}{a strong attack}. 
\textcolor{green}{All red nodes are the normal agents converging to $w_k^a$. We observe that neighbors of a compromised node communicate only with the compromised node, and not with any other node in the network. As a result, compromised nodes successfully drive all of their neighbors to desired states $w_k^a$ as discussed in Section \ref{the section of network attack model}.}
\textcolor{revision}{\figref{fig: state convergence 3} and \figref{fig: state convergence 4} shows the estimation dynamics by DLMSAW for the targets' location $\bm{w}_{k,i}(1)$ and $\bm{w}_{k,i}(2)$ for every agent $k$ and iteration from 0 to $5000$ under attack.
The attacked nodes in the figure refer to the immediate neighbors of the compromised nodes. 
It is shown that all the immediate neighbors of compromised nodes are driven to converge to $w_k^a$ whereas all the other normal nodes converge to their original goal states.}

\begin{figure*}
    \setlength{\abovecaptionskip}{0.1cm}  
    \begin{subfigure}[t]{0.23\textwidth}
    \centering
        \includegraphics[width=0.9\textwidth, trim=1.7cm 1.7cm 1.7cm 1.7cm]{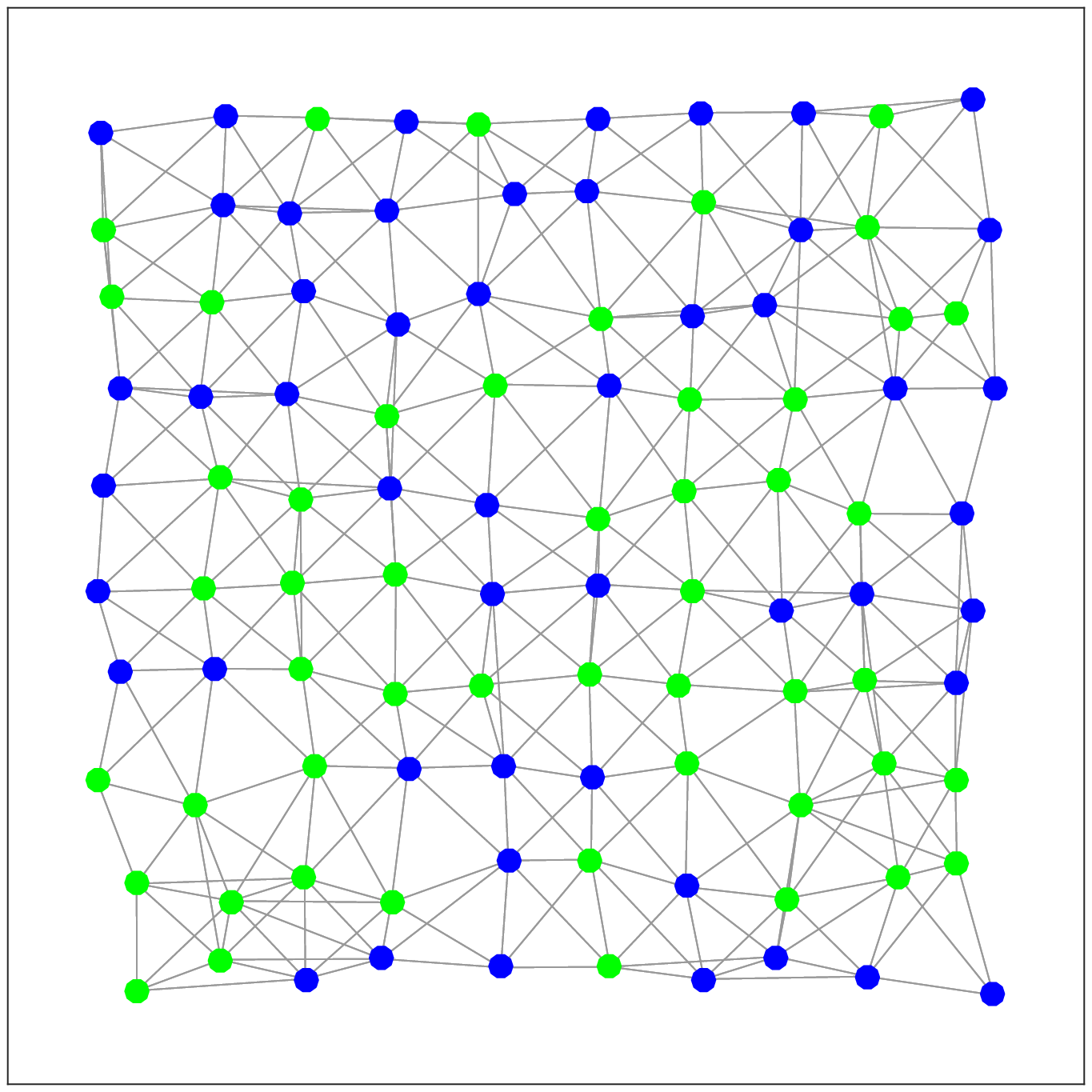}
        \caption{Initial network topology (no compromised nodes)}
        \label{fig: initial stationary network topology}
    \end{subfigure}
    ~ 
    \begin{subfigure}[t]{0.23\textwidth}
    \centering
        \includegraphics[width=0.9\textwidth, trim=1.7cm 1.7cm 1.7cm 1.7cm]{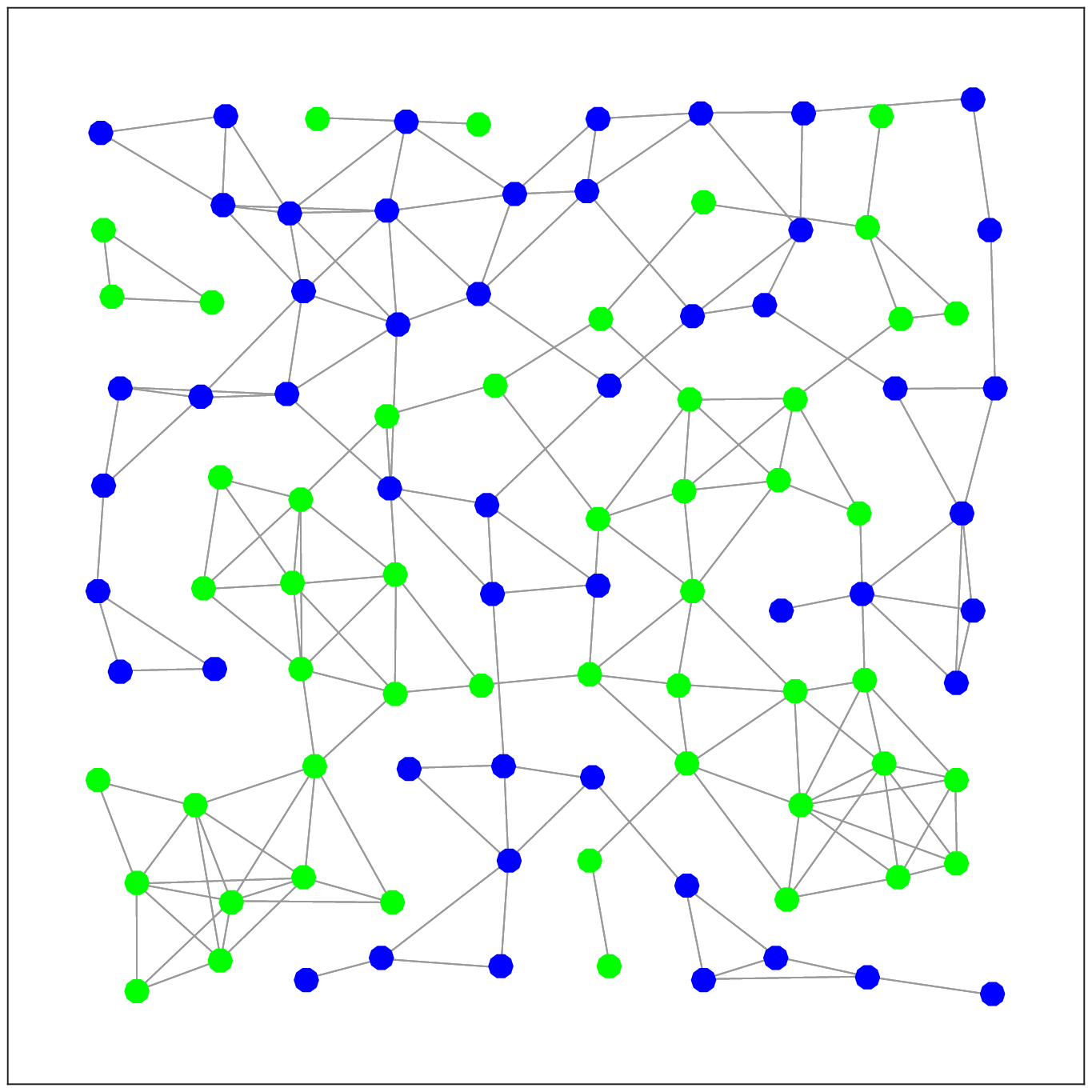}
        \caption{At the end of DLMSAW with no attack}
        \label{fig:without attack, after simulation}
    \end{subfigure}
    ~
     \begin{subfigure}[t]{0.23\textwidth}
    \centering
        \includegraphics[width=0.9\textwidth, trim=1.7cm 1.7cm 1.7cm 1.7cm]{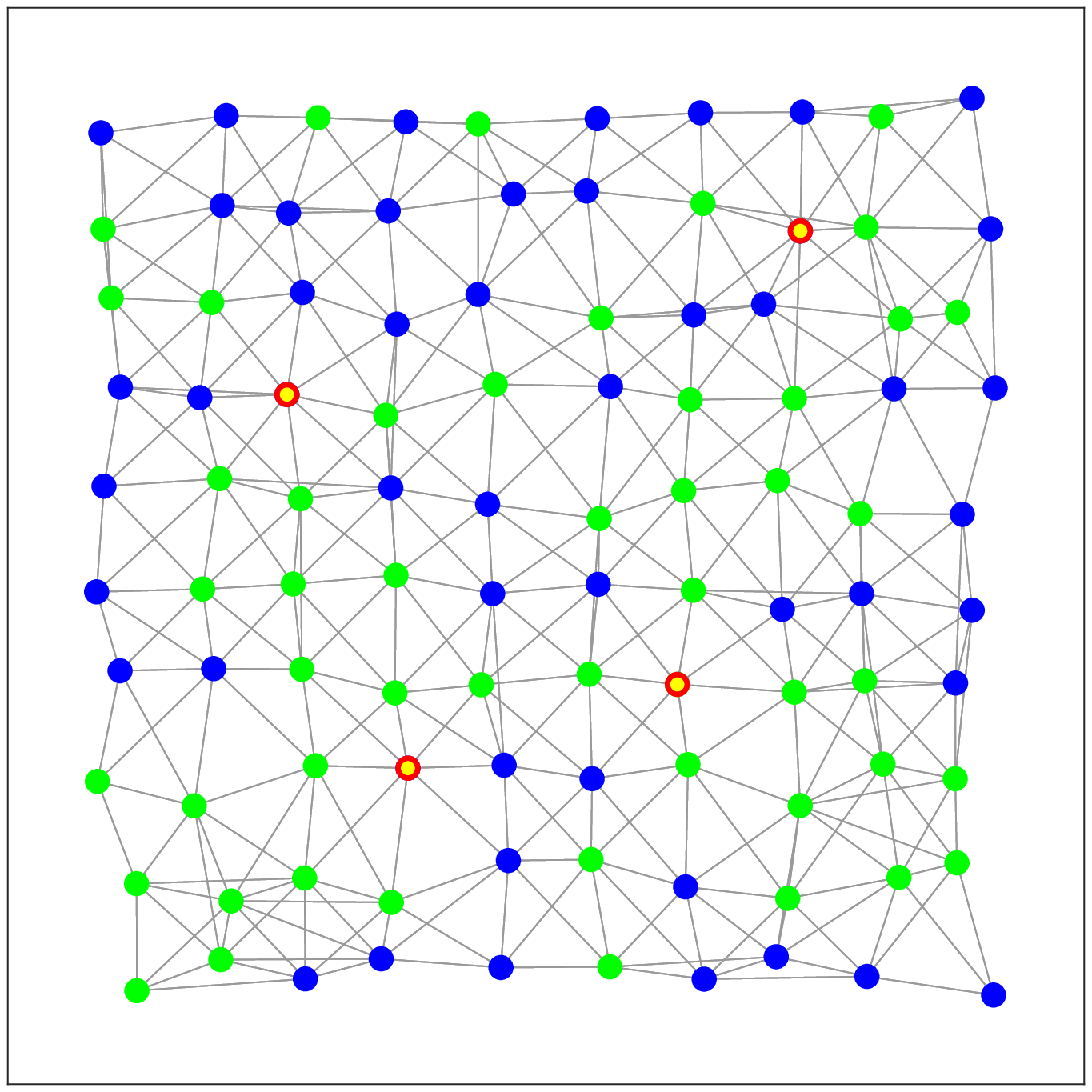}
        \caption{Initial network topology (with compromised nodes)}
        \label{fig: initial topology with attacker}
    \end{subfigure}
~
    \begin{subfigure}[t]{0.23\textwidth}
    \centering
        \includegraphics[width=0.9\textwidth, trim=1.7cm 1.7cm 1.7cm 1.7cm]{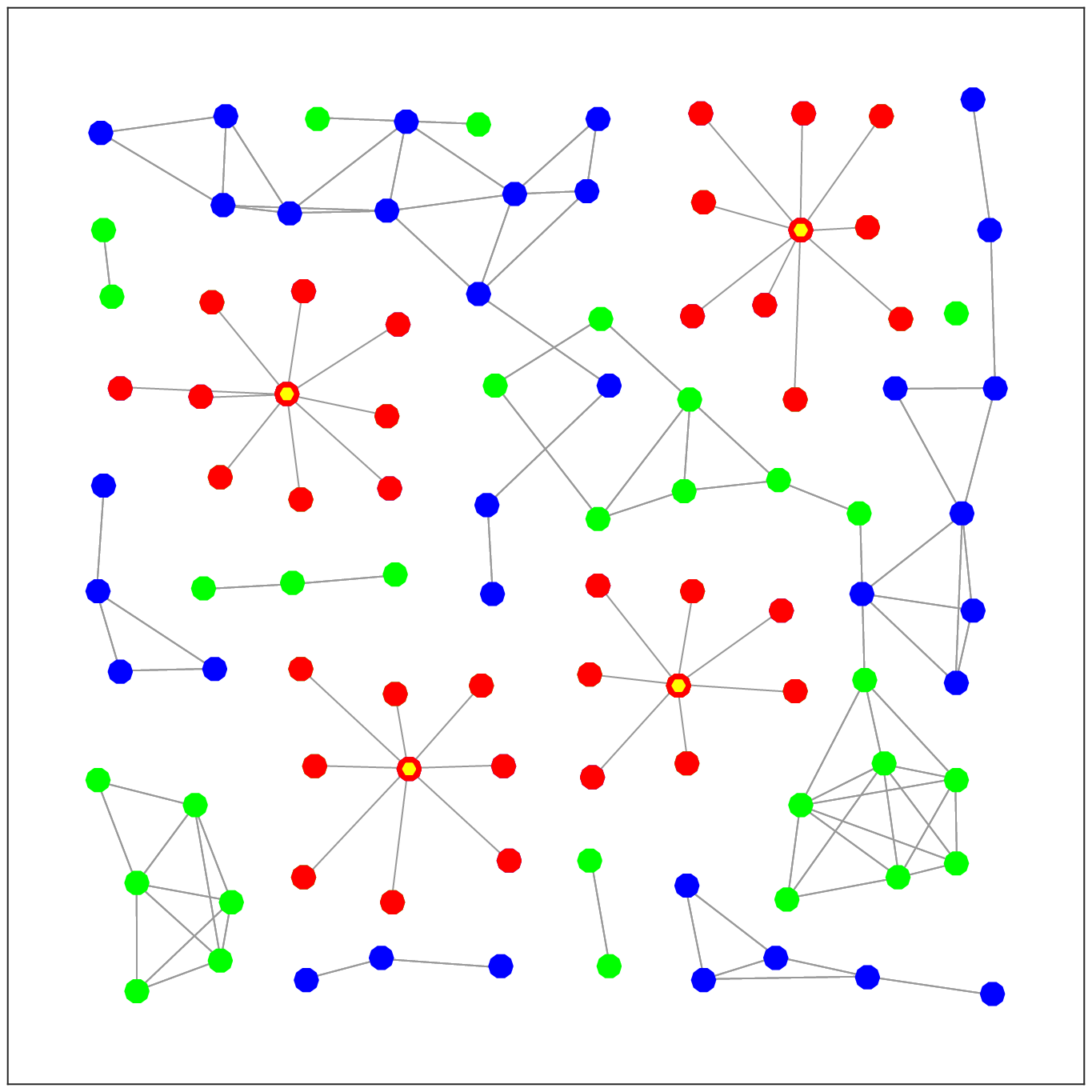}
        \caption{At the end of DLMSAW under strong attack}
        \label{fig:attacked after simulation}
    \end{subfigure}
    \caption{\textcolor{green}{Network topologies in the case of DLMSAW algorithm.}}\label{fig:topology}
\end{figure*}

\begin{figure*}
    \centering
    \setlength{\abovecaptionskip}{0.1cm}  
    \begin{subfigure}[t]{0.235\textwidth}
    \centering
        \includegraphics[width=0.9\textwidth, trim=1.5cm 1.5cm 1.5cm 1.5cm]{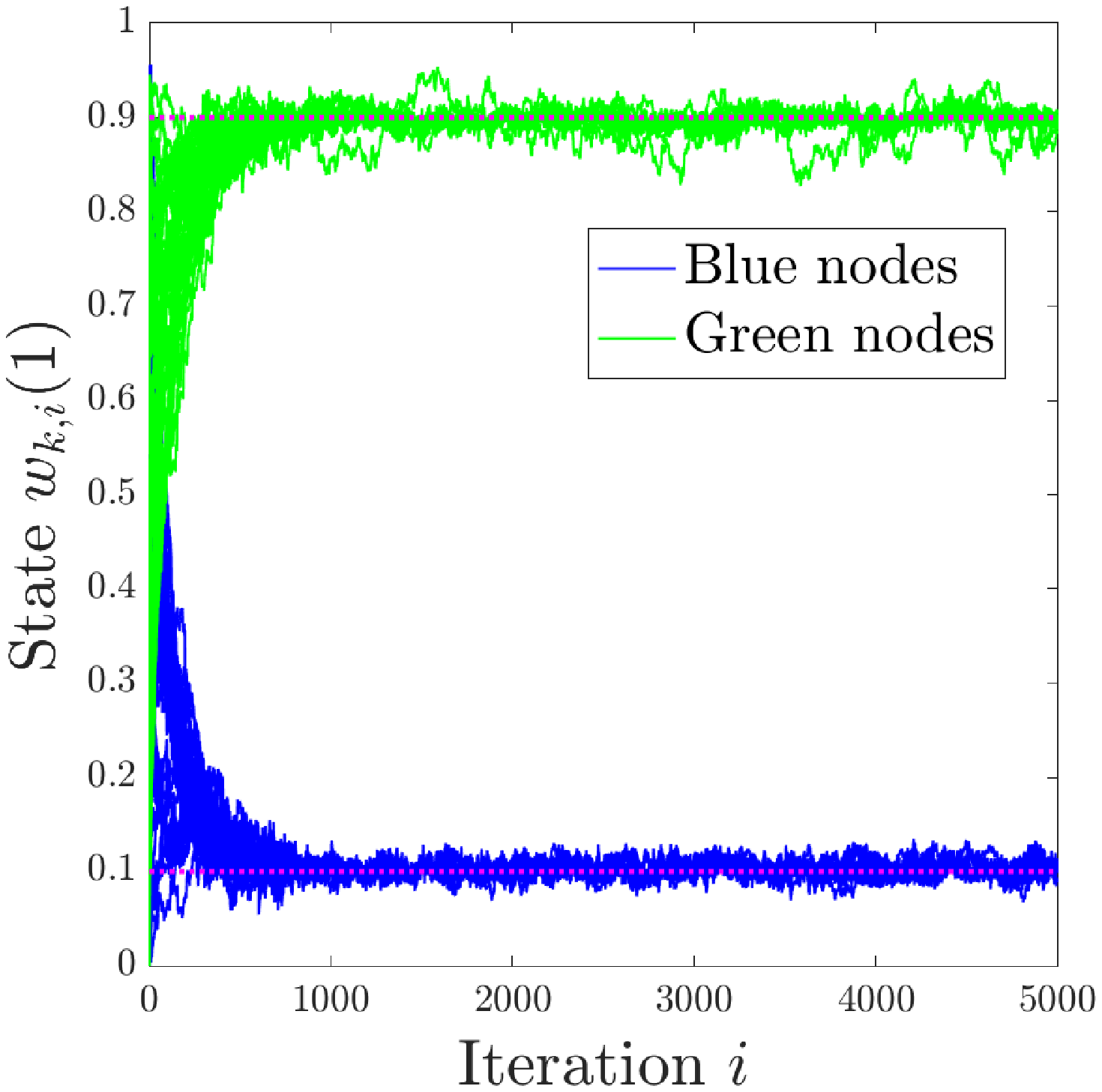}
        \vspace{0.3cm}
        \caption{$\bm{w}_{k,i}(1)$ (under no attack)}
        \label{fig: state convergence 1}
    \end{subfigure}
    ~
    \begin{subfigure}[t]{0.235\textwidth}
    \centering
        \includegraphics[width=0.9\textwidth,  trim=1.5cm 1.5cm 1.5cm 1.5cm]{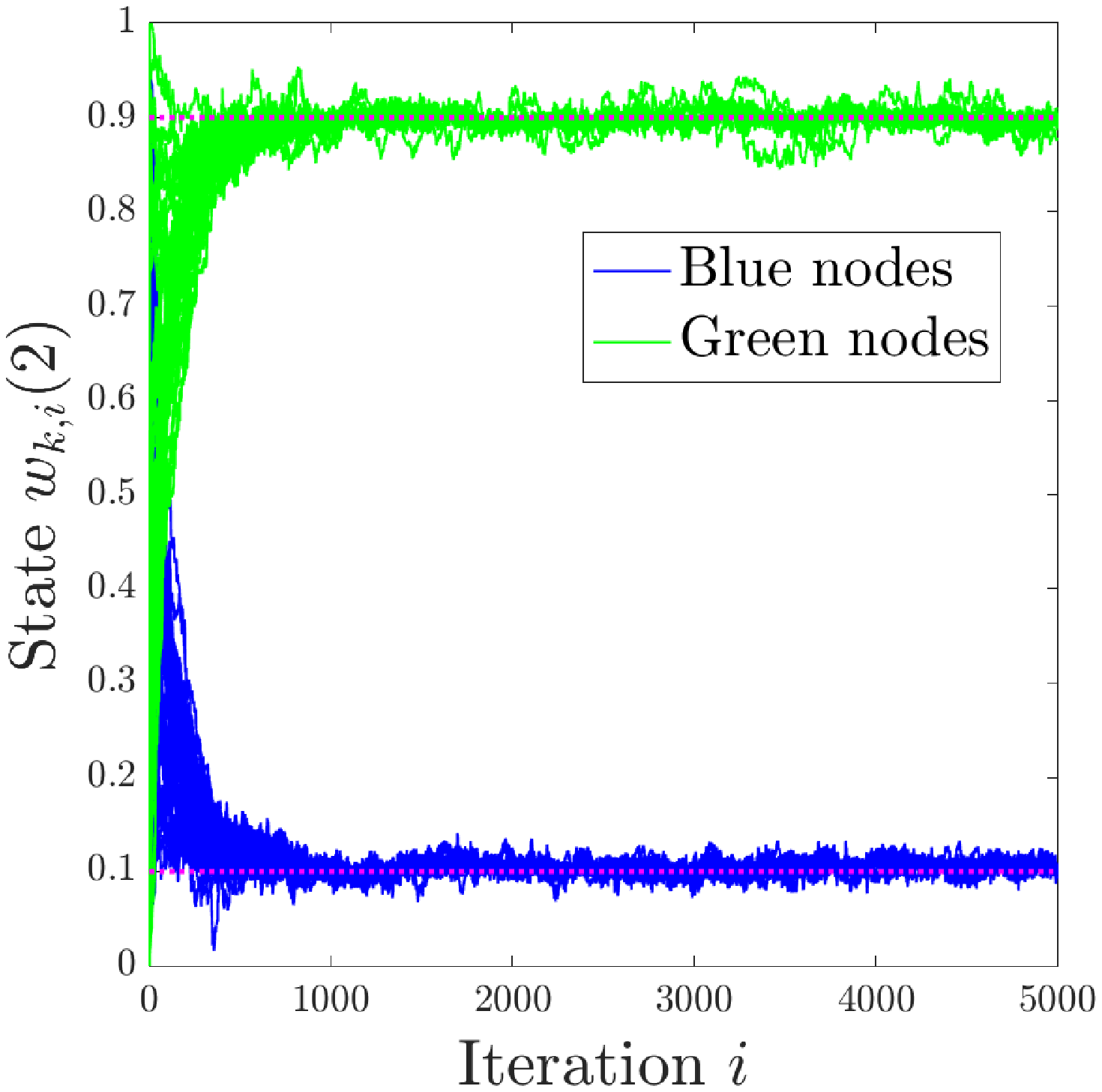}
         \vspace{0.3cm}
        \caption{$\bm{w}_{k,i}(2)$ (under no attack)}
        \label{fig: state convergence 2}
    \end{subfigure}
    ~
    \begin{subfigure}[t]{0.235\textwidth}
    \centering
        \includegraphics[width=0.9\textwidth, trim=1.5cm 1.5cm 1.5cm 1.5cm]{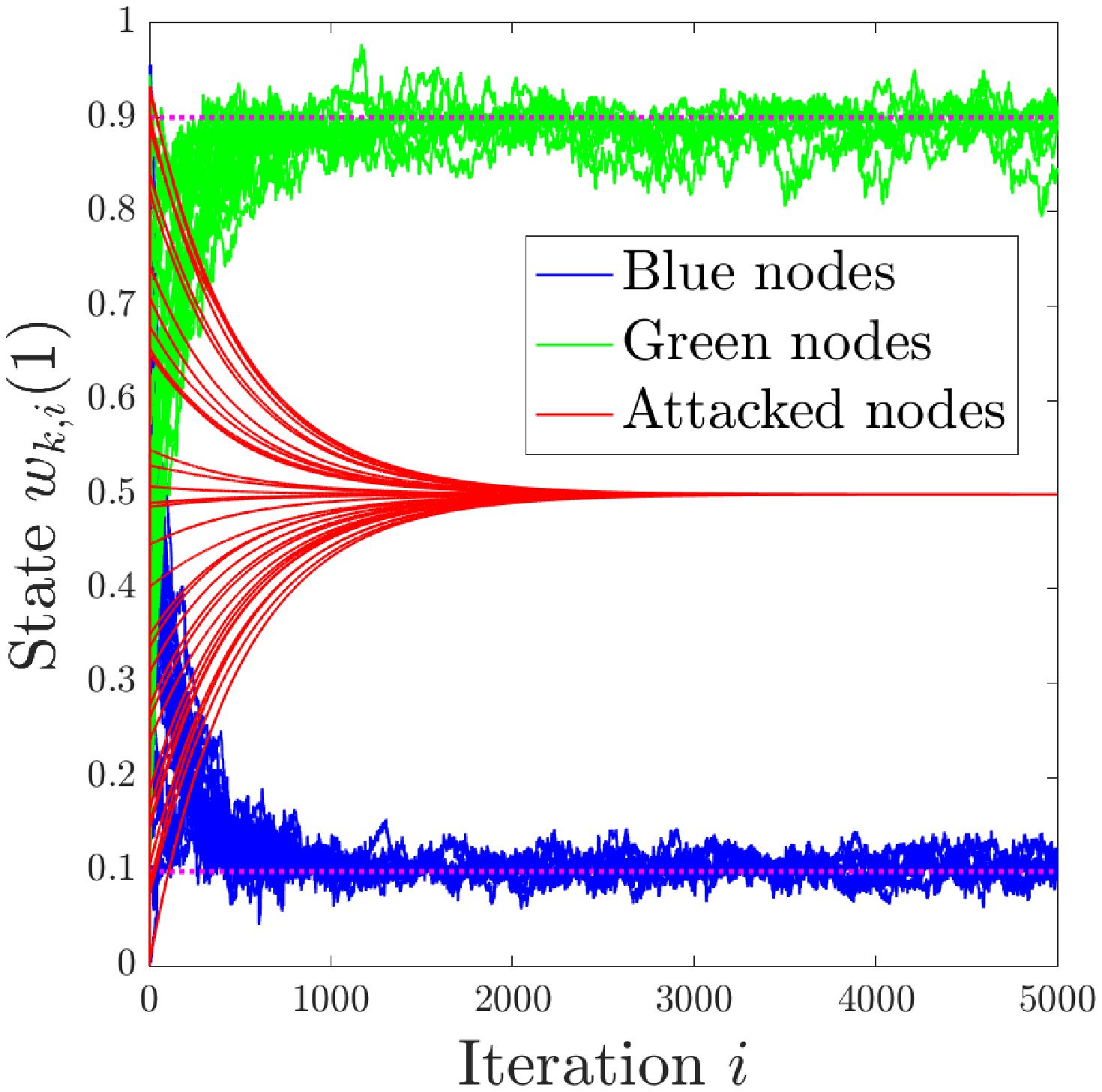}
         \vspace{0.3cm}
        \caption{$\bm{w}_{k,i}(1)$ (under strong attack)}
        \label{fig: state convergence 3}
    \end{subfigure}
    ~
        \begin{subfigure}[t]{0.235\textwidth}
    \centering
        \includegraphics[width=0.9\textwidth, trim=1.5cm 1.5cm 1.5cm 1.5cm]{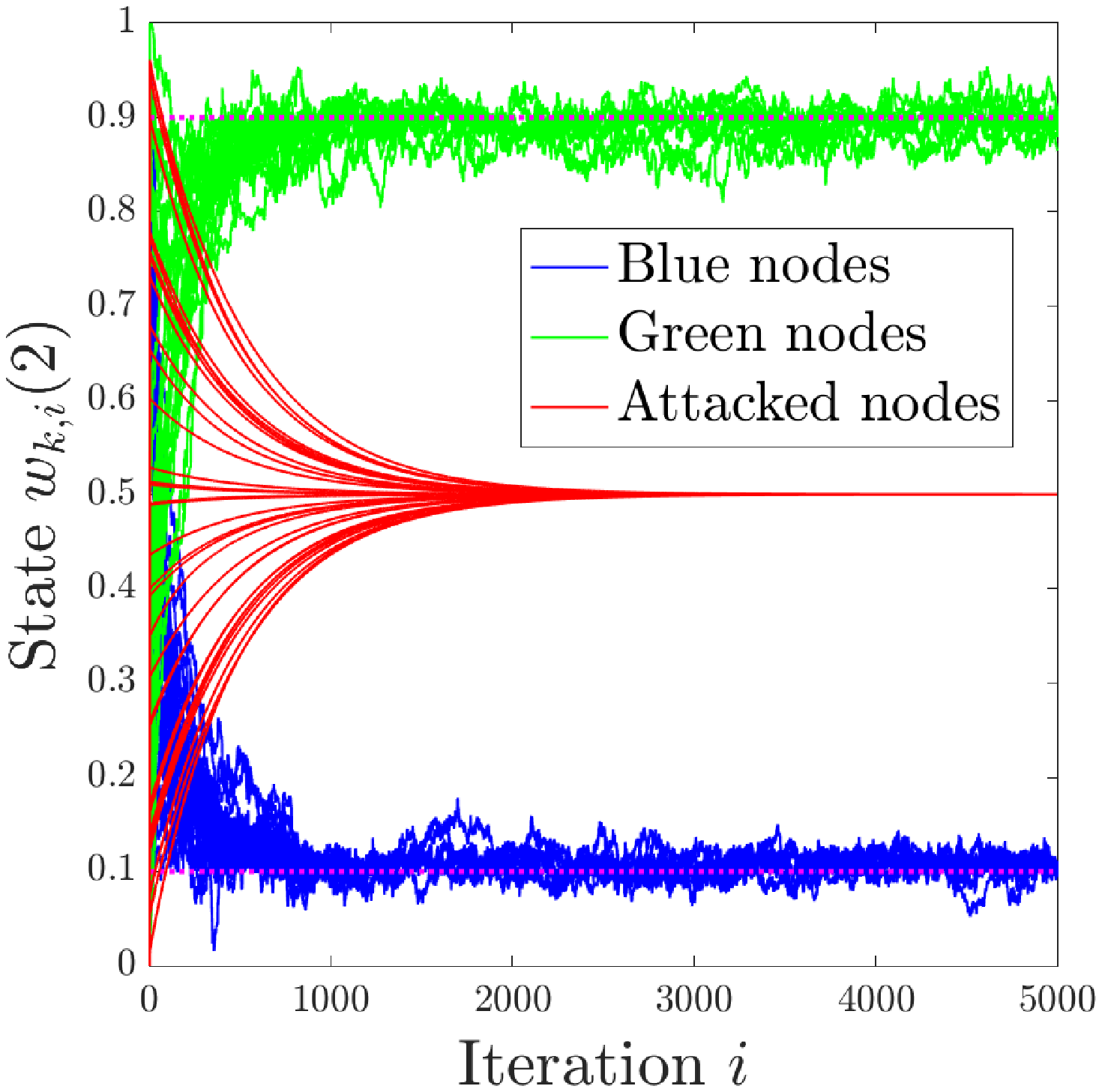}
        \vspace{0.3cm}
        \caption{$\bm{w}_{k,i}(2)$ (under  strong attack)}
        \label{fig: state convergence 4}
    \end{subfigure}
    \caption{\textcolor{green}{Estimation dynamics for stationary target localization by DLMSAW.}}\label{fig:state convergence}
\end{figure*}

Figure \ref{fig: stationary_convergence_trends} shows the convergence of nodes under attack (stationary targets). We note at around 3000 iterations, the difference between the average state of nodes under attack and the attacker's desired state $w_k^a$ becomes almost zero. This observation is also consistent with the result in \eqref{eq: attack convergence time}, as for $i=3000$ and $r_k^a = 0.002$, the value of $\epsilon$ turns out to be $0.0025$, which is indeed quite small and indicates the convergence of node's estimate to $w_k^a$. 

\textcolor{yellow}{Figure \ref{fig: non-stationary_convergence_trends} shows the average state dynamics of nodes under attack for non-stationary targets. \textcolor{yellow}{Since states are changing over time, we illustrate the dynamics of average states' changing with respect to the dynamics of attacker's selected state, instead of a convergence plot like \ref{fig: stationary_convergence_trends}.} Here, the $X$-coordinate denotes the first element of the estimation vector, i.e., $\bm{w}_{k,i}(1)$, and $Y$-coordinate denotes the second, i.e., $\bm{w}_{k,i}(2)$. At iteration $0$, the average state $\overline{w}_{k,i}$ of the nodes under attack is different than the attacker's desired state $w_{k,i}^a$. As the attack proceeds, $\overline{w}_{k,i}$ gradually converges towards $w_{k,i}^a$}, \textcolor{yellow}{which shows the effectiveness of attack for non-stationary state estimation.} 

\textcolor{green}{Figure \ref{fig:MSD} shows the \textcolor{revision2}{steady-state} MSD performance of DLMSAW and non-cooperative LMS. We observe that under no attack, cooperation indeed improves the \textcolor{revision2}{steady-state} MSD performance of DLMSAW. However, in the case of an attack, the \textcolor{revision2}{steady-state} MSD level of DLMSAW is quite high, whereas, the \textcolor{revision2}{steady-state} MSD level of non-cooperative LMS is barely affected by the attack.} 


\begin{figure*}
 \centering
\begin{minipage}[c]{0.5\linewidth}
    \centering
    \vspace{0cm} 
\setlength{\abovecaptionskip}{0.1cm}  
    \begin{subfigure}[t]{0.45\textwidth}
       \centering
\includegraphics[width=0.9\textwidth, trim=1.5cm 1.5cm 1.5cm 1.5cm]{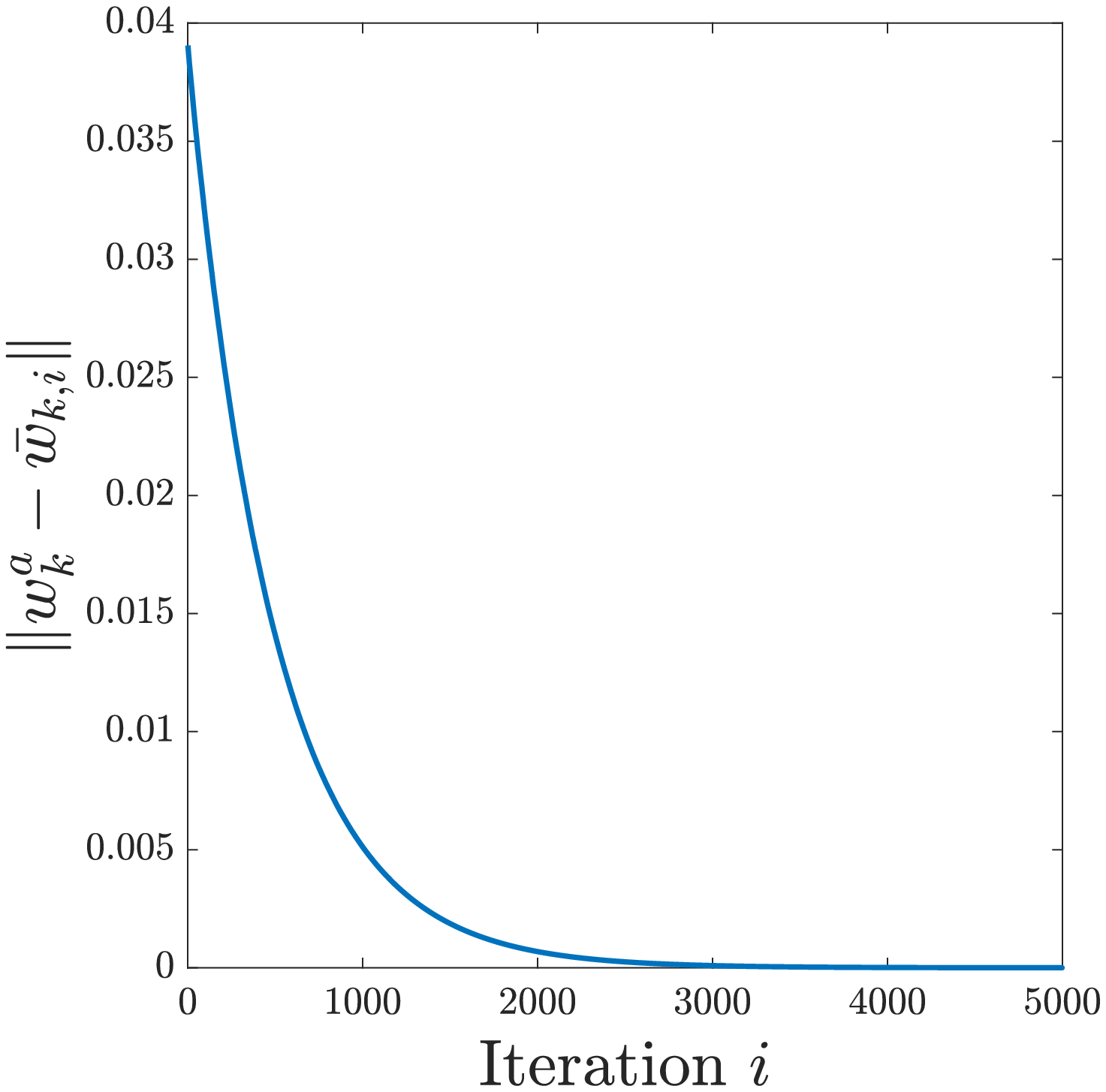}
\vspace{0.3cm}
\caption{Stationary targets}\label{fig: stationary_convergence_trends}
    \end{subfigure}
    ~ 
    \begin{subfigure}[t]{0.45\textwidth}
            \centering
        \includegraphics[width=0.9\textwidth, trim=1.5cm 1.5cm 1.5cm 1.5cm]{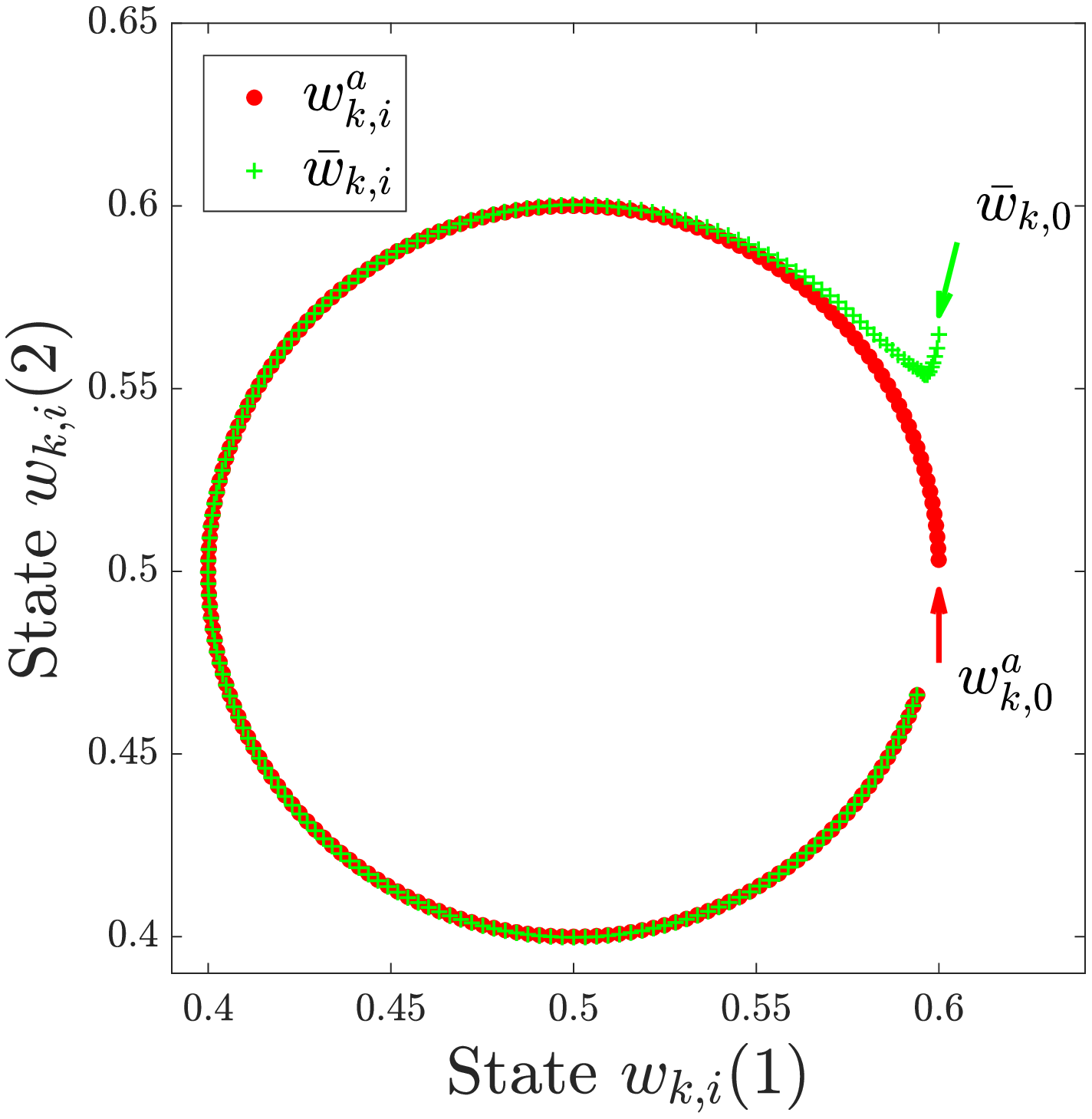}
        \vspace{0.3cm}
\caption{Non-stationary targets}\label{fig: non-stationary_convergence_trends}
    \end{subfigure}
    
    \caption{Average state dynamics of compromised nodes’ neighbors \\(under strong attack).}\label{fig:convergence dynamics}
     \end{minipage}%
\begin{minipage}[c]{0.5\linewidth}
    \centering
    \vspace{0cm} 
\setlength{\abovecaptionskip}{0.1cm}  
    \begin{subfigure}[t]{0.45\textwidth}
        \centering
        \includegraphics[width=0.9\textwidth, trim=1.5cm 1.5cm 1.5cm 1.5cm]{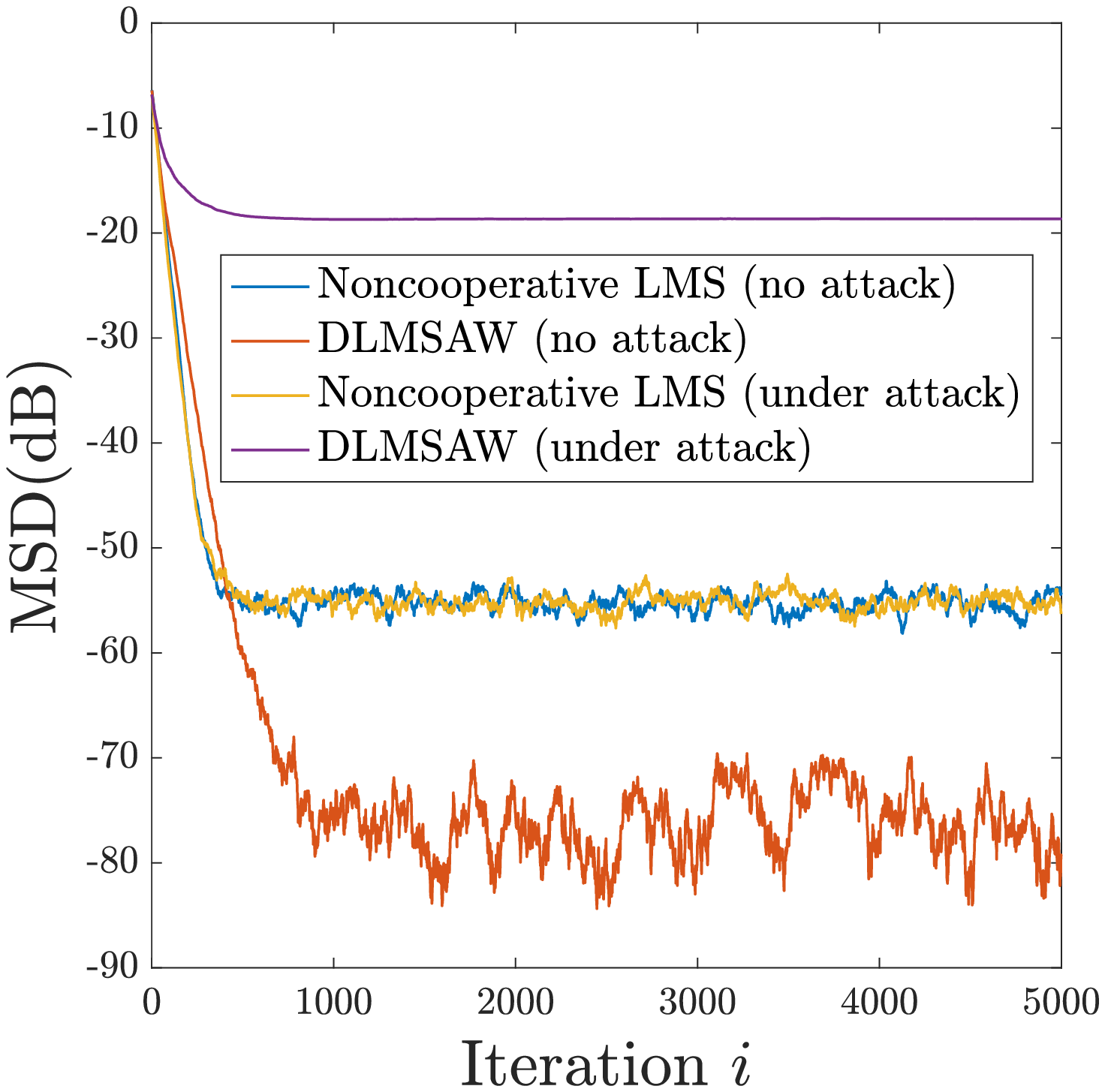}
        \vspace{0.3cm}
        \caption{Stationary targets}
        \label{fig: MSD level for stationary targets}
    \end{subfigure}
    ~ 
    \begin{subfigure}[t]{0.45\textwidth}
           \centering
\includegraphics[width=0.9\textwidth, trim=1.5cm 1.5cm 1.5cm 1.5cm]{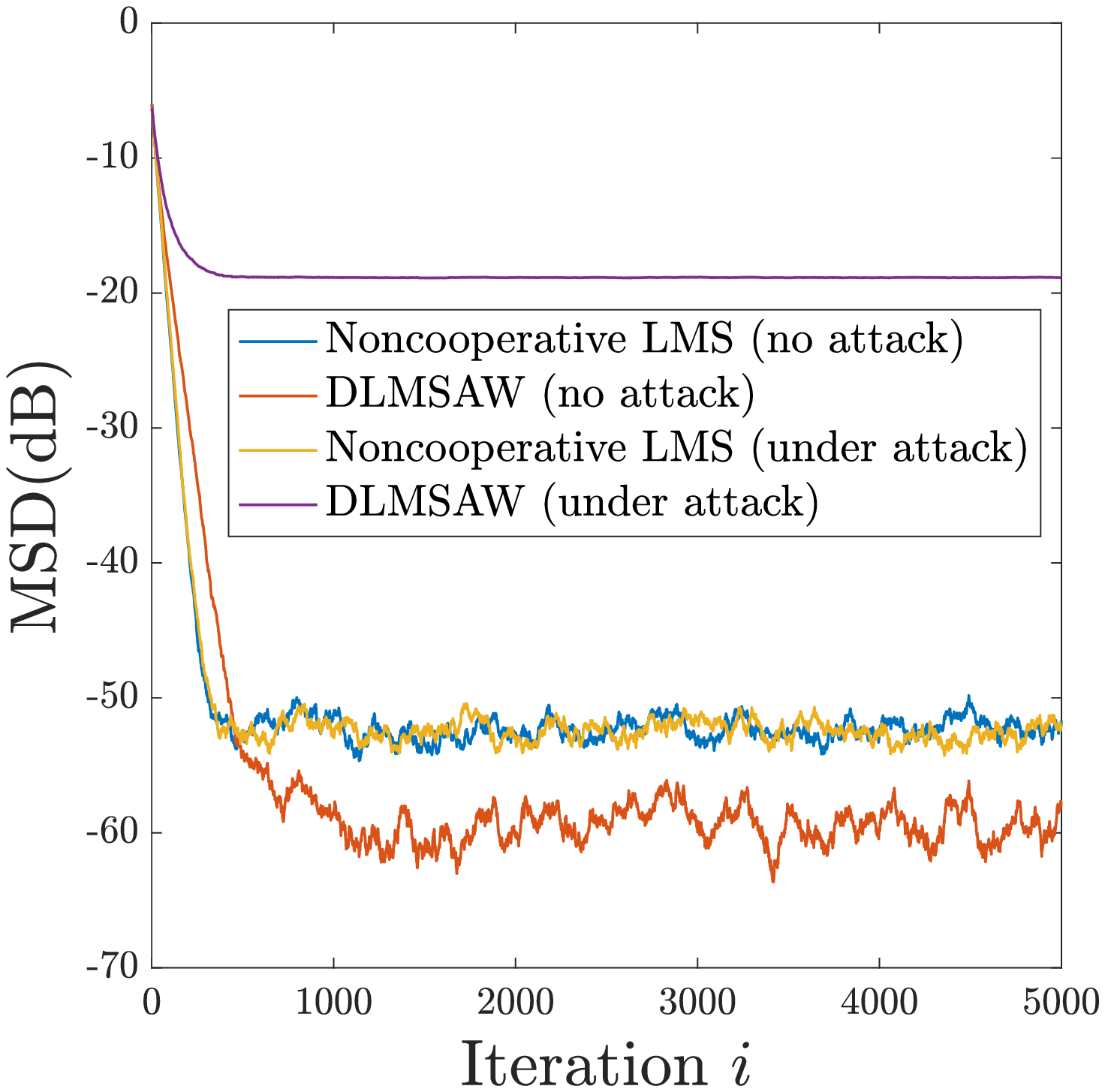}
\vspace{0.3cm}
\caption{Non-stationary targets}\label{fig:non-stationary MSD level}
    \end{subfigure}
    \caption{\textcolor{revision2}{Steady-state} MSD levels in non-cooperative LMS \\and DLMSAW (under strong attack).}\label{fig:MSD}
      \end{minipage}%
\end{figure*}

\textcolor{revision}{\subsection{Resilient Diffusion for Strong Attacks}}
To evaluate R-DLMSAW, we compute the cost $J_k(\bm{\psi}_{l,i})$ using the streaming data from the latest 100 iterations. 
We adopt uniform $F$ for every normal agent but it can be distinct for each  agent. 
R-DLMSAW behaves identical\textcolor{revision2}{ly} to DLMSAW at one extreme, that is when $F = 0$, and on the other extreme it behaves like a non-cooperative LMS algorithm, that is for large $F$. \textcolor{green}{We consider the same initial network as in Figure \ref{fig: initial stationary network topology} and consider an attack consisting of four compromised nodes as previously. Note that there is at most one compromised node in the neighborhood of a normal agent. Figure \ref{fig:R-DLMSAW different F} shows network topologies after executing R-DLMSAW for various values of $F$. Since there is at most one compromised node in the neighborhood of a normal agent, the selection of $F=1$ should be sufficient to guarantee that none of the normal nodes converge to attacker's desired states, which is indeed the case as indicated by the removal of all links between normal and compromised nodes in Figure \ref{fig:network_Flocal}. As we increase $F$, resilience against attack is certainly achieved, but at the same time the network becomes sparser as illustrated in Figures \ref{fig:network_Flocal_F=3} and \ref{fig:network_Flocal_F=5}. In the case of non-stationary state estimation, the resulting network topologies are similar, and hence, are not presented.}

\textcolor{revision}{\figref{fig:state convergence R-DLMSAW} shows the estimation dynamics by R-DLMSAW for the targets' location $\bm{w}_{k,i}(1)$ and $\bm{w}_{k,i}(2)$ for every agent $k$ and iteration $i$ from 0 to $5000$ under attack.
The attacked nodes in the figure refer to the immediate neighbors of the compromised nodes. Since there is at most one compromised node in a normal node's neighborhood, setting $F \geq 1$ will make R-DLMSAW algorithm resilient to attacks, which is demonstrated by the results from the figure. We also observe that by setting a smaller $F$ value, which is sufficient to to make the algorithm resilient, we achieve better estimation performance ($F=1$ has less noise than that of $F=5$).}


\figref{fig:MSD for three algorithms} shows the \textcolor{revision2}{steady-state} MSD level of the network for the three algorithms, that is, non-cooperative LMS, DLMSAW, and R-DLMSAW. \textcolor{green}{The simulation results validate claims in Section \ref{sec:resilient_diffusion}. We observe that in the presence of compromised nodes, DLMSAW performs the worst and has the highest \textcolor{revision2}{steady-state} MSD. Since there is at most one compromised node in the neighborhood of any normal node, the most appropriate value of $F$ for R-DLMSAW is 1. We note that \textcolor{revision2}{the steady-state} MSD is indeed minimum for $F=1$. As we increase $F$, \textcolor{revision2}{the steady-state} MSD also increases. In fact, for $F=5$, the performance of R-DLMSAW and non-cooperative LMS is almost the same as we expect.}

\begin{figure*}
    \centering
    \setlength{\abovecaptionskip}{0.1cm}  
    \begin{subfigure}[t]{0.235\textwidth}
\centering
\includegraphics[width=0.9\textwidth, trim=1.7cm 1.7cm 1.7cm 1.7cm]{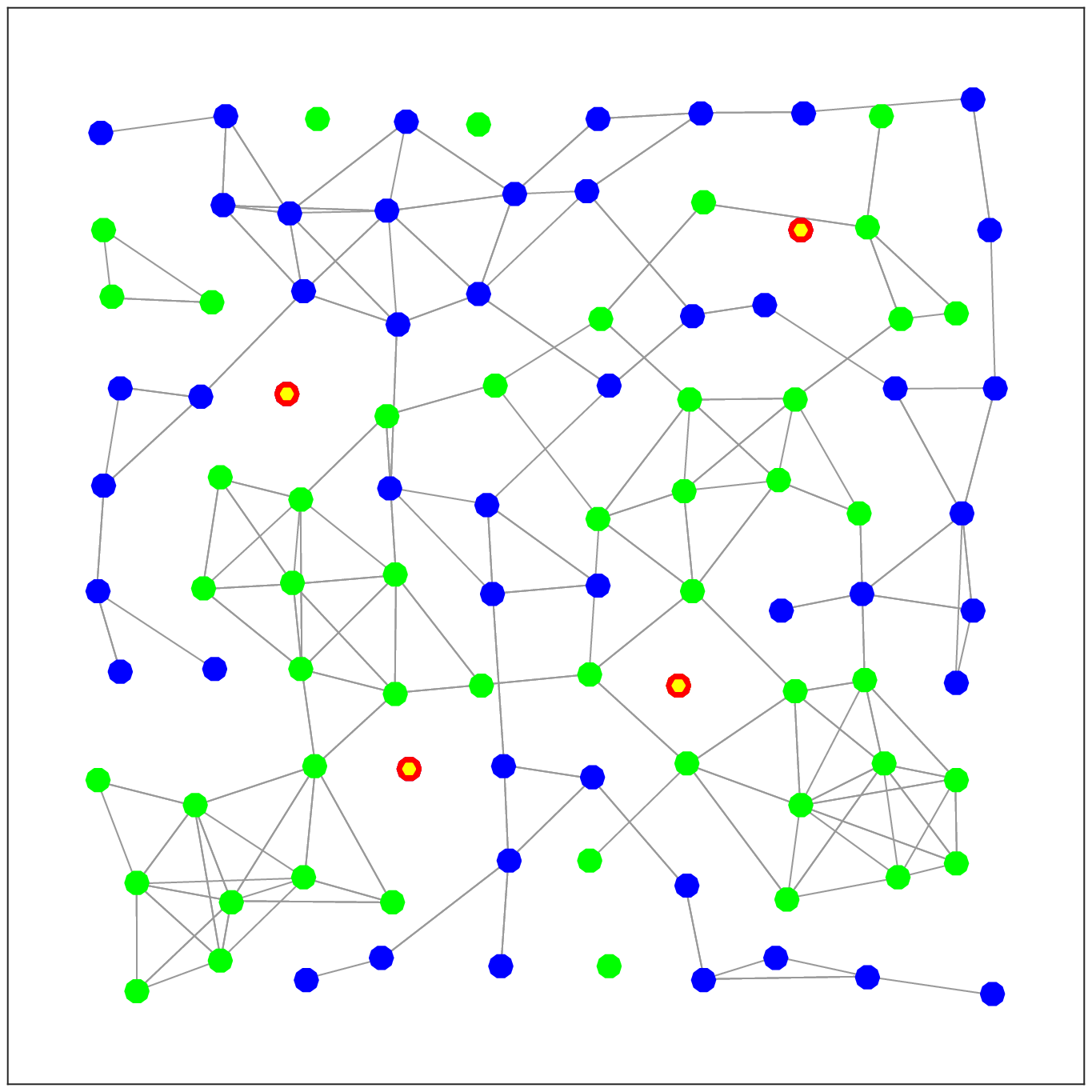}
\caption{$F = 1$}\label{fig:network_Flocal}
    \end{subfigure}
    ~ 
    \begin{subfigure}[t]{0.235\textwidth}
    \centering
\includegraphics[width=0.9\textwidth, trim=1.7cm 1.7cm 1.7cm 1.7cm]{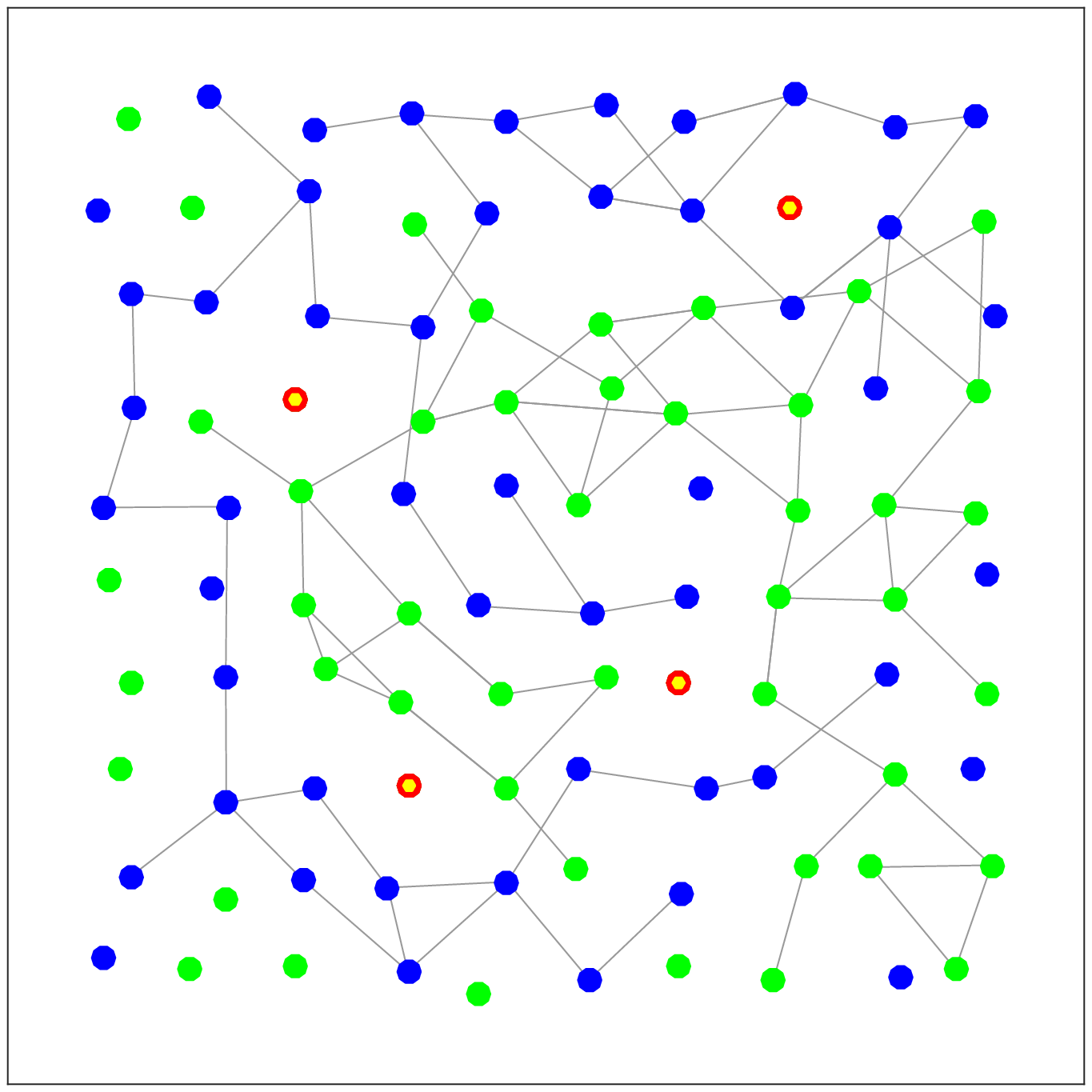}
\caption{$F = 3$}\label{fig:network_Flocal_F=3}
    \end{subfigure}
    ~ 
    \begin{subfigure}[t]{0.235\textwidth}
\centering
\includegraphics[width=0.9\textwidth, trim=1.7cm 1.7cm 1.7cm 1.7cm]{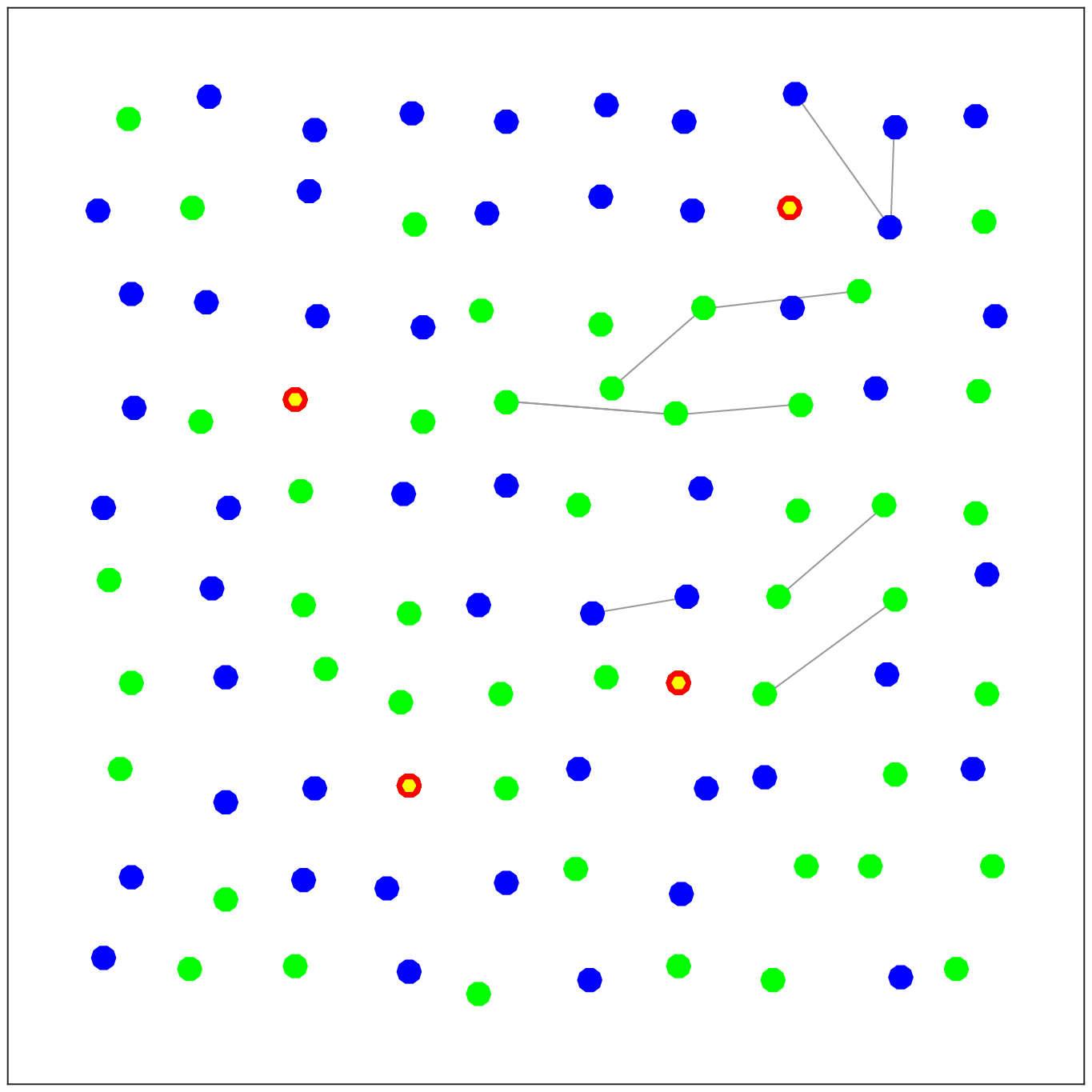}
\caption{$F = 5$}\label{fig:network_Flocal_F=5}
    \end{subfigure}
    \caption{Network topologies at the end of R-DLMSAW under strong attack (stationary targets) for various values of $F$.}\label{fig:R-DLMSAW different F}
\end{figure*}

\begin{figure*}
    \centering
    \setlength{\abovecaptionskip}{0.1cm}  
    \begin{subfigure}[t]{0.235\textwidth}
    \centering
        \includegraphics[width=0.9\textwidth, trim=1.5cm 1.5cm 1.5cm 1.5cm]{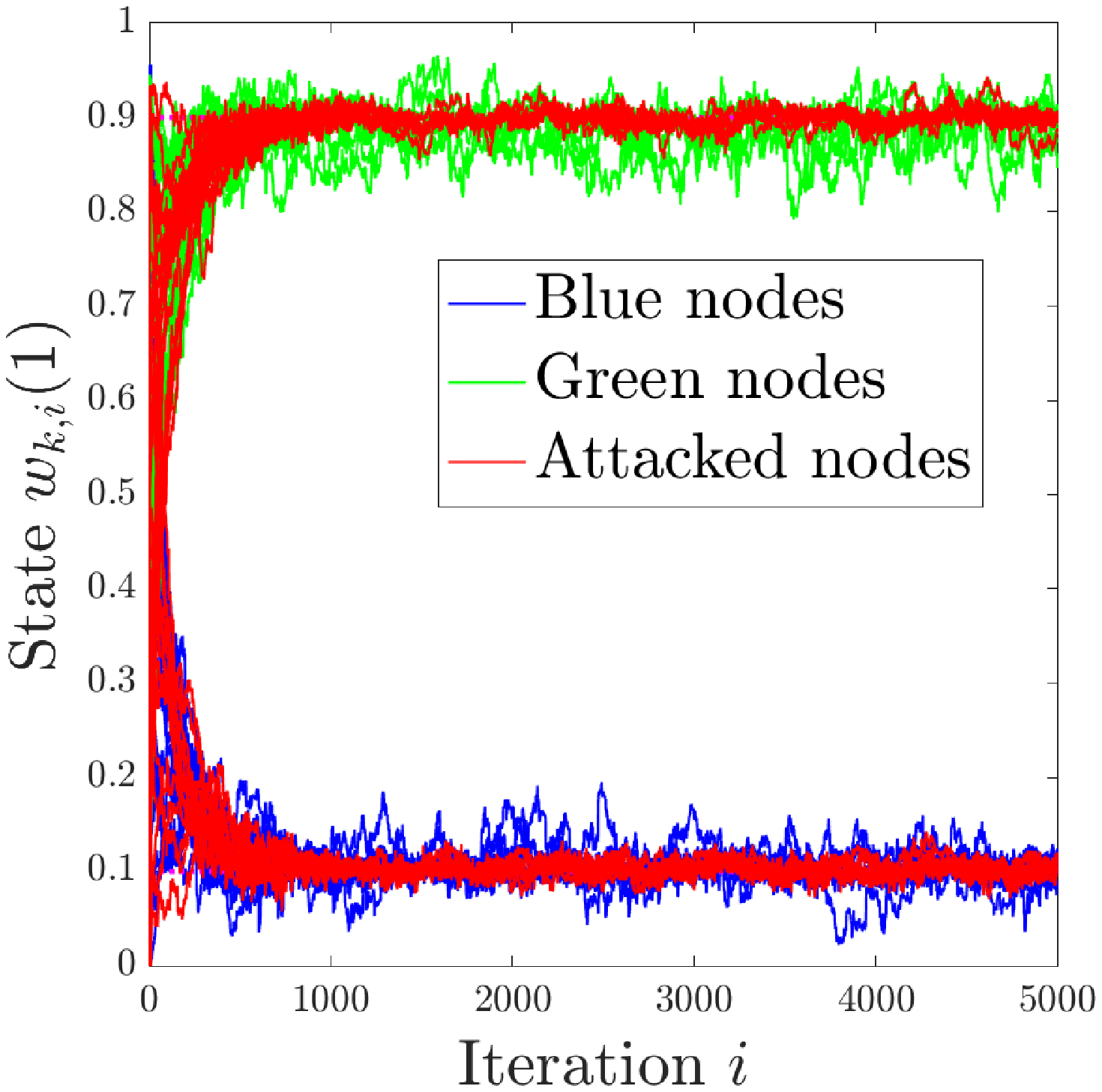}
         \vspace{0.3cm}
        \caption{$\bm{w}_{k,i}(1)$ ($F=1$)}
        \label{fig: state convergence R-DLMSAW 1}
    \end{subfigure}
    ~ 
    \begin{subfigure}[t]{0.235\textwidth}
    \centering
        \includegraphics[width=0.9\textwidth,  trim=1.5cm 1.5cm 1.5cm 1.5cm]{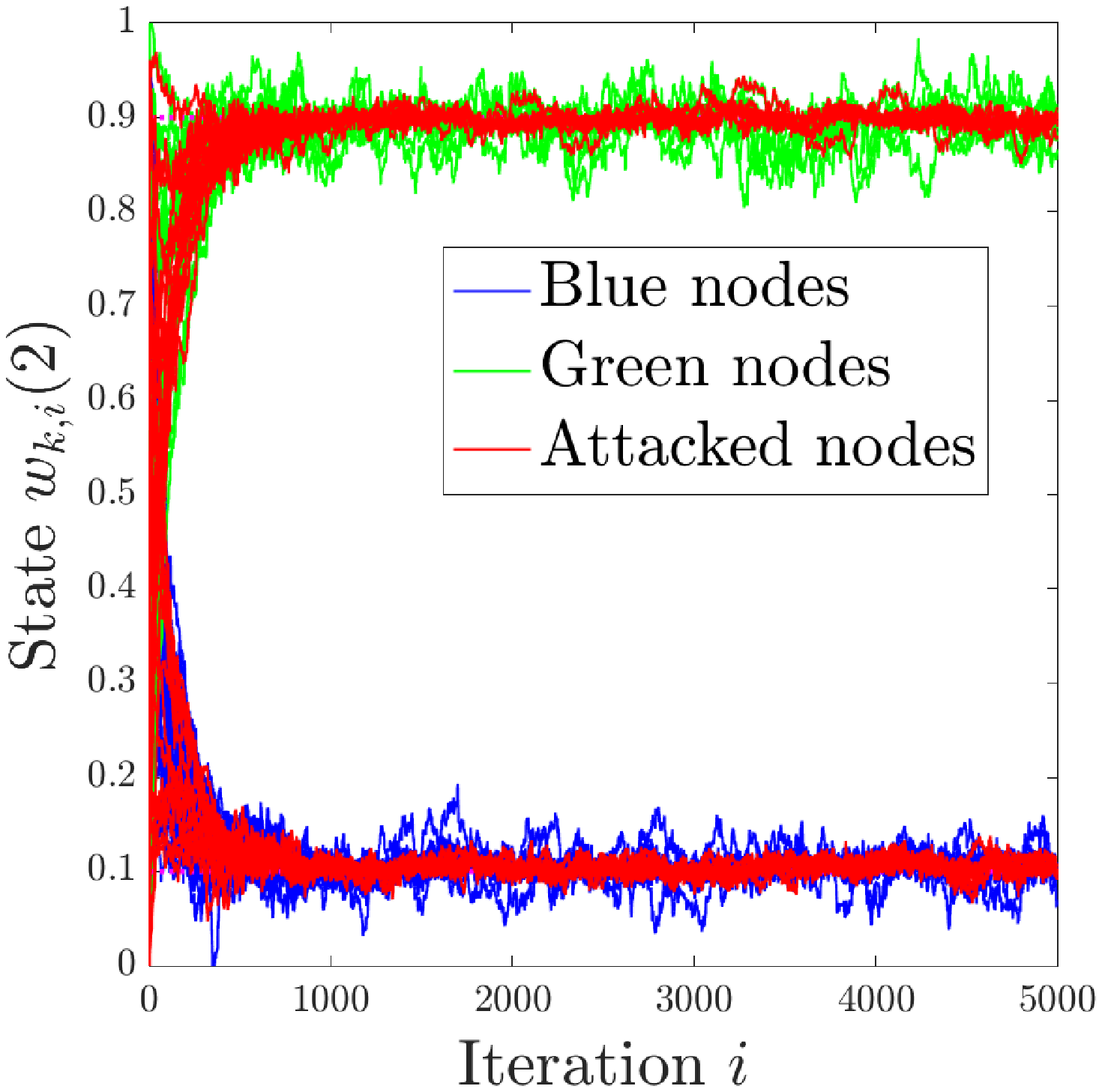}
         \vspace{0.3cm}
        \caption{$\bm{w}_{k,i}(2)$ ($F=1$)}
        \label{fig: state convergence R-DLMSAW 2}
    \end{subfigure}
~
    \begin{subfigure}[t]{0.235\textwidth}
    \centering
        \includegraphics[width=0.9\textwidth, trim=1.5cm 1.5cm 1.5cm 1.5cm]{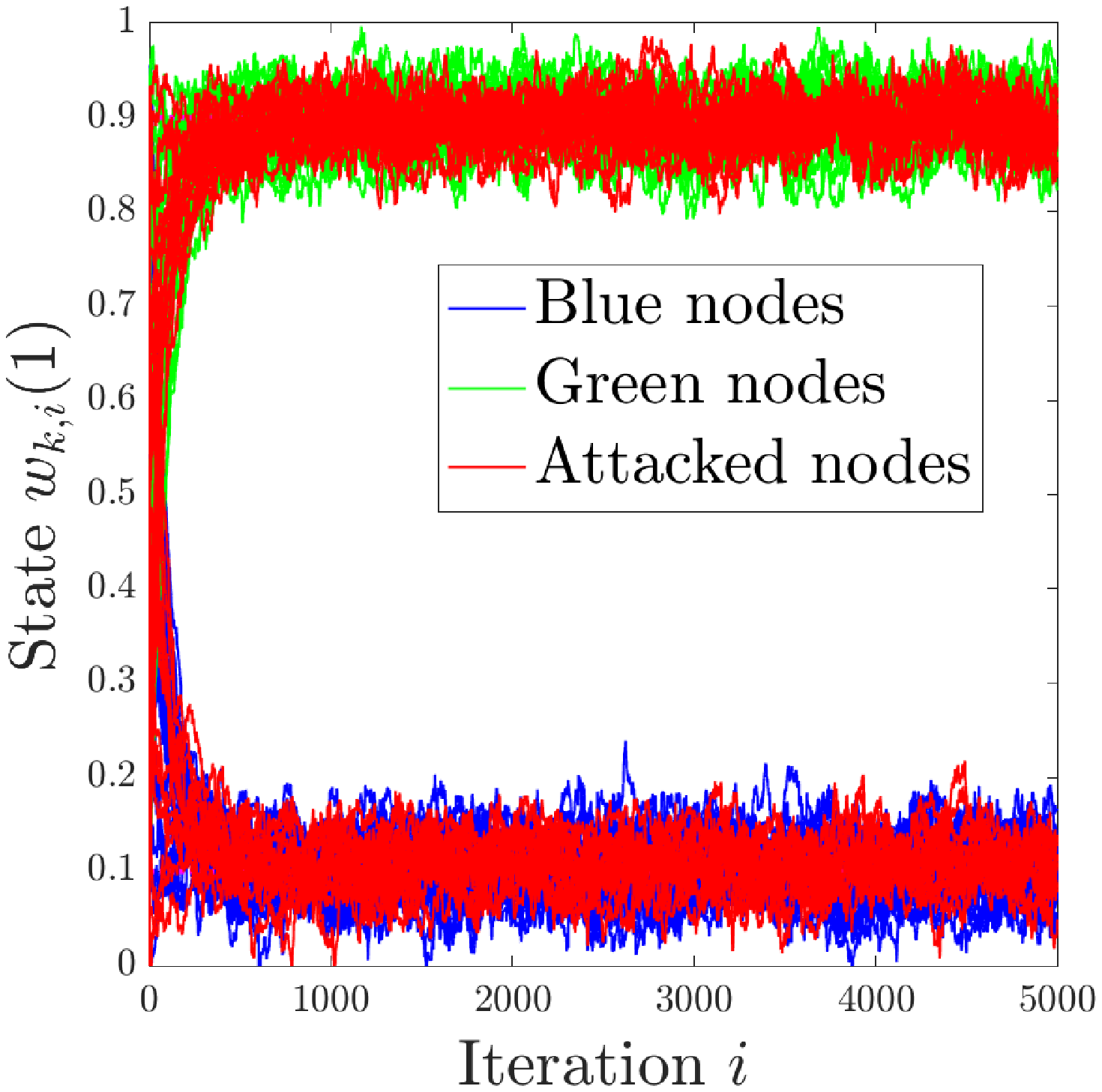}
         \vspace{0.3cm}
        \caption{$\bm{w}_{k,i}(1)$ ($F=5$)}
        \label{fig: state convergence R-DLMSAW 3}
    \end{subfigure}
    ~
        \begin{subfigure}[t]{0.235\textwidth}
    \centering
        \includegraphics[width=0.9\textwidth, trim=1.5cm 1.5cm 1.5cm 1.5cm]{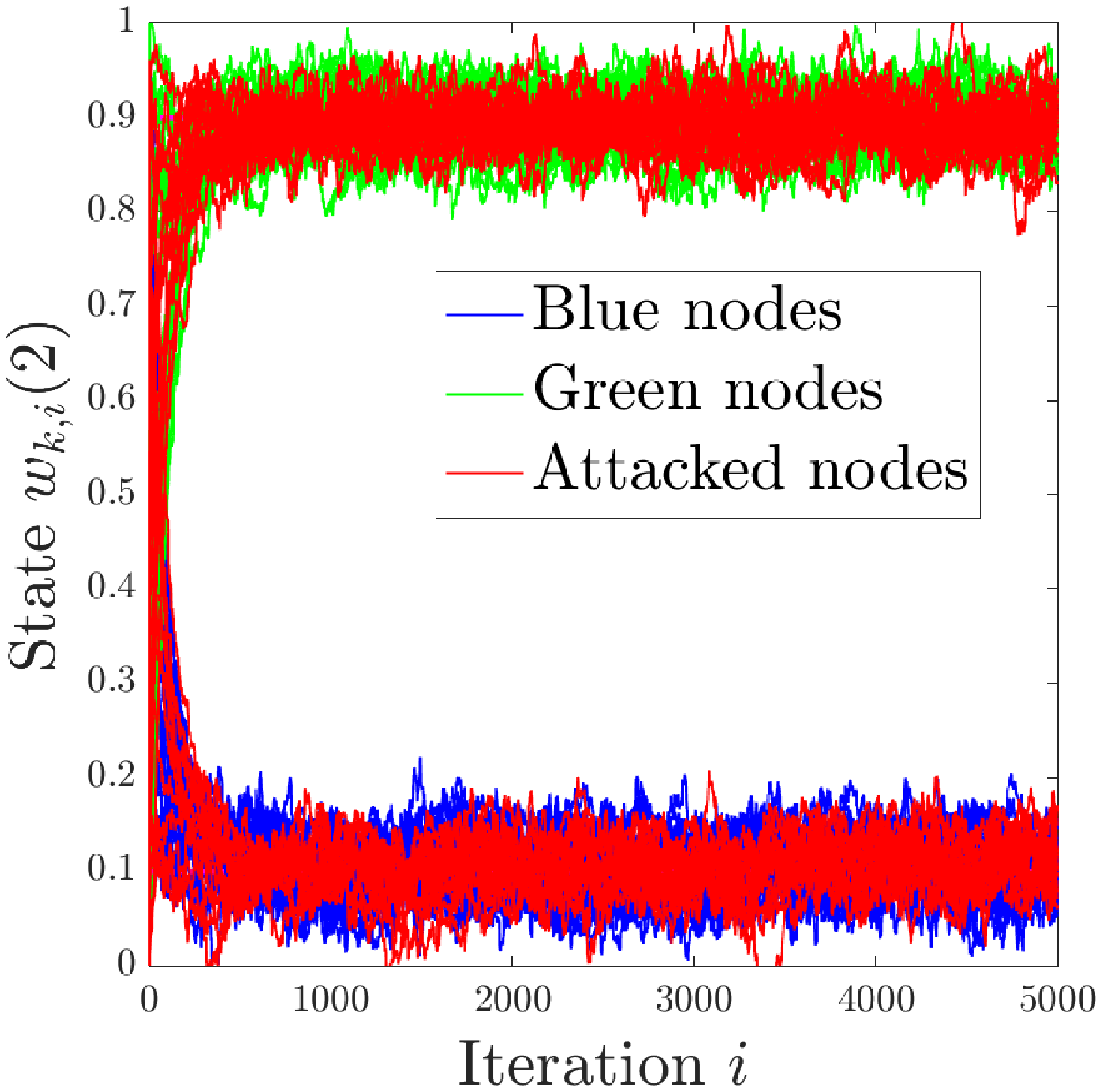}
        \vspace{0.3cm}
        \caption{$\bm{w}_{k,i}(2)$ ($F=5$)}
        \label{fig: state convergence R-DLMSAW 4}
    \end{subfigure}
    \caption{\textcolor{green}{Estimation dynamics for stationary target localization by R-DLMSAW under strong attack.}}\label{fig:state convergence R-DLMSAW}
\end{figure*}

\begin{figure*}
    \centering
    \vspace{0cm} 
\setlength{\abovecaptionskip}{0.1cm}  
    \begin{subfigure}[t]{0.25\textwidth}
        \centering
        \includegraphics[width=0.9\textwidth, trim=1.5cm 1.5cm 1.5cm 1.5cm]{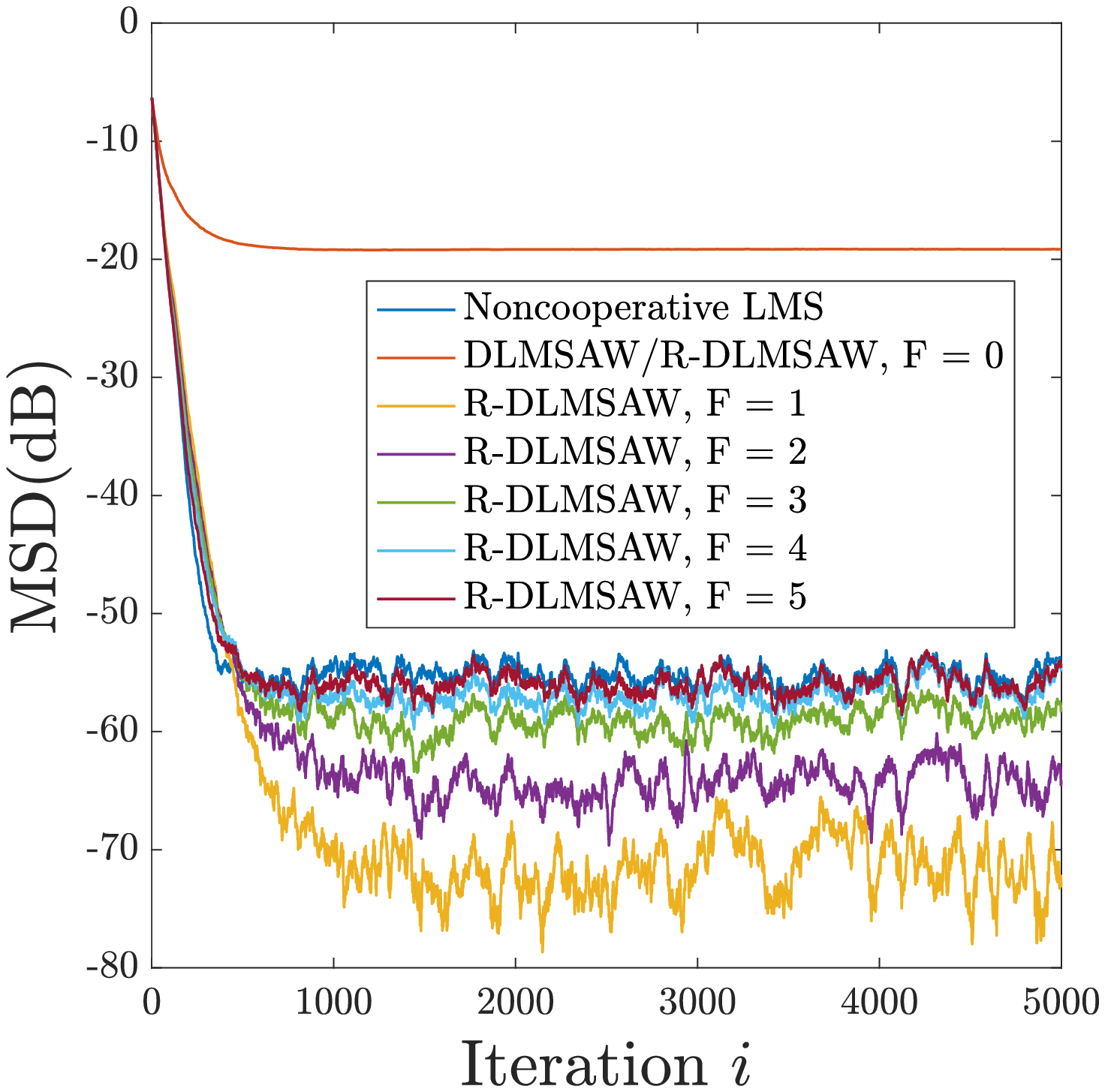}
        \vspace{0.3cm}
\caption{Stationary targets}\label{fig: MSD_Flocal}
    \end{subfigure}
    ~ 
    \begin{subfigure}[t]{0.25\textwidth}
           \centering
\includegraphics[width=0.9\textwidth, trim=1.5cm 1.5cm 1.5cm 1.5cm]{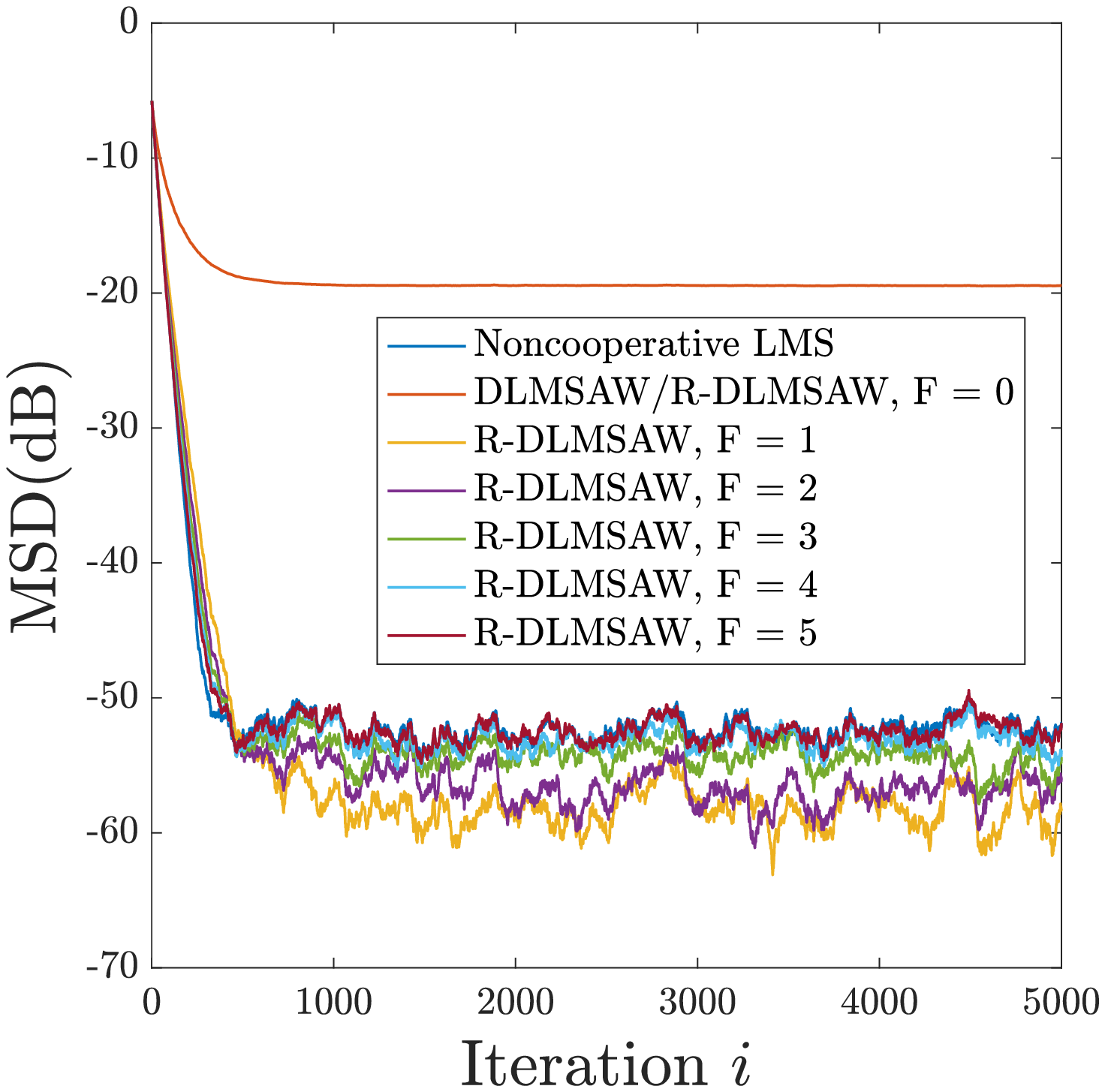}
\vspace{0.3cm}
\caption{Non-stationary targets}\label{fig:non-stationary F-local}
    \end{subfigure}
    \caption{\textcolor{green}{A comparison of MSD performance of non-cooperative LMS, DLMSAW, and R-DLMSAW under strong attack.}}\label{fig:MSD for three algorithms}
\end{figure*}

\textcolor{revision}{\section{Weak Attacks\label{sec:weak attack}}
Though it is common to assume a strong attacker with complete knowledge when examining the resilience of a distributed system, it is interesting to examine what an attacker can do in practise if all the information is not available. 
In this section, we analyze how the attack can still be deployed on a normal agent $k$ without the assumption of a strong knowledge by the attacker (streaming data and parameters used by $k$). We assume that an attacker has access only to the intermediate estimates shared by agents with others in their neighborhood. For instance, if $l\in\mathcal{N}_k$ then agent $k$ receives $\bm{\psi}_{l,i}$ from $l$ and attacker also has an access to it. We show that the other knowledge needed by the attacker can actually be approximated in an alternative way, and the success of the attack relies on how accurate this information can be approximated. We refer to such an attack in which attacker can only gather intermediate estimates and not the other data (including streaming data and agent parameters) as the \emph{weak attack}.
}

\textcolor{revision}{
The strong attack in \eqref{eq: attacker model} relies essentially on the knowledge of $\bm{w}_{k,i-1}$, that is the estimated state of agent $k$ in the last iteration.
If the attacker has complete knowledge, it can compute $\bm{w}_{k,i-1}$ exactly as \textit{Lemma 1} indicates.
However, without such knowledge, $\bm{w}_{k,i-1}$ can only be approximately computed. We note that approximating $\bm{w}_{k,i-1}$ is equivalent to approximating the weight matrix $A_{k}(i) = [a_{lk}(i)], \forall l \in \mathcal{N}_k$. This is true because $\bm{w}_{k,i} =  \sum_{l \in N_k} a_{lk}(i) \bm{\psi}_{l,i}$, and $\bm{\psi}_{l,i}$ is received by the attacker $a$ from $l$. 
}

\textcolor{revision}{
Next, we discuss how to compute the approximated weight matrix $\hat{A}_k(i-1)$ using only the information $\bm{\psi}_{l,i}, \forall l \in \mathcal{N}_k$.
Note that the adaptation step \eqref{eq:adapt} of diffusion can be written as,
\begin{equation*}
    \bm{\psi}_{k,i} = \bm{w}_{k,i-1} + \nabla_{k,i}
    = A_k(i-1) \Psi_{k,i-1} + \nabla_{k,i}.
\end{equation*}
where $\nabla_{k,i} = \mu_k \bm{u}_{k,i}^*(\bm{d}_{k}(i) -\bm{u}_{k,i}\bm{w}_{k,i-1})$, $\Psi_{k,i-1}$ is an $|\mathcal{N}_k| \times M$ matrix $\Psi_{k,i-1} = [\bm{\psi}_{l,i-1}], \forall l \in \mathcal{N}_k$.
Thus,
\begin{equation*}
     \nabla_{k,i} = \bm{\psi}_{k,i}  - A_k(i-1) \Psi_{k,i-1},
\end{equation*}
and therefore, 
\begin{equation*}
    \lim_{i \rightarrow \infty}\mathbb{E}\{\|\nabla_{k,i}\|^2\} = \lim_{i \rightarrow \infty} \mathbb{E}\{\|\bm{\psi}_{k,i} - A_k(i-1) \Psi_{k,i-1}\|^2\}.
\end{equation*}
Since $\lim_{i \rightarrow \infty}\mathbb{E}\{\|\nabla_{k,i}\|^2\} = 0$,
the value of $A_k(i)$ can be approximated by assigning a cost function
\begin{equation*}
    \ell(A_k(i)) \triangleq \mathbb{E} \{ \|  \bm{\psi}_{k,i+1} - A_k(i) \Psi_{k,i}\|^2\},
\end{equation*}
where $A_k(i)$ is the global minimizer of $\ell(A_k(i))$ as $i \rightarrow \infty$.
Next, we compute the successive estimators of the weight matrix based on stochastic gradient descent method as follows:
\begin{equation}\label{eq: weight gradient descent}
    \begin{aligned}
        \hat{A}_{k}(i) &= \hat{A}_{k}(i-1) - \mu'_A \nabla_A \ell(\hat{A}_k(i-1)) \\
            &= \hat{A}_{k}(i-1) + \mu_A \Psi_{k,i-1} (\bm{\psi}_{k,i} - \hat{A}_k(i-1) \Psi_{k,i-1}),
    \end{aligned}
\end{equation}
where $\mu_A = \frac{1}{2} \mu'_A$.
}

\textcolor{revision}{
Also recall weight matrix $A_{k}(i)$ has to satisfy the condition \eqref{eq: weight constraints}. Thus, to make the adaptive approximation of weight matrix hold condition \eqref{eq: weight constraints}, we introduce two more steps following \eqref{eq: weight gradient descent}, that is the clip step and the normalization step. In the clip step, the negative weights are clipped and are set to zero; and the in the normalization step, weights are divided by their sum. 
The operation for approximating weight matrix of a normal agent $k$ is summarized in \textit{Algorithm 2}.}

\begin{algorithm}
\small
\DontPrintSemicolon
\textcolor{revision}{
\KwIn{ $l \in N_k$, randomized $a_{lk}(0)$ satisfying \eqref{eq: weight constraints}, $\mu_A$, $\bm{\psi}_{k,i}$, $\Psi_{k,i-1}$ }
             \For{$i > 0$}
             {
             $A_{k}(i) = A_{k}(i-1) + \mu_A \Psi_{k,i-1} (\bm{\psi}_{k,i} - A_k(i-1) \Psi_{k,i-1})$\;
             \For{$l \in \mathcal{N}_k$} {
               $a_{lk}(i) = \max(a_{lk}(i), 0)$
             }
            $A_{k}(i) = \frac{A_{k}(i)}{\sum a_{lk}(i)}$ \;
            \textbf{return} $ A_k(i) $ 
}
\caption{Approximate weight matrix for agent $k$}
}
\label{Algorithm 1}
\end{algorithm}

\textcolor{revision}{
We then approximate normal agent $k$'s estimated state by
\begin{equation*}
    \hat{\bm{w}}_{k,i} = \hat{A}_k(i) \Psi_{k,i},
\end{equation*}
and use $\hat{\bm{w}}_{k,i}$ instead of $\bm{w}_{k,i}$. The attack model in \eqref{eq: attacker model} then becomes
\begin{equation}\label{eq: attack using approximated weight}
\bm{\psi}_{a,i} = \hat{\bm{w}}_{k,i-1} + r_{k}^a (x_i - \hat{\bm{w}}_{k,i-1}).
\end{equation}
Note that the sufficient condition listed in \textit{Proposition 1} guarantees the convergence of the attack objective. However, without an exact knowledge of $\bm{w}_{k,i-1}$ it is not guaranteed the sufficient condition can be satisfied.
In other words, the success of the attack relies highly on how accurate the state $\hat{\bm{w}}_{k,i}$ can be approximated.
In the following, we provide evaluation results for such an attack. 
}

\textcolor{revision}{
\subsection{Evaluation}
We adopt the same evaluation set-up as we used in section \Rmnum{7}. Initial network topology is the same as in \ref{fig: initial stationary network topology}.
Parameters we select are: $\sigma_{u,k}^2 \in [0.75, 0.85]$, $\sigma_{k}^2 \in [0.75, 0.85]$ for each agent $k$ and $\mu_A = 0.002$, while all the other settings are the same as in section \Rmnum{7}. 
}

\textcolor{revision}{
At the end of DLMSAW under weak attack, we reach the network topology as shown in \figref{fig: black box network topology DLMSAW}.
From the plots, we find some of the agents maintain connection with the compromised nodes, while others do not,  which is not the case with a strong attack, where all the neighboring agents of a compromised node end up cooperating only with the compromised node.
The main reason for this is that the weak attack may not have an accurate approximation of normal agents' state. Without an accurate approximation, compromised nodes may not be able to collect large weights from their neighbors and may not keep influencing the state\textcolor{revision2}{s} of their neighbors.}

\textcolor{revision}{
\figref{fig: estimation precision} illustrates the estimation precision ($\|\hat{\bm{w}}_{k,i} - \bm{w}_{k,i}\|$) by the attacker.
It shows that the attacker has different levels of accuracy to estimate the state\textcolor{revision2}{s} of its neighboring agents.
For some agents, the attacker has accurate approximation along the simulation iterations. As a result, the attacker is more likely to make its attack successful on those agents. However, for other agents, the attacker does not have very good approximation accuracy and therefore, it is hard for the attacker to successfully attack such agents.
\figref{fig: estimation dynamics black box attack DLMSAW} shows the state estimation dynamics of normal agents (wherein attacked nodes refer to the neighboring nodes of the compromised nodes).
We find the attacker can only drive a few of its neighbors to its desired state, whereas most of the normal neighbors converge to their true goal state, which is consistent with the results of \figref{fig: estimation precision}.
The \textcolor{revision2}{steady-state} MSD performance for the weak attack is shown in the yellow line in \figref{fig: MSD comprison black box attack}. We find that such an attack still worsens the network \textcolor{revision2}{steady-state} MSD as compared to the non-cooperative LMS (the blue line) and DLMSAW without attack (the red line).
}

\textcolor{revision}{
Next, we evaluate the proposed resilient diffusion algorithm R-DLMSAW against the weak attack.
The network topology at the end of simulation is shown in \figref{fig: black box network topology R-DLMSAW}.
Most normal agents have cut the link with the compromised nodes. Yet some links are maintained because these compromised nodes behave in a benign way as to send message with a smaller cost than a normal neighbor of the targeted node. In other words, these compromised nodes exchange a state message similar to normal nodes in order to maintain communication with them. Therefore, such links need not to be cut down to achieve the network resilience.
\figref{fig: estimation dynamics black box attack R-DLMSAW} shows the estimation dynamics of normal nodes by R-DLMSAW. We find none of the attacked nodes are driven to the attacker's selected state. All the nodes successfully converge to their true goal states.
The purple line in \figref{fig: MSD comprison black box attack} shows the \textcolor{revision2}{steady-state} MSD performance of R-DLMSAW with $F=1$. We observe that this line lies between the noncooperative LMS and DLMSAW (without attack), and has a much smaller \textcolor{revision2}{steady-state} MSD than DLMSAW under such attack.
This illustrates the effectiveness of the proposed resilient diffusion algorithm by showing that the algorithm is resilient to not only strong but also to weak attacks, as well as other data falsification attacks.
}

\begin{figure*}
 \centering
\begin{minipage}[c]{0.49\linewidth}
    \centering
    \vspace{0cm} 
\setlength{\abovecaptionskip}{0.1cm}  
    \begin{subfigure}[t]{0.45\textwidth}
       \centering
\includegraphics[width=0.9\textwidth, trim=2cm 2cm 2cm 2cm]{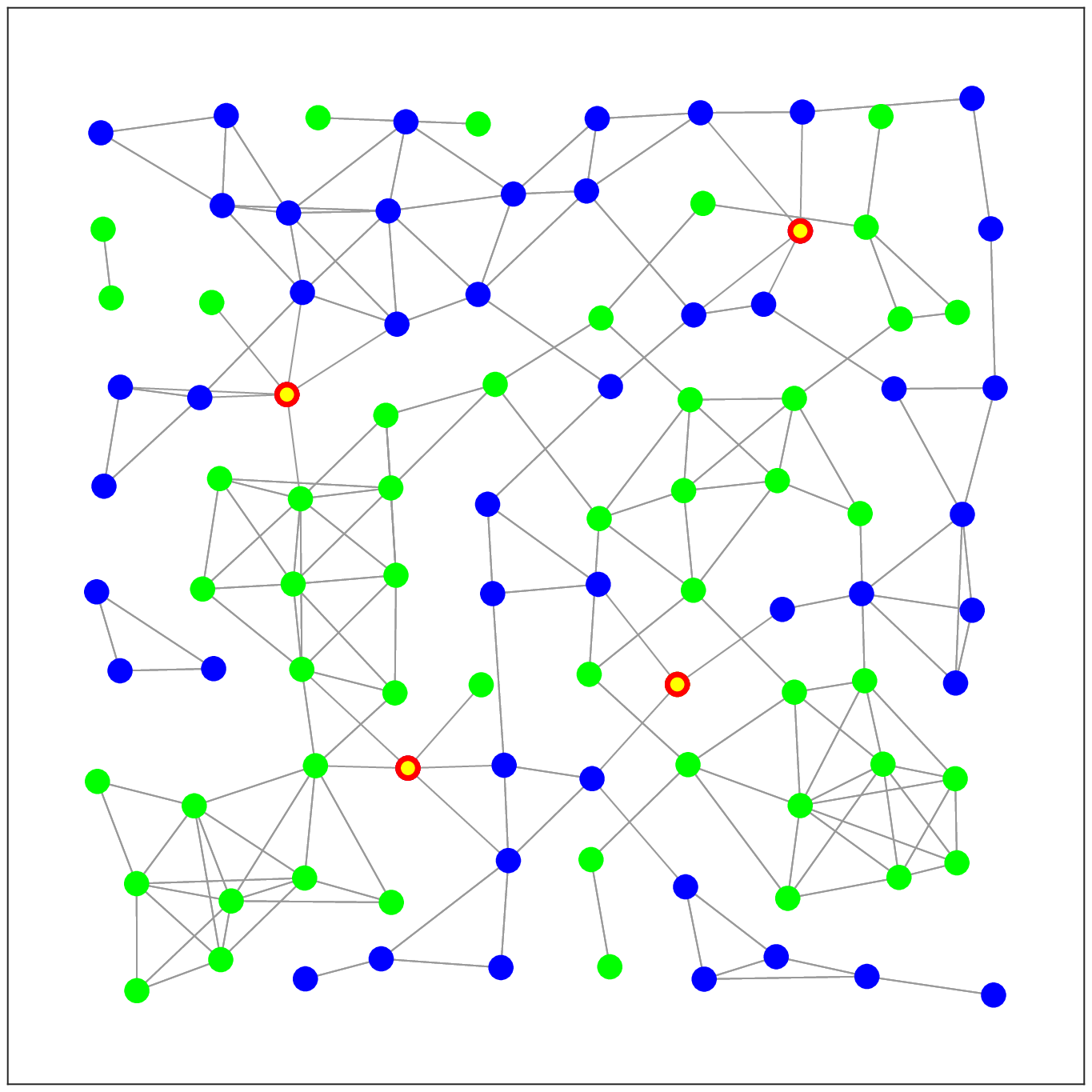}
\vspace{0.1cm}
\caption{DLMSAW}\label{fig: black box network topology DLMSAW}
    \end{subfigure}
    ~  
    \begin{subfigure}[t]{0.45\textwidth}
            \centering
        \includegraphics[width=0.9\textwidth, trim=2cm 2cm 2cm 2cm]{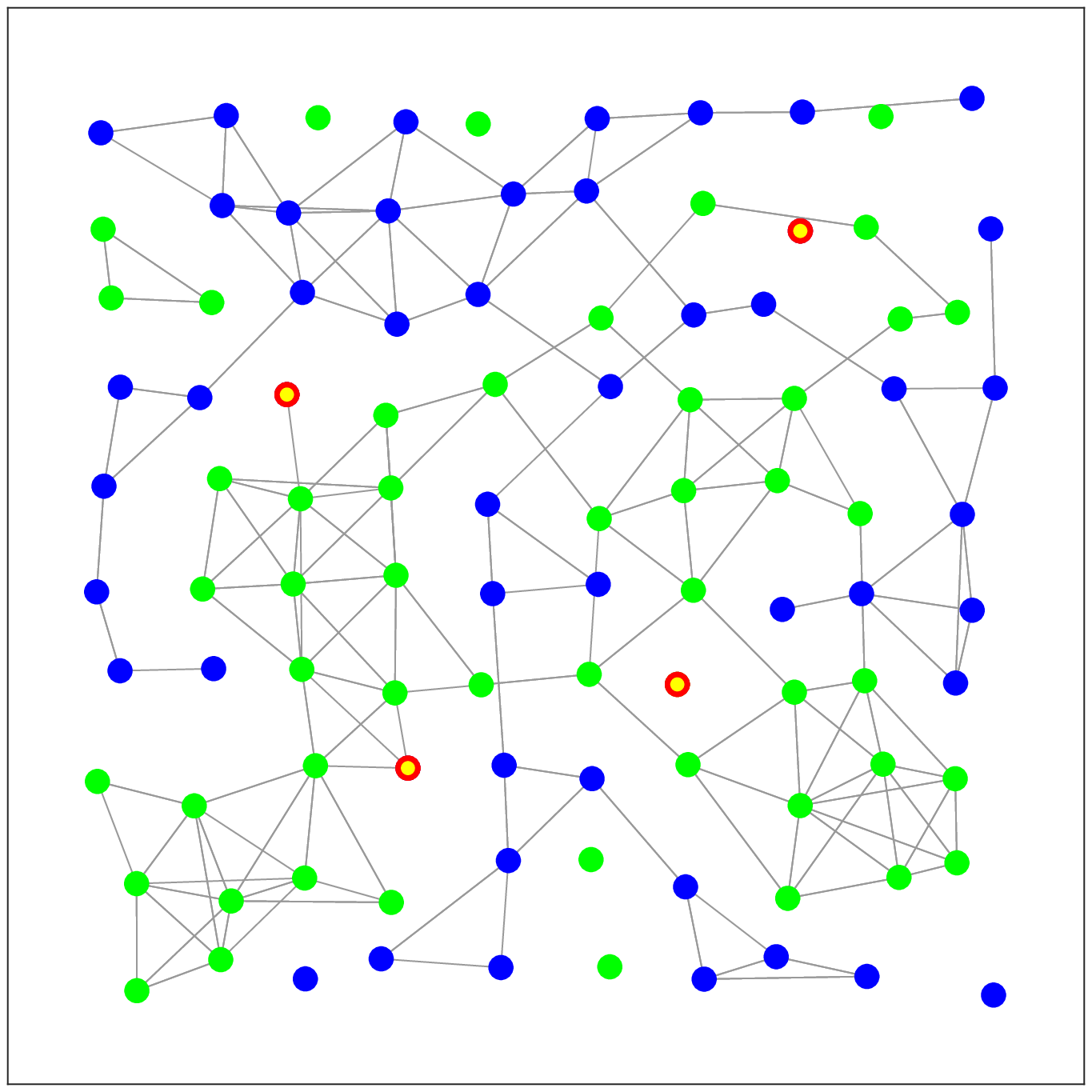}
        \vspace{0.1cm}
\caption{R-DLMSAW}\label{fig: black box network topology R-DLMSAW}
    \end{subfigure}
    \caption{Network topologies at the end of simulation \\under weak attack.}\label{fig: network topologies black box attack}
     \end{minipage}%
\begin{minipage}[c]{0.25\linewidth}
\centering
        \includegraphics[width=0.87\textwidth, trim=1cm 1cm 1cm 1cm]{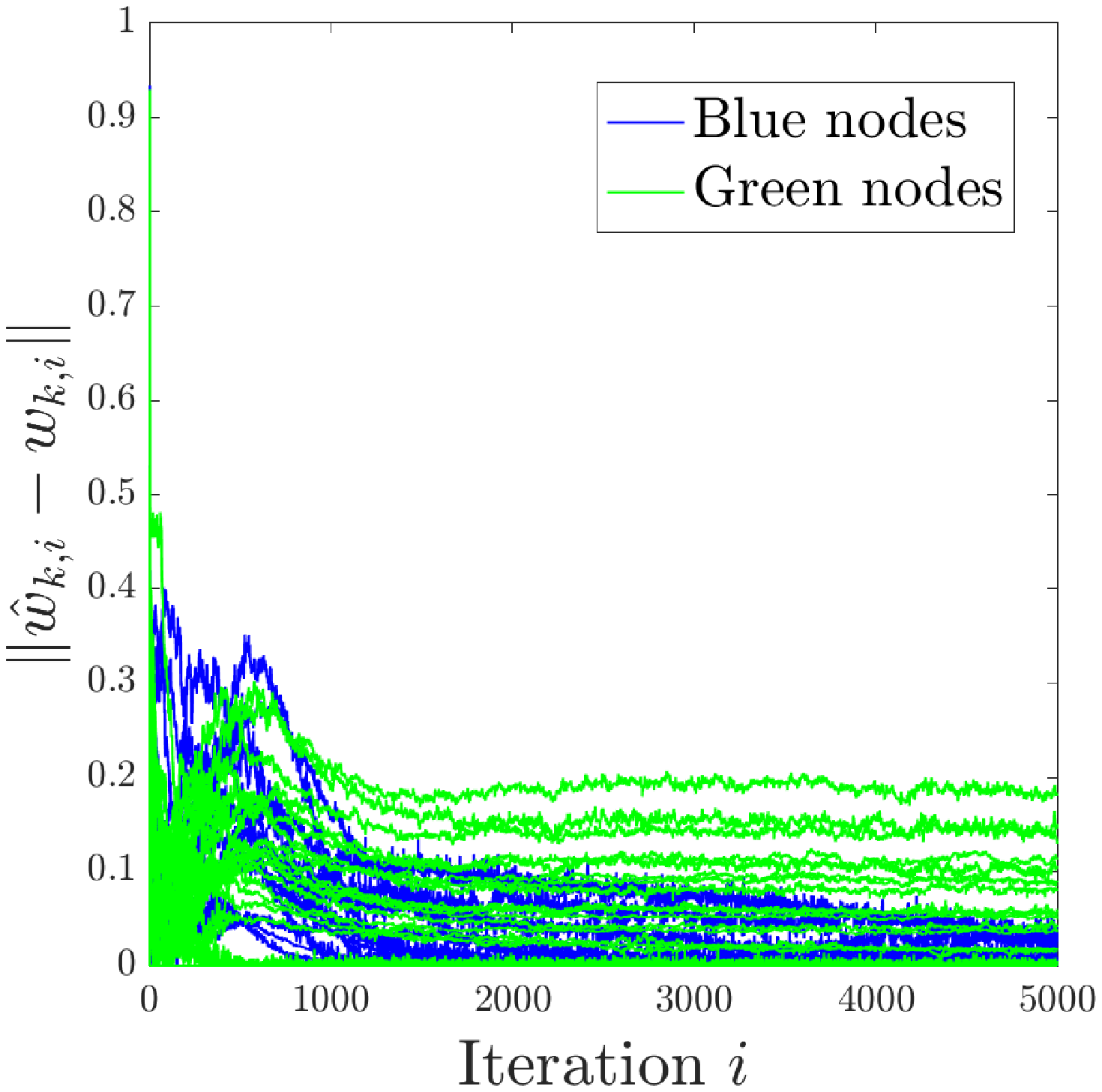}
       \parbox{5cm}{\vspace{0.5cm}\caption{Sate estimation \\precision.} \label{fig: estimation precision}}
    \end{minipage}
\begin{minipage}[c]{0.25\linewidth}
\centering
\includegraphics[width=0.87\textwidth, trim=1cm 1cm 1cm 1cm]{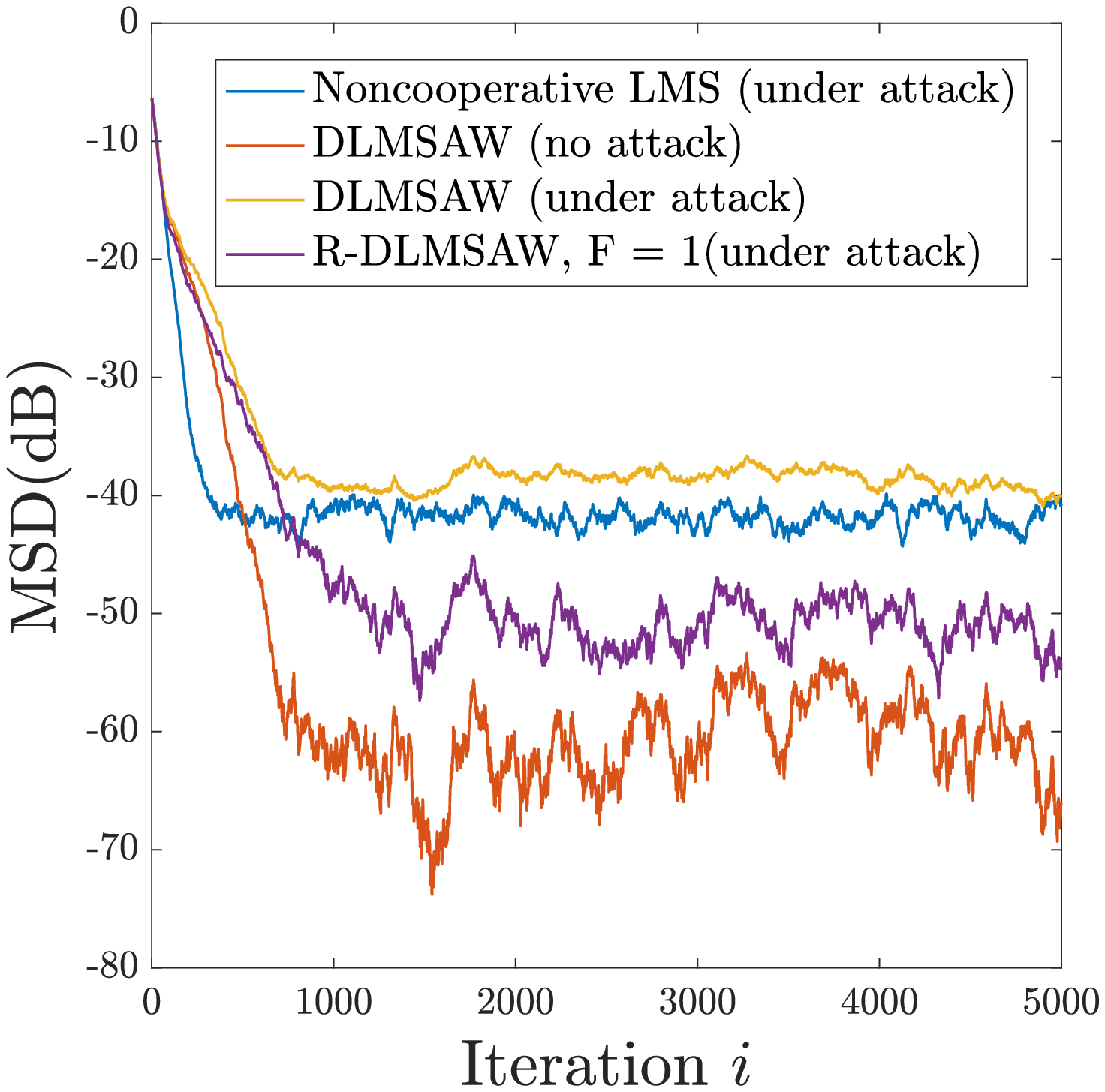}
\parbox{5cm}{\vspace{0.5cm}\caption{\textcolor{revision2}{Steady-state} MSD\\ comparison under weak attack.}\label{fig: MSD comprison black box attack}}
      \end{minipage}%
\end{figure*}

\begin{figure*}[!htbp]
\begin{minipage}[c]{0.5\linewidth}
    \centering
    \vspace{0cm} 
\setlength{\abovecaptionskip}{0.1cm}  
    \begin{subfigure}[t]{0.45\textwidth}
        \centering
        \includegraphics[width=0.9\textwidth, trim=1.5cm 1.5cm 1.5cm 1.5cm]{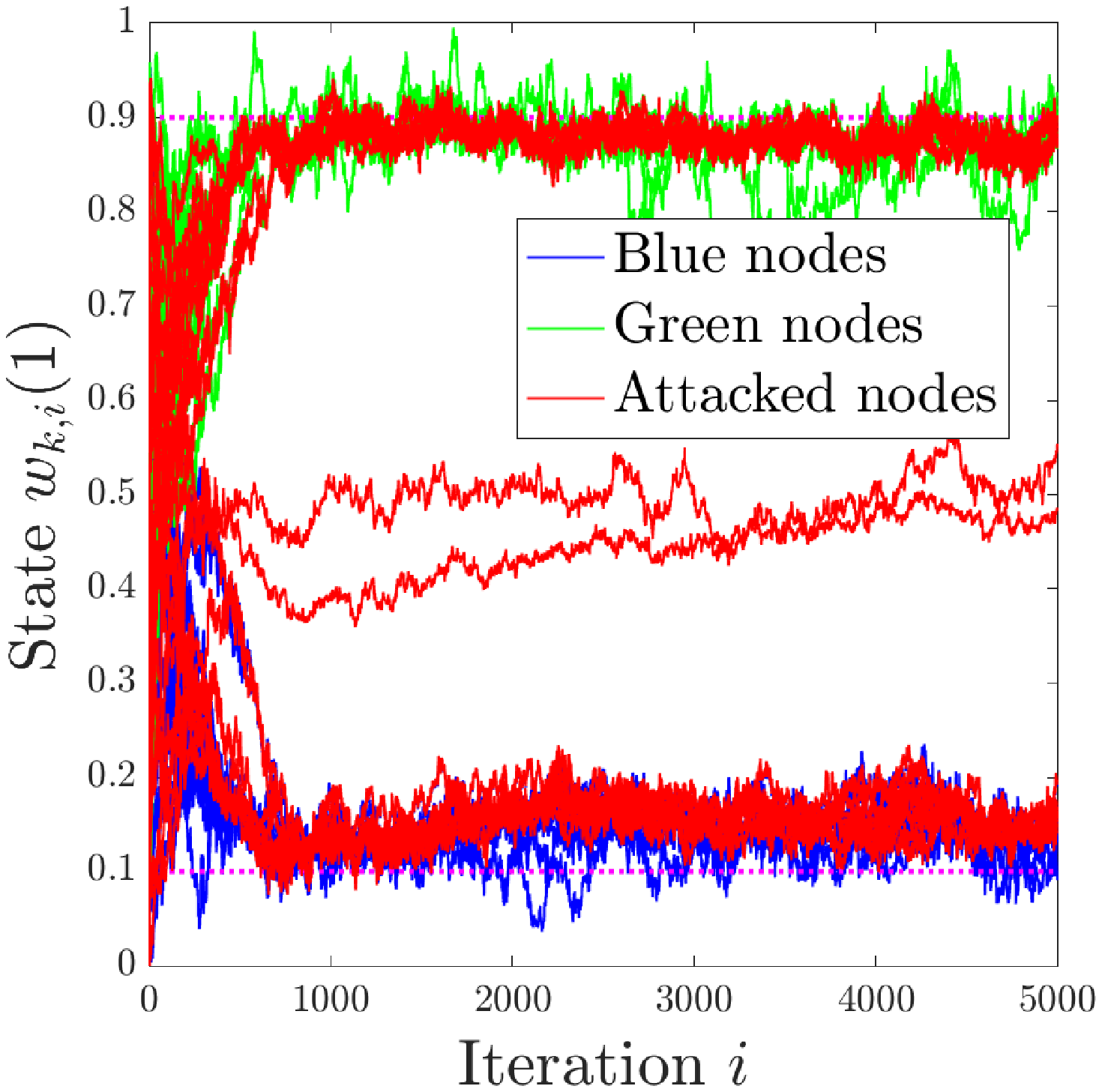}
        \vspace{0.3cm}
        \caption{$\bm{w}_{k,i}(1)$}
        \label{fig: dynamic w1 black box attack DLMSAW}
    \end{subfigure}
    ~  
    \begin{subfigure}[t]{0.45\textwidth}
           \centering
\includegraphics[width=0.9\textwidth, trim=1.5cm 1.5cm 1.5cm 1.5cm]{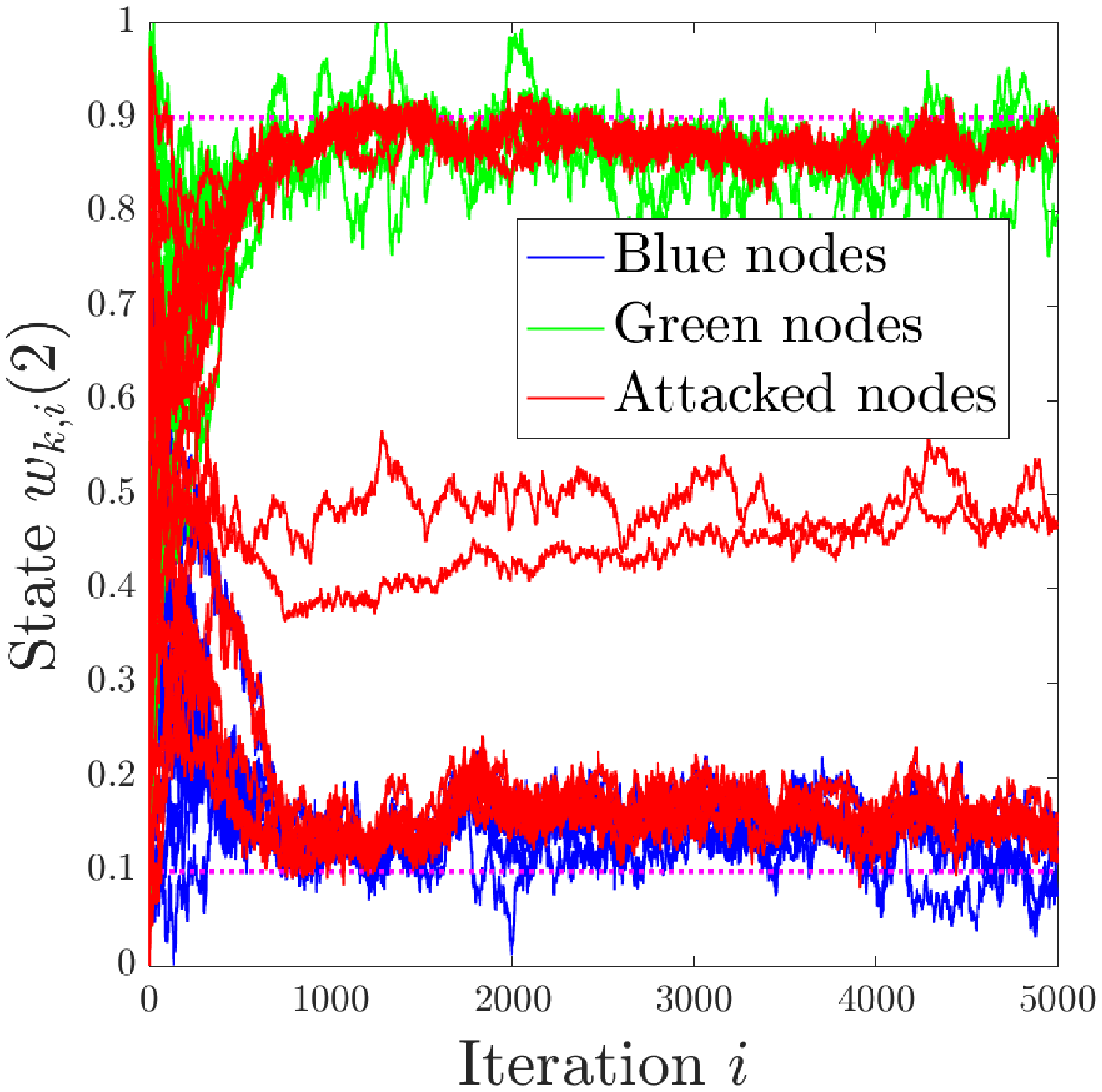}
\vspace{0.3cm}
\caption{$\bm{w}_{k,i}(2)$}\label{fig:dynamic w2 black box attack DLMSAW}
    \end{subfigure}
    \caption{Estimation dynamics for stationary target localization \\by DLMSAW under weak attack.}\label{fig: estimation dynamics black box attack DLMSAW}
      \end{minipage}%
      \begin{minipage}[c]{0.5\linewidth}
    \centering
    \vspace{0cm} 
\setlength{\abovecaptionskip}{0.1cm}  
    \begin{subfigure}[t]{0.45\textwidth}
       \centering
\includegraphics[width=0.9\textwidth, trim=1.5cm 1.5cm 1.5cm 1.5cm]{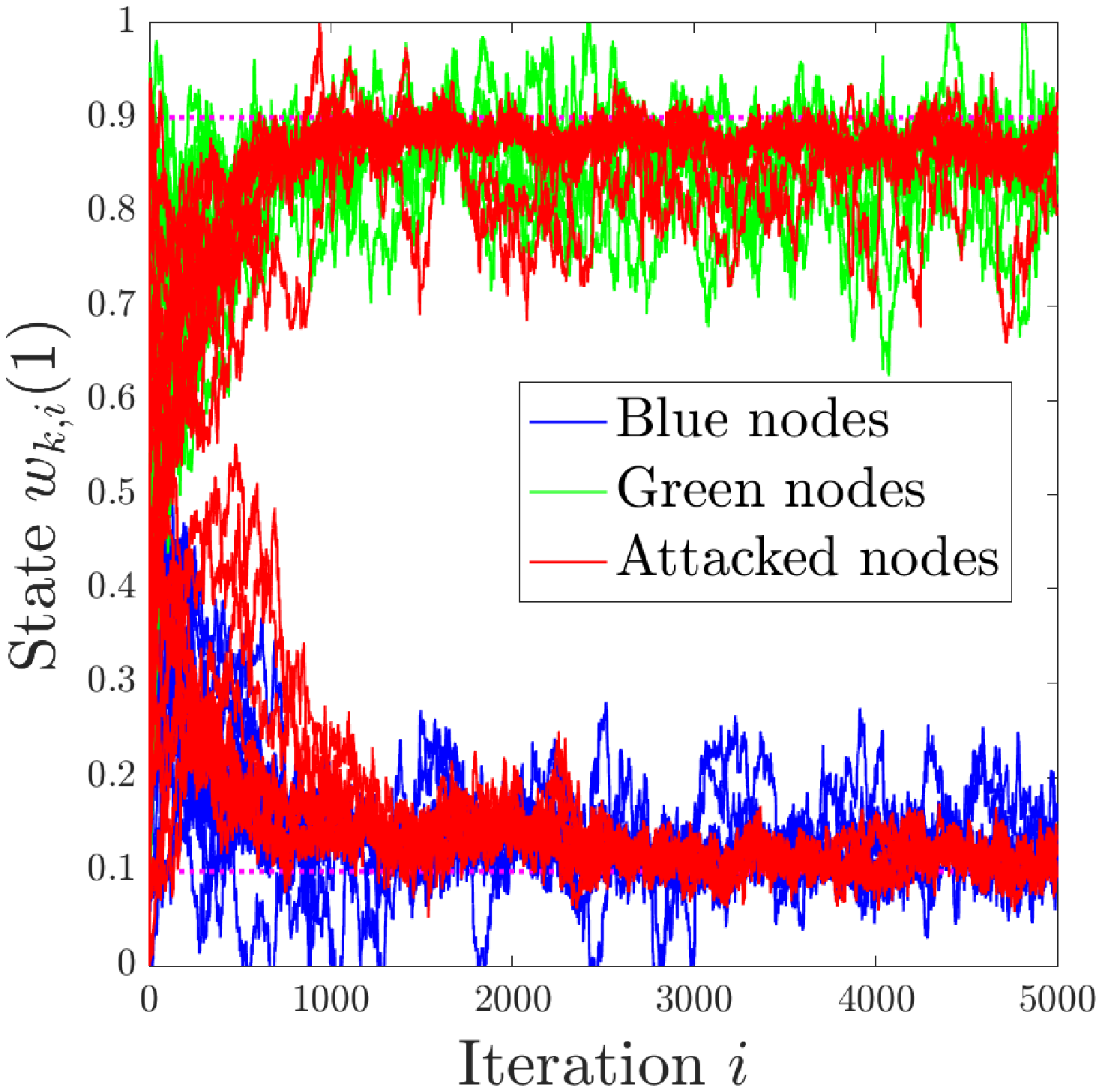}
\vspace{0.3cm}
\caption{$\bm{w}_{k,i}(1) (F=1)$}\label{fig: state w1 estimation dynamics R-DLMSAW}
    \end{subfigure}
    ~  
    \begin{subfigure}[t]{0.45\textwidth}
            \centering
        \includegraphics[width=0.9\textwidth, trim=1.5cm 1.5cm 1.5cm 1.5cm]{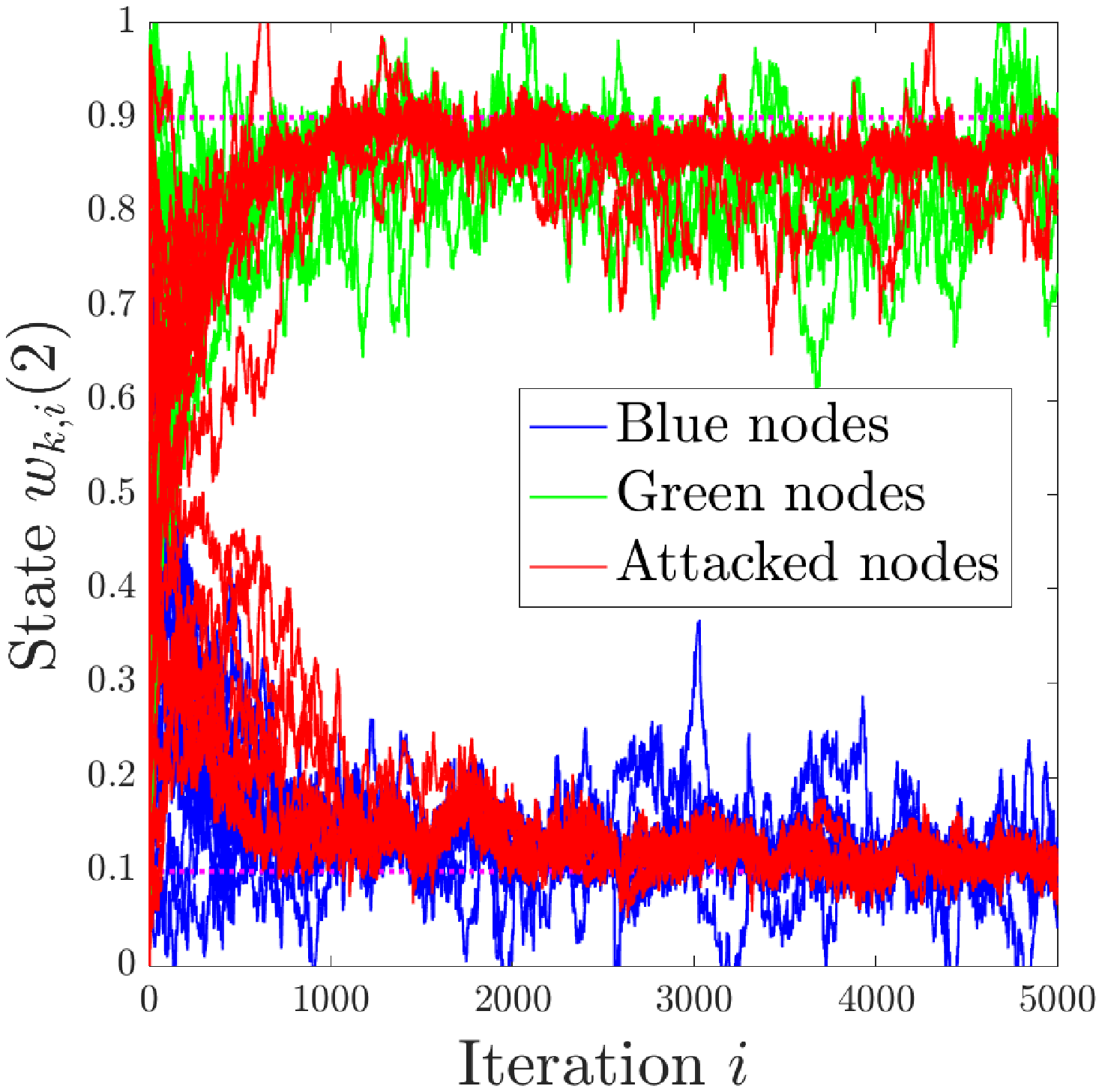}
        \vspace{0.3cm}
\caption{$\bm{w}_{k,i}(2) (F=1)$}\label{fig: state w2 estimation dynamics R-DLMSAW}
    \end{subfigure}
     \caption{Estimation dynamics for stationary target localization \\by R-DLMSAW under weak attack.}\label{fig: estimation dynamics black box attack R-DLMSAW}
     \end{minipage}%
\end{figure*}

\section{Related Work}\label{sec:related_work}
Many distributed algorithms are vulnerable to cyber attacks. The existence of an adversarial agent may prevent the algorithm from performing the desired task. Distributed consensus and diffusion based strategies are often employed to resolve distributed estimation and optimization problems, for instance see  \cite{xiao2007distributed,nedic2009distributed,khan2010higher,matei2012consensus,chen2012diffusion,journals/spm/SayedTCZT13}.
Resilience of consensus-based distributed algorithms in the presence of malicious nodes has received considerable attention in recent years. 
In particular, the approaches presented in \cite{DBLP:journals/tac/PasqualettiBB12, zhao2018resilient,DBLP:journals/jsac/LeBlancZKS13, DBLP:conf/hicons/LeBlancH14} consider the consensus problem for scalar parameters in the presence of attackers, and resilience is achieved by leveraging high connectivity. Resilient consensus in the case of special network structures, such as triangular networks for distributed robotic applications \cite{TriangularNetworks}, has also been studied. 
To achieve resilience in sparse networks, \cite{abbas2018improving} presents the idea of employing few trusted nodes, which are hardened nodes that cannot be attacked.
\textcolor{revision}{Resilience for concensus+innovation problems have also been studied by \cite{2017arXiv170906156C, DBLP:conf/cdc/ChenKM18, DBLP:conf/amcc/ChenKM18} in a fully-distributed way via agents' local observations and high network connectivity.}
\textcolor{revision}{Resilience can also be achieved via fault detection and isolation (FDI).
For instance, \cite{5779706} studied the FDI problem for linear consensus networks via high connectivity networks and global knowledge of the network structure by each agent.
\cite{SHAMES20112757} considered a similar FDI problem for second-order systems. 
Authors in \cite{yuanchen/AdversaryDetection} presented distributed detection method for consensus+innovation algorithms via local observations of agents only. } 
For attacks, typical approaches usually consider Byzantine adversaries \textcolor{revision}{with fixed target different than the true value \cite{DBLP:journals/jsac/LeBlancZKS13} or with updates without time-dependent intention \cite{yuanchen/AdversaryDetection, DBLP:conf/cdc/ChenKM18}} and assume that the goal of the attacker is to disrupt the convergence (stability) of the distributed algorithm. 
In contrast, this work focuses on attacks that do not disrupt convergence but drive normal agents to converge to states selected by the attacker.
\textcolor{revision}{Moreover, in our attack model, the attacker continuously changes its values over time as compared to the fixed value attacks considered previously.}

Resilience of diffusion-based distributed algorithms has been studied in \textcolor{revision}{\cite{journals/spm/SayedTCZT13, 6232902,5948418}. Similar to the resilient consensus problems, fixed-value attacks are usually considered, and the main approach has been to use adaptive combination rules to counteract malicious values}. This is an effective measure and has been applied to multi-task networks and distributed clustering problems \cite{6232902}. Several variants focusing on adaptive weights applied to multi-task networks can be found in
\textcolor{revision2}{\cite{7362874,  7060710,  6845334, 7065284}}.
\textcolor{revision}{Note that the essence of adaptive weights is similar to distributed detection. In contrast, it turns the detection method from a binary classification problem to a regression problem. Detection approach has also been applied in \cite{7362874} for clustering over diffusion networks.}
Although adaptive weights provide some degree of resilience to \textcolor{revision}{byzantine adversaries with fixed values}, we have shown in this work that adaptive weights may introduce vulnerabilities that allow \textcolor{revision}{time-dependent} deception attacks. 

Finally, there has been considerable work on applications of diffusion algorithms that include spectrum sensing in cognitive networks \cite{7086338}, target localization \cite{targetLocalization}, distributed clustering \cite{6232902},  biologically inspired designs \cite{mobileAdaptiveNetworks}. Although our approach can be used for resilience of various applications, we have focused on multi-target localization
\textcolor{revision2}{\cite{6845334}}.

\section{Conclusions}\label{sec:con}
\textcolor{green}{In this paper, we studied distributed diffusion for multi-task networks and investigated vulnerabilities  introduced by adaptive weights. Cooperative diffusion is a powerful strategy to perform optimization and estimation tasks, however, its performance and accuracy can deteriorate significantly in the presence of adversarial nodes. In fact, cooperative diffusion performs significantly better (in terms of \textcolor{revision2}{steady-state} MSD) as compared to non-cooperative diffusion if there are no adversarial nodes. However, with adversaries, cooperative diffusion could be even worse than the non-cooperative diffusion.  To illustrate this, we proposed attack models that can drive normal agents---implementing distributed diffusion (DLMSAW)---to any state selected by the attacker, for both stationary  and non-stationary estimation. We then proposed a resilient distributed diffusion algorithm (R-DLMSAW) to counteract adversaries' effect. \textcolor{revision}{The proposed algorithm always performs at least as good as the non-cooperative diffusion, but if an input parameter $F$ in the algorithm is selected appropriately, it performs significantly better than the non-cooperative diffusion in the presence of adversaries.} 
We also analyzed how the performance of R-DLMSAW changes with the selection of parameter $F$ by the nodes. We evaluated our approach by applying it to stationary and non-stationary multi-target localization. In future, \textcolor{revision}{we are interested in generalizing our model to other types of distributed diffusion algorithms and  with the missing data. \textcolor{revision2}{It is also worth investigating the relationship between the underlying network connectivity and the steady-state performance of such algorithms.} 
}}



\bibliographystyle{IEEEtran}
\bibliography{sigproc_new}


\


\begin{appendices}
\centerline{{\sc Appendix}}
\centerline{{\sc Proof of Lemma 1}}
The message received by $a$ from $k \in \mathcal{N}_a$ is $\bm{\psi}_{k,i}$. Agent $a$ can compute $\bm{w}_{k,i-1}$ from $\bm{\psi}_{k,i}$ using
\begin{equation*}
\bm{w}_{k,i-1} = \bm{\psi}_{k,i} - \mu_k \bm{u}_{k,i}^* (\bm{d}_k(i) - \bm{u}_{k,i} \bm{w}_{k,i-1}),
\end{equation*}
from which it can compute $\bm{w}_{k,i-1}$ as:
\begin{equation*}
\bm{w}_{k,i-1} = \frac{\bm{\psi}_{k,i} - \mu_k \bm{u}_{k,i}^* \bm{d}_k(i)}{1 - \mu_k \bm{u}_{k,i}^* \bm{u}_{k,i}}.
\end{equation*}
Given the knowledge of $\mu_k$, $\bm{d}_k(i)$, and $\bm{u}_{k,i}$, 
the value $\bm{w}_{k,i-1}$ can be computed exactly.
\\[14pt]
\centerline{{\sc Proof of Lemma 2}}
We use $\delta_{a,k,i}$ to denote $\|\bm{\psi}_{a,i} - \bm{w}_{k,i-1}\|$, and $\delta_{l,k,i}$ to denote $\|\bm{\psi}_{l,i} - \bm{w}_{k,i-1}\|$, for $l \in \mathcal{N}_k, l \neq a$. 
Since
$$\gamma_{lk}^2(i) = (1-\nu_k)\gamma_{lk}^2(i-1)+\nu_k \| \bm{\psi}_{l,i}-\bm{w}_{k,i-1}\| ^2, l \in \mathcal{N}_k,$$
Suppose the attack starts at $i_a$, then at iteration $(i_a + n)$,
\begin{equation*}
\begin{split}
& \gamma^2_{ak}(i_a + n) \\
=& (1 - \nu_k) \gamma^2_{ak}(i_a + n - 1) + \nu_k \delta_{a,k,i_a+n}^2 \\
=& (1 - \nu_k)((1 - \nu_k) \gamma^2_{ak}(i_a + n - 2) + \nu_k \delta_{a,k,i_a+n-1}^2) \\
&+ \nu_k \delta_{a,k,i_a+n}^2 \\
=& (1 - \nu_k)^{n+1} \gamma^2_{ak}(i_a - 1) \\
&+ \nu_k [(1 - \nu_k)^{n} \delta_{a,k,i_a}^2 + (1 - \nu_k)^{n-1} \delta_{a,k,i_a+1}^2 \\
&+ \ldots + (1 - \nu_k) \delta_{a,k,i_a+n-1}^2 +\delta_{a,k,i_a+n}^2],
\end{split}
\end{equation*}
\begin{equation*}
\begin{split}
\gamma^2_{lk}(i_a + n) =& (1 - \nu_k)^{n+1} \gamma^2_{lk}(i_a - 1) \\
&+ \nu_k [(1 - \nu_k)^{n} \delta_{l,k,i_a}^2 + (1 - \nu_k)^{n-1} \delta_{l,k,i_a+1}^2 \\
&+ \ldots + (1 - \nu_k) \delta_{l,k,i_a+n-1}^2 +\delta_{l,k,i_a+n}^2].
\end{split}
\end{equation*}
For large enough $n$, $(1-\nu_k)^{n+1} \rightarrow 0$. Since we assume $\|\bm{\psi}_{a,i} - \bm{w}_{k,i-1}\| \ll \| \bm{\psi}_{l,i} - \bm{w}_{k,i-1}\|$, i.e., $\delta_{a,k,i} \ll \delta_{l,k,i}$, for $i \geq i_a + n$, 
$\gamma^2_{ak}(i) \ll \gamma^2_{lk}(i)$ holds.
Thus,
\begin{equation}\label{eq:propto weight}
    \frac{a_{lk}(i)}{a_{ak}(i)}\propto \frac{\gamma^{-2}_{lk}(i)}{\gamma^{-2}_{ak}(i)} \rightarrow 0.
\end{equation}
Given the property of weights, \eqref{eq: weight condition} is true.
\\[14pt]
\centerline{{\sc Proof of Lemma 3}}
We use $\mathcal{A}$ to denote the set of compromised nodes targeting at the same normal node $k$.
The proposed attack strategy results in the following condition holding as proved in \textit{Lemma 2}:
\begin{equation*}
\begin{split}
    &\frac{a_{lk}(i)}{a_{ak}(i)} \rightarrow 0, \qquad  l \in \mathcal{N}_{k} \backslash \mathcal{A}, a \in \mathcal{A},  \\
    &(i \geq i_a + n, \text{ subject to } (1-\nu_k)^{n+1} = 0).
\end{split}
\end{equation*}
Given that $\sum_{l \in \mathcal{N}_k}a_{lk} = 1$, we have
\begin{equation*}
    a_{lk}(i) = 0, a_{ak}(i) = \frac{1}{|\mathcal{A}|}, \qquad  l \in \mathcal{N}_{k} \backslash \mathcal{A}, a \in \mathcal{A}, 
\end{equation*}
\textcolor{green}{where $|\mathcal{A}|$ denotes the number of nodes in $\mathcal{A}$}. Since every compromised node $a \in \mathcal{A}$ sends the same message and is assigned the same weight that sums up to 1, it is equivalent to only one compromised node attacking the target node and being assigned a weight of 1.
Therefore, there is no need for multiple compromised nodes attacking a single normal node.
\\[14pt]
\centerline{{\sc  Proof of Proposition 1}}
The constraint of $r_k^a$ is consistent with the condition of Lemma \ref{lem:lemma_2}.
Thus, from some point $i$, the state of node $k$ will be attacked as to be:
\begin{equation}\label{eq:attack w}
\begin{split}
\bm{w}_{k,i} &= \bm{w}_{k,i-1} - r_k^a(\bm{w}_{k,i-1} - x_i) \\
& =r_k^a x_{i} + (1 - r_k^a) \bm{w}_{k,i-1}, \\
&(i \geq i_a + n, \text{ subject to } (1-\nu_k)^{n+1} = 0).
\end{split}
\end{equation}

Let $X_{i}$ be $\bm{w}_{k,i}$, $X_{i-1}$ be $\bm{w}_{k,i-1}$, $A_{i}$ be $r_k^a x_{i}$, and $B$ be $(1-r_k^a)$.
Equation \eqref{eq:attack w} turns to:
\begin{equation} \label{convergent proof 1}
X_{i} = A_{i} + B X_{i-1}.
\end{equation}
Assume $\lim_{i \rightarrow \infty} X_{i-1} = X_{i-1}^0$ and $\lim_{i \rightarrow \infty} X_{i} = X_{i}^0$, then for $i \rightarrow \infty$ we get:
\begin{equation} \label{convergent proof 2}
X_{i}^0 = A_{i} + B X_{i-1}^0.
\end{equation}
Subtract \eqref{convergent proof 2} from \eqref{convergent proof 1}, we get
$
X_{i} - X_{i}^0  = B (X_{i-1} - X_{i-1}^0).
$
Let $\varepsilon_i = X_{i} - X_i^0$, for $i = 0, 1, 2, \ldots$, then $\varepsilon_{i} = B \varepsilon_{i-1} = B^2 \varepsilon_{i-2} = \ldots = B^{i} \varepsilon_0$. 
The necessary and sufficient requirement for convergence is
$\lim_{i \to \infty} \varepsilon_{i} = 0$
or, $\lim_{i \to \infty} B^{i} \varepsilon_{0} = 0$,
that is, 
\begin{equation}\label{eq: convergence condition}
    \lim_{i \to \infty} B^{i} = 0.
\end{equation}
Therefore, we get a necessary and sufficient requirement for convergence as $|B| < 1$.
Since $B = 1 - r_{k}^a$, and $r_{k}^a \in (0, 1)$, we get $B \in (0, 1)$. Therefore, $\lim_{i \rightarrow \infty} (X_i - X_i^0) = 0$. The assumption $\lim_{i \rightarrow \infty} X_i = X_i^0$ holds, and therefore, $X_i$ is convergent to $X_i^0$.

To get the value of $X_i^0$, we need to analyze the following two scenarios: stationary state estimation and non-stationary state estimation, separately.
\subsubsection{Stationary state estimation}
In stationary scenarios, the convergence state is independent of time, that is, $X_{i}^0 = X_{i-1}^0 = X^0$. Therefore, equation \eqref{convergent proof 2} turns to:
\begin{equation*}
X^0 = A_i + B X^0.
\end{equation*}
Thus,  $(1-B) X^0 = A_i$, $X^0 = \frac{A_i}{1-B}$. The convergent point is:
\begin{equation*}
\bm{w}_{k,i} = \frac{r_k^a x_{i+1}}{1-(1-r_k^a)} = \frac{r_k^a w_k^a}{1-(1-r_k^a)} = w_k^a = \bm{w}_{k,i}^a, \quad i \rightarrow \infty
\end{equation*}
which realizes the attacker's objective \eqref{eq: objective function}.

\subsubsection{Non-stationary state estimation}
In non-stationary scenarios, we first assume $x_i = w_{k}^a + \theta_{k,i-1}^a$ and later we will show how $\theta_{k,i-1}^a$ turns to  $\theta_{k,i-1}^a + \frac{\Delta \theta_{k,i-1}^a}{r_k^a}$. 

Assume the convergence point $X_i^0$ is a combination of a time-independent value and a time-dependent value, such that $X_i^0 = X^0 + \rho_i$. After taking original values into \eqref{convergent proof 2}, we get
\begin{equation}\label{combination of w/wo}
X^0 + \rho_{i} = r_k^a (w_k^a + \theta_{k,i-1}^a) + (1-r_k^a)(X_0 + \rho_{i-1}).
\end{equation}
\textcolor{green}{Next, we divide \eqref{combination of w/wo} into the time-independent and time-dependent components to get}
\begin{equation*}
X^0 = w_k^a, \quad 
\rho_{i} - \rho_{i-1} = r_k^a (\theta_{k,i-1}^a - \rho_{i-1}).
\end{equation*}
Let $\Delta \rho_{i-1} = \rho_{i} - \rho_{i-1}$, we get:
\begin{equation}\label{eq:rho}
\rho_{i-1} = \theta_{k,i-1}^a - \frac{\Delta \rho_{i-1}}{r_k^a}
\quad \text{ and } \quad
\rho_{i} = \theta_{k,i}^a - \frac{\Delta \rho_{i}}{r_k^a}.
\end{equation}
Thus,
$
\Delta \rho_{i-1} = \rho_{i} - \rho_{i-1} = \theta_{k,i}^a - \theta_{k,i-1}^a - \frac{1}{r_k^a} (\Delta \rho_{i} - \Delta \rho_{i-1})
$.
Let $\Delta \theta_{k,i-1}^a = \theta_{k,i}^a - \theta_{k,i-1}^a$ and $\Delta^2 \rho_{i-1} = \Delta \rho_{i} - \Delta \rho_{i-1}$, then 
$
\Delta \rho_{i-1} = \Delta \theta_{k,i-1}^a - \frac{\Delta^2 \rho_{i-1}}{r_k^a}
\quad \text{ or } \quad 
\Delta \rho_{i} = \Delta \theta_{k,i}^a - \frac{\Delta^2 \rho_{i}}{r_k^a}
$.
If we assume $\frac{\Delta^2 \rho_{i}}{r_k^a} \ll \Delta \theta_{k,i}^a$, then we have $\Delta \rho_{i} = \Delta \theta_{k,i}^a$. 
Therefore, \eqref{eq:rho} can be written as
$
\rho_{i} = \theta_{k,i}^a - \frac{\Delta \theta_{k,i}^a}{r_k^a}.
$
Thus, the dynamic convergence point for $k$ is
\begin{equation*}
\bm{w}_{k,i} = w_k^a + \theta_{k,i}^a - \frac{\Delta \theta_{k,i}^a}{r_k^a}, \qquad i \rightarrow \infty.
\end{equation*}
This means when sending ${\bm{\psi}}_{a,i} = \bm{w}_{k,i-1} + r_k^a (w_k^a + \theta_{k,i-1}^a - \bm{w}_{k,i-1})$ as the communication message, the compromised node $a$ can make $k$ converge to $w_k^a + \theta_{k,i}^a - \frac{\Delta \theta_{k,i}^a}{r_k^a}$. To make agent $k$ converge to a desired state $w_k^a + \Omega_{k,i}^a$, we assume the message sent is
\begin{equation*}
{\bm{\psi}}_{a,i} = \bm{w}_{k,i-1} + r_k^a (w_k^a + m_{i-1} - \bm{w}_{k,i-1}).
\end{equation*}
The corresponding convergence point will be 
$ w_k^a + m_{i} - \frac{\Delta m_{i}}{r_k^a}$. We want the following equation to hold, 
\begin{equation}\label{solve convergence point}
w_k^a + m_{i} - \frac{\Delta m_{i}}{r_k^a} = w_k^a + \Omega_{k,i}^a.
\end{equation}
Assuming $\Delta^2 m_i \rightarrow 0$, the solution of \eqref{solve convergence point} is: $m_i = \Omega_{k,i}^a + \frac{\Delta \Omega_{k,i}^a}{r_k^a}$, meaning to make $k$ converge to a desired state $w_k^a + \Omega_{k,i}^a$, the compromised node $a$ should send communication message:
\begin{equation*}
{\bm{\psi}}_{a,i} = \bm{w}_{k,i-1} + r_k^a (w_k^a + \Omega_{k,i-1}^a + \frac{\Delta \Omega_{k,i-1}^a}{r_k^a} - \bm{w}_{k,i-1}).
\end{equation*}

Thus, to make $k$ converge to $w_k^a + \theta_{k,i}^a$, the compromised node $a$ should send communication message:
\begin{equation*}
{\bm{\psi}}_{a,i} = \bm{w}_{k,i-1} + r_k^a (w_k^a + \theta_{k,i-1}^a + \frac{\Delta \theta_{k,i-1}^a}{r_k^a} - \bm{w}_{k,i-1}).
\end{equation*}
The convergence point is:
\begin{equation*}
\bm{w}_{k,i} = w_k^a + \theta_{k,i}^a = \bm{w}_{k,i}^a, \qquad i \rightarrow \infty,
\end{equation*}
which realizes the attacker's objective \eqref{eq: objective function}.

We can verify the convergence point by putting $x_i = w_k^a + \theta_{k,i-1}^a + \frac{\Delta \theta_{k,i-1}^a}{r_k^a}, \bm{w}_{k,i} = w_k^a + \theta_{k,i}^a, \bm{w}_{k,i-1} = w_k^a + \theta_{k,i-1}^a$ back into equation \eqref{eq:attack w}, we get:
\begin{equation*}
\begin{split}
w_k^a + \theta_{k,i}^a &= r_k^a (w_k^a + \theta_{k,i-1}^a + \frac{\Delta \theta_{k,i-1}^a}{r_k^a}) + (1 - r_k^a)  (w_k^a + \theta_{k,i-1}^a)\\
\theta_{k,i}^a &= r_k^a (\theta_{k,i-1}^a + \frac{\Delta \theta_{k,i-1}^a}{r_k^a}) + (1 - r_k^a)  \theta_{k,i-1}^a\\
\theta_{k,i}^a &= \theta_{k,i-1}^a + \Delta \theta_{k,i-1}^a.
\end{split}
\end{equation*}
The resulting equation holds, illustrating the validity of the convergence state.

\end{appendices}

\end{document}